\title{Fast and Delay-Robust Multimodal Journey Planning}
\newcommand{\affiliationKIT}{Karlsruhe Institute of Technology (KIT), Karlsruhe, Germany}
\author{Dominik Bez}{\affiliationKIT}{dominik.bez@student.kit.edu}{}{}
\author{Jonas Sauer}{\affiliationKIT}{jonas.sauer2@kit.edu}{https://orcid.org/0000-0002-7196-7468}{}
\authorrunning{D. Bez and J. Sauer}
\keywords{Algorithms, Optimization, Route Planning, Public Transportation}
\newcommand{\atime}{\ensuremath{\tau}\xspace}
\newcommand{\departure}[1]{\ensuremath{#1_\text{dep}}\xspace}
\newcommand{\arrival}[1]{\ensuremath{#1_\text{arr}}\xspace}
\newcommand{\departureTime}{\ensuremath{\departure{\atime}}\xspace}
\newcommand{\arrivalTime}{\ensuremath{\arrival{\atime}}\xspace}
\newcommand{\transferTime}{\ensuremath{\atime_{\textsf{tra}}}\xspace}
\newcommand{\minTime}{\ensuremath{\atime_{\textsf{min}}}\xspace}
\newcommand{\graph}{\ensuremath{G}\xspace}
\newcommand{\vertices}{\ensuremath{V}\xspace}
\newcommand{\aVertex}{\ensuremath{v}\xspace}
\newcommand{\bVertex}{\ensuremath{w}\xspace}
\newcommand{\edges}{\ensuremath{E}\xspace}
\newcommand{\edge}{\ensuremath{e}\xspace}
\newcommand{\aPath}{\ensuremath{P}\xspace}
\newcommand{\stops}{\ensuremath{\mathcal{S}}\xspace}
\newcommand{\astop}{\aVertex}
\newcommand{\stopEvents}{\ensuremath{\mathcal{E}}\xspace}
\newcommand{\stopEvent}{\ensuremath{\varepsilon}\xspace}
\newcommand{\precedingEvents}{\ensuremath{\overleftarrow{\stopEvents}}\xspace}
\newcommand{\succeedingEvents}{\ensuremath{\overrightarrow{\stopEvents}}\xspace}
\newcommand{\trips}{\ensuremath{\mathcal{T}}\xspace}
\newcommand{\aTrip}{\ensuremath{T}\xspace}
\newcommand{\aTripA}{\ensuremath{\aTrip_a}\xspace}
\newcommand{\aTripB}{\ensuremath{\aTrip_b}\xspace}
\newcommand{\aTripSegment}[2]{\ensuremath{\aTrip[#1,#2]}\xspace}
\newcommand{\aNamedTripSegment}[3]{\ensuremath{\aTrip_{#1}[#2,#3]}\xspace}
\newcommand{\routes}{\ensuremath{\mathcal{R}}\xspace}
\newcommand{\aRoute}{\ensuremath{R}\xspace}
\newcommand{\sourceIndex}{\ensuremath{\textsf{s}}\xspace}
\newcommand{\originIndex}{\ensuremath{\textsf{o}}\xspace}
\newcommand{\destinationIndex}{\ensuremath{\textsf{d}}\xspace}
\newcommand{\targetIndex}{\ensuremath{\textsf{t}}\xspace}
\newcommand{\sourceVertex}{\ensuremath{\aVertex_\sourceIndex}\xspace}
\newcommand{\originVertex}{\ensuremath{\aVertex_\originIndex}\xspace}
\newcommand{\destinationVertex}{\ensuremath{\aVertex_\destinationIndex}\xspace}
\newcommand{\targetVertex}{\ensuremath{\aVertex_\targetIndex}\xspace}
\newcommand{\sourceEvent}{\ensuremath{\stopEvent_\sourceIndex}\xspace}
\newcommand{\originEvent}{\ensuremath{\stopEvent_\originIndex}\xspace}
\newcommand{\destinationEvent}{\ensuremath{\stopEvent_\destinationIndex}\xspace}
\newcommand{\targetEvent}{\ensuremath{\stopEvent_\targetIndex}\xspace}
\newcommand{\aJourney}{\ensuremath{J}\xspace}
\newcommand{\journeys}{\ensuremath{\mathcal{J}}\xspace}
\newcommand{\prefix}[1]{\ensuremath{#1_\textsf{p}}\xspace}
\newcommand{\aCandidateJourney}{\ensuremath{\aJourney^{\textsf{c}}}\xspace}
\newcommand{\aWitnessJourney}{\ensuremath{\aJourney^{\textsf{w}}}\xspace}
\newcommand{\candidateJourneys}{\ensuremath{\journeys^\textsf{c}}\xspace}
\newcommand{\activeTrip}{\ensuremath{\aTrip_{\textsf{min}}}\xspace}
\newcommand{\reachedIndex}{\ensuremath{r}\xspace}
\newcommand{\shortcutEdges}{\ensuremath{\edges^\textsf{s}}\xspace}
\newcommand{\newShortcutEdges}{\ensuremath{\shortcutEdges_\text{new}}\xspace}
\newcommand{\departures}{\ensuremath{\mathcal{D}}\xspace}
\newcommand{\forwardArrivalTime}{\ensuremath{\overrightarrow{\arrivalTime}}\xspace}
\newcommand{\backwardDepartureTime}{\ensuremath{\overleftarrow{\departureTime}}\xspace}
\newcommand{\delay}{\ensuremath{\delta}\xspace}
\newcommand{\maxDelay}{\ensuremath{\delay^\text{max}}\xspace}
\newcommand{\delayScenario}{\ensuremath{\Delta}\xspace}
\newcommand{\departureDelay}{\ensuremath{\departure{\delayScenario}}\xspace}
\newcommand{\arrivalDelay}{\ensuremath{\arrival{\delayScenario}}\xspace}
\newcommand{\bestDelayScenario}{\ensuremath{\delayScenario^\text{best}}\xspace}
\newcommand{\parameterizedBestDelayScenario}{\ensuremath{\bestDelayScenario}\xspace}
\newcommand{\worstDelayScenario}{\ensuremath{\delayScenario^\text{worst}}\xspace}
\newcommand{\virtualDelayScenario}{\ensuremath{\delayScenario^\text{virt}}\xspace}
\newcommand{\candidateDelayScenario}{\ensuremath{\delayScenario^\text{can}}\xspace}
\newcommand{\witnessDelayScenario}{\ensuremath{\delayScenario^\text{wit}}\xspace}
\newcommand{\timeTravelDelayScenario}{\ensuremath{\delayScenario^\textsf{tt}}\xspace}
\newcommand{\leqEval}{\ensuremath{\preceq_\text{eval}}\xspace}
\newcommand{\geqEval}{\ensuremath{\succeq_\text{eval}}\xspace}
\newcommand{\leqDom}{\ensuremath{\preceq_\text{dom}}\xspace}
\newcommand{\geqDom}{\ensuremath{\succeq_\text{dom}}\xspace}
\newcommand{\sourceEvents}{\ensuremath{\stopEvents_\sourceIndex}\xspace}
\newcommand{\destinationEvents}{\ensuremath{\stopEvents_\destinationIndex}\xspace}
\newcommand{\targetEvents}{\ensuremath{\stopEvents_\targetIndex}\xspace}
\newcommand{\sourcePrefix}{\ensuremath{\aCandidateJourney_\sourceIndex}\xspace}
\newcommand{\originPrefix}{\ensuremath{\aCandidateJourney_\originIndex}\xspace}
\newcommand{\destinationPrefix}{\ensuremath{\aCandidateJourney_\destinationIndex}\xspace}
\newcommand{\sourcePrefixSequence}{\ensuremath{\langle\sourceEvent\rangle}\xspace}
\newcommand{\originStopPrefixSequence}[1]{\ensuremath{\langle[\sourceEvent,\originEvent],#1\rangle}\xspace}
\newcommand{\originPrefixSequence}{\ensuremath{\langle[\sourceEvent,\originEvent],\destinationVertex\rangle}\xspace}
\newcommand{\destinationPrefixSequence}{\ensuremath{\langle[\sourceEvent,\originEvent],\destinationEvent\rangle}\xspace}
\newcommand{\candidateSequence}{\ensuremath{\langle[\sourceEvent,\originEvent],[\destinationEvent,\targetEvent]\rangle}\xspace}
\newcommand{\slack}{\ensuremath{\operatorname{sl}}\xspace}
\newcommand{\splitLimit}{\ensuremath{\lambda_\textsf{s}}\xspace}
\newcommand{\destinationSplitLimit}{\ensuremath{\lambda_\textsf{\destinationIndex.s}}\xspace}
\newcommand{\directTargetSplitLimit}{\ensuremath{\lambda_\textsf{1\targetIndex.s}}\xspace}
\newcommand{\indirectTargetSplitLimit}{\ensuremath{\lambda_\textsf{2\targetIndex.s}}\xspace}
\newcommand{\indirectTargetSplitLimitBound}{\ensuremath{\underline{\indirectTargetSplitLimit}}\xspace}
\newcommand{\aggTargetSplitLimit}{\ensuremath{\lambda^\text{agg}_\textsf{\targetIndex.s}}\xspace}
\newcommand{\aggTargetSplitLimitBound}{\ensuremath{\underline{\aggTargetSplitLimit}}\xspace}
\newcommand{\joinLimit}{\ensuremath{\lambda_\textsf{j}}\xspace}
\newcommand{\feasibilityLimit}{\ensuremath{\lambda_\textsf{f}}\xspace}
\newcommand{\timeTravelLimit}{\ensuremath{\lambda_\textsf{tt}}\xspace}
\newcommand{\aggTimeTravelLimit}{\ensuremath{\lambda^\text{agg}_\textsf{tt}}\xspace}
\newcommand{\minOriginDelay}{\ensuremath{\delay^\originIndex_\text{min}}\xspace}
\newcommand{\minOriginDelayBound}{\ensuremath{\underline{\minOriginDelay}}\xspace}
\newcommand{\maxOriginDelay}{\ensuremath{\delay^\originIndex_\text{max}}\xspace}
\newcommand{\interval}{\ensuremath{I}\xspace}
\newcommand{\originDelayInterval}{\ensuremath{\interval_\delay^\originIndex}\xspace}
\newcommand{\originDelayIntervalBound}{\ensuremath{\underline{\originDelayInterval}}\xspace}
\newcommand{\maxWitnessDelay}{\ensuremath{\delay^\textsf{w}_\text{max}}\xspace}
\newcommand{\feasibleDestinationEvents}{\ensuremath{\stopEvents^\textsf{j/f}}\xspace}
\newcommand{\optimalCandidateJourneys}{\ensuremath{\journeys^{\textsf{c}}_\text{opt}}\xspace}
\newcommand{\optimalTargetEvents}{\ensuremath{\targetEvents^\text{opt}}\xspace}
\newcommand{\optimalDestinationEvents}{\ensuremath{\destinationEvents^\text{opt}}\xspace}
\newcommand{\virtualCandidateJourneys}{\ensuremath{\journeys^{\textsf{c}}_\text{virt}}\xspace}
\newcommand{\virtualTargetEvents}{\ensuremath{\targetEvents^\text{virt}}\xspace}
\newcommand{\witnessEntryEvents}{\ensuremath{\stopEvents_\textsf{w}^\text{in}}\xspace}
\newcommand{\candidateExitEvents}{\ensuremath{\stopEvents_\textsf{c}^\text{out}}\xspace}
\newcommand{\delayUpdate}{\ensuremath{\mu}\xspace}
\newcommand{\query}{\ensuremath{q}\xspace}
\newcommand{\queryExecutionTime}{\ensuremath{\atime_\text{ex}}\xspace}
\newcommand{\candidateArrivalTime}{\ensuremath{\arrivalTime^\textsf{c}}\xspace}
\newcommand{\witnessArrivalTime}{\ensuremath{\arrivalTime^\textsf{w}}\xspace}
\newcommand{\arrivalTimeDifference}{\ensuremath{\arrival{\atime}^\text{diff}}\xspace}
\newcommand{\maxWitnessTime}{\ensuremath{\atime^\textsf{w}_\text{max}}\xspace}
\newcommand{\timeTravelArrivalTime}{\ensuremath{\arrivalTime^\textsf{tt}}\xspace}
\newcommand{\entryIndex}{\ensuremath{\reachedIndex_\text{in}}\xspace}
\newcommand{\exitIndex}{\ensuremath{\reachedIndex_\text{out}}\xspace}
\newcommand{\witnessIndex}{\ensuremath{\reachedIndex_\textsf{w}}\xspace}
\newcommand{\tripBegin}{\ensuremath{\aTrip_b}\xspace}
\newcommand{\tripEnd}{\ensuremath{\aTrip_e}\xspace}
\newcommand{\delayedStopEvents}{\ensuremath{\stopEvents_\text{del}\xspace}}
\newcommand{\targetStops}{\ensuremath{\vertices_\targetIndex}\xspace}
\newcommand{\maxArrivalTime}{\ensuremath{\overline{\atime}_\text{arr}}\xspace}
\newcommand{\infeasibleShortcuts}{\ensuremath{\shortcutEdges_\text{inf}}\xspace}
\newcommand{\replacementShortcutEdges}{\ensuremath{\shortcutEdges_\textsf{r}}\xspace}
\newcommand{\originArrivalTime}{\ensuremath{\arrivalTime^\originIndex}\xspace}
\definecolor{KITgreen}          {rgb}{0,    0.588,0.509}
\definecolor{KITgreen70}        {rgb}{0.3,  0.711,0.656}
\definecolor{KITgreen50}        {rgb}{0.5,  0.794,0.754}
\definecolor{KITgreen30}        {rgb}{0.7,  0.876,0.852}
\definecolor{KITgreen15}        {rgb}{0.85, 0.938,0.926}
\definecolor{KITblue}           {rgb}{0.274,0.392,0.666}
\definecolor{KITblue70}         {rgb}{0.492,0.574,0.766}
\definecolor{KITblue50}         {rgb}{0.637,0.696,0.833}
\definecolor{KITblue30}         {rgb}{0.782,0.817,0.9}
\definecolor{KITblue15}         {rgb}{0.891,0.908,0.95}
\definecolor{KITpalegreen}      {rgb}{0.509,0.745,0.235}
\definecolor{KITpalegreen70}    {rgb}{0.656,0.821,0.464}
\definecolor{KITpalegreen50}    {rgb}{0.754,0.872,0.617}
\definecolor{KITpalegreen30}    {rgb}{0.852,0.923,0.77}
\definecolor{KITpalegreen15}    {rgb}{0.926,0.961,0.885}
\definecolor{KITyellow}         {rgb}{0.98, 0.901,0.078}
\definecolor{KITyellow70}       {rgb}{0.986,0.931,0.354}
\definecolor{KITyellow50}       {rgb}{0.99, 0.95, 0.539}
\definecolor{KITyellow30}       {rgb}{0.994,0.97, 0.723}
\definecolor{KITyellow15}       {rgb}{0.997,0.985,0.861}
\definecolor{KITorange}         {rgb}{0.862,0.627,0.117}
\definecolor{KITorange70}       {rgb}{0.903,0.739,0.382}
\definecolor{KITorange50}       {rgb}{0.931,0.813,0.558}
\definecolor{KITorange30}       {rgb}{0.958,0.888,0.735}
\definecolor{KITorange15}       {rgb}{0.979,0.944,0.867}
\definecolor{KITbrown}          {rgb}{0.627,0.509,0.196}
\definecolor{KITbrown70}        {rgb}{0.739,0.656,0.437}
\definecolor{KITbrown50}        {rgb}{0.813,0.754,0.598}
\definecolor{KITbrown30}        {rgb}{0.888,0.852,0.758}
\definecolor{KITbrown15}        {rgb}{0.944,0.926,0.879}
\definecolor{KITred}            {rgb}{0.627,0.117,0.156}
\definecolor{KITred70}          {rgb}{0.739,0.382,0.409}
\definecolor{KITred50}          {rgb}{0.813,0.558,0.578}
\definecolor{KITred30}          {rgb}{0.888,0.735,0.747}
\definecolor{KITred15}          {rgb}{0.944,0.867,0.873}
\definecolor{KITlilac}          {rgb}{0.627,0,    0.47}
\definecolor{KITlilac70}        {rgb}{0.739,0.3,  0.629}
\definecolor{KITlilac50}        {rgb}{0.813,0.5,  0.735}
\definecolor{KITlilac30}        {rgb}{0.888,0.7,  0.841}
\definecolor{KITlilac15}        {rgb}{0.944,0.85, 0.92}
\definecolor{KITcyanblue}       {rgb}{0.313,0.666,0.901}
\definecolor{KITcyanblue70}     {rgb}{0.519,0.766,0.931}
\definecolor{KITcyanblue50}     {rgb}{0.656,0.833,0.95}
\definecolor{KITcyanblue30}     {rgb}{0.794,0.9,  0.97}
\definecolor{KITcyanblue15}     {rgb}{0.897,0.95, 0.985}
\definecolor{KITseablue}        {rgb}{0.196,0.313,0.549}
\definecolor{KITseablue70}      {rgb}{0.437,0.519,0.684}
\definecolor{KITseablue50}      {rgb}{0.598,0.656,0.774}
\definecolor{KITseablue30}      {rgb}{0.758,0.794,0.864}
\definecolor{KITseablue15}      {rgb}{0.879,0.897,0.932}
\definecolor{KITblack}          {rgb}{0,    0,    0}
\definecolor{KITblack90}        {rgb}{0.1,  0.1,  0.1}
\definecolor{KITblack80}        {rgb}{0.2,  0.2,  0.3}
\definecolor{KITblack75}        {rgb}{0.25, 0.25, 0.25}
\definecolor{KITblack70}        {rgb}{0.3,  0.3,  0.3}
\definecolor{KITblack60}        {rgb}{0.4,  0.4,  0.4}
\definecolor{KITblack50}        {rgb}{0.5,  0.5,  0.5}
\definecolor{KITblack40}        {rgb}{0.6,  0.6,  0.6}
\definecolor{KITblack30}        {rgb}{0.7,  0.7,  0.7}
\definecolor{KITblack25}        {rgb}{0.75, 0.75, 0.75}
\definecolor{KITblack20}        {rgb}{0.8,  0.8,  0.8}
\definecolor{KITblack10}        {rgb}{0.9,  0.9,  0.9}
\definecolor{KITwhite}          {rgb}{1,    1,    1}
\definecolor{LIPICSorange}      {rgb}{0.988,0.78 ,0.07}
\colorlet{tripColor1}{KITblue}
\colorlet{tripColor2}{KITred}
\colorlet{tripColor3}{KITpalegreen}
\colorlet{tripColor4}{KITorange}
\colorlet{rainbowColor1}{KITgreen}
\colorlet{rainbowColor2}{KITpalegreen}
\colorlet{rainbowColor3}{KITcyanblue}
\colorlet{rainbowColor4}{KITblue}
\colorlet{rainbowColor5}{KITred}
\colorlet{rainbowColor6}{KITorange}
\colorlet{mrColorRoute}{KITblue50}
\colorlet{mrColorTransfer}{KITblue70}
\colorlet{mrColorOther}{KITblue}
\colorlet{urColorRoute}{KITorange50}
\colorlet{urColorTransfer}{KITorange70}
\colorlet{urColorOther}{KITorange}
\colorlet{utbColor}{KITred}
\pgfplotsset{compat=1.16}
\colorlet{nodeColor}{black!80}
\colorlet{nodeColorFaded}{black!50}
\colorlet{edgeColor}{black!50}
\colorlet{edgeColorFaded}{black!30}
\colorlet{slowModeColor}{KITlilac}
\colorlet{fastModeColor}{KITcyanblue}
\colorlet{axisColor}{black!80}
\colorlet{legendColor}{black!80}
\colorlet{meanColor}{black!80}
\newcommand{\gs}{\hphantom{\tiny$\cdot$}}
\newcommand{\legend}[1]{\raisebox{0.07ex}{#1}}
\tikzstyle{vertex}=[circle,line width=.5pt,minimum size=0.1pt]
\tikzstyle{routeArrow}=[->, >=stealth]
\tikzstyle{route}=[routeArrow, line width=2.5pt, rounded corners = 20]
\tikzstyle{routeNoArrow}=[line width=2.5pt, rounded corners = 20]
\tikzstyle{edge}=[edgeColor, line width=1pt, rounded corners = 20]
\tikzstyle{edgeFaded}=[edgeColorFaded, line width=1pt, rounded corners = 20]
\tikzstyle{directedEdge}=[edge, routeArrow, shorten >=2.7pt]
\tikzstyle{linePlot}=[mark size=1.8pt, line width=1.2pt]
\tikzstyle{stop}=[rectangle, line width=.5pt, inner sep=5pt, draw=nodeColor!100,fill=nodeColor!15,rounded corners=0.1cm]
\tikzstyle{stopFaded}=[rectangle, line width=.5pt, inner sep=5pt, draw=nodeColor!50,fill=nodeColor!15,rounded corners=0.1cm]
\pgfplotsset{
    grid style={axisColor!20,line width = 0.2pt,dash pattern = on 2pt off 1pt},
    grid=both,
    axis line style={axisColor,line width = 0.4pt},
    major tick style={axisColor,line width = 0.4pt},
    minor tick style={axisColor,line width = 0.2pt},
    major tick length=3.5pt,
    minor tick length=2.0pt,
    ytick align=outside,
    xtick align=outside,
    ticklabel style = {font=\small},
    label style = {font=\small},
    boxplot/every median/.style={meanColor, line width = 1.5pt},
}
\begin{document}

\maketitle

\begin{abstract}
We study journey planning in multimodal networks consisting of public transit plus an unrestricted transfer mode (e.g., walking, cycling, e-scooter).
In order to provide good results in practice, algorithms must account for vehicle delays.
Delay-responsive algorithms receive a continuous stream of delay updates and must return optimal journeys in the currently known delay scenario.
Updates are incorporated in an update phase, which must be fast (e.g., a few seconds).
The fastest known approach for multimodal journey planning is ULTRA, which precomputes shortcuts representing transfers between vehicles.
This allows query algorithms to find Pareto-optimal journeys regarding arrival time and the number of public transit trips without any performance loss compared to pure public transit networks.
However, the precomputation phase does not account for delays and is too slow to rerun during the update phase.
We present Delay-ULTRA, a delay-responsive variant of ULTRA.
Since accounting for all theoretically possible delays would yield an impractically large set of shortcuts, our approach precomputes shortcuts that are provably sufficient as long as delays do not exceed a configurable limit (e.g., 5 minutes).
To handle delays above the limit (which are less frequent in practice), we propose a heuristic search for missing shortcuts that is fast enough to be run during the update phase.
Our experimental evaluation on real-world data shows that Delay-ULTRA has extremely low error rates, failing to find less than 0.02\% of optimal journeys on metropolitan and mid-sized country networks, and 0.16\% on the much larger Germany network.
Considering that the delay information that is available in realistic applications is never perfectly accurate or up to date, these error rates are negligible.
Query speed is at most twice as slow as ULTRA without delay information, and up to 8 times faster than the fastest exact algorithm, which does not require a preprocessing phase.
\end{abstract}

\newpage

\section{Introduction}
Modern transportation networks are increasingly multimodal, combining public and private transport.
Finding good travel itineraries in these networks can be challenging and requires the aid of multimodal journey planning algorithms.
In this work, we study bimodal networks that combine schedule-based public transit with an unrestricted \emph{transfer graph} representing one road-based mode (e.g., walking, cycling, e-scooter).
Given a point-to-point query, the objective is to find a Pareto set of journeys optimizing arrival time and the number of public transit trips.
The fastest known algorithms for this problem rely on ULTRA (UnLimited TRAnsfers)~\cite{Bau23}, a preprocessing technique that condenses the transfer graph into a set of \emph{shortcuts} between public transit vehicles.
However, its practical applicability is hampered by the fact that the shortcuts do not account for delays in the vehicle schedules.

We distinguish between two types of delay-robust algorithms:
\emph{Delay-anticipating} algorithms aim to find journeys that are robust regarding potential future delays that are not known at query time.
We refer to~\cite{Bas16b} for an overview of such algorithms.
In contrast, we study \emph{delay-responsive} algorithms, which aim to find journeys that are optimal in the currently known delay scenario.
These algorithms receive a continuous stream of \emph{delay updates}, which they need to periodically incorporate into their query data structures in an \emph{update phase}.
This phase needs to be fast (ideally no more than a few seconds) to ensure that the query algorithm runs on reasonably up-to-date information.

\subparagraph*{Related Work.}
An overview of the state of the art in journey planning is given in~\cite{Bas16b}.
Algorithms for pure public transit are divided into graph-based and timetable-based approaches.
The former model the network as a graph~\cite{Del08,Mue09} and apply variants of Dijkstra's algorithm~\cite{Dij59}.
This allows for fast delay updates, especially if the model is designed with updates in mind~\cite{Cio17}.
Timetable-based approaches such as RAPTOR~\cite{Del15b} achieve faster query times with cache-friendly data structures, which are typically lightweight enough that they can be rebuilt from scratch during the update phase.
The fastest such algorithm is Trip-Based Routing~(TB)~\cite{Wit15}, which computes transfers between vehicles in a preprocessing phase.
Delays can be incorporated by recomputing transfers for the affected vehicles~\cite{Wit21}.
Among techniques with heavy preprocessing, Bast et al.~\cite{Bas13b} show that Transfer Patterns~\cite{Bas10} answers almost all queries correctly even if the preprocessed data structures are not adjusted to incorporate delays.
However, the queries selected in their experiments are biased towards large stations and rush hours; it is unclear whether the error rate is also low for queries selected uniformly at random.
For Public Transit Labeling~\cite{Del15}, a dynamic version has been proposed~\cite{Emi19}, but it only optimizes arrival time.
Common to all these techniques is that they require a transitively closed transfer graph.
In particular, they assume that the set of potential transfers that are incident to a public transit vehicle is small and local, which makes it easy to enumerate when handling a delay update.
This is no longer the case with an unrestricted transfer graph.

For multimodal networks, the fastest algorithm without expensive preprocessing is MR~\cite{Del13}, which combines RAPTOR with Dijkstra searches on a contracted transfer graph.
It is delay-responsive but fairly slow compared to RAPTOR.
The delay-responsive graph model of~\cite{Cio17} has been extended to multimodal networks~\cite{Gia19}, but query times are not competitive with MR.
Faster query times are achieved with ULTRA~\cite{Bau23}, which exploits the observation that only a fairly small number of transfers between public transit trips occur in Pareto-optimal journeys.
These are precomputed and condensed into a \emph{shortcut graph}, which can be used by any existing public transit algorithm.
Combining ULTRA and TB yields the fastest known multimodal algorithm~\cite{Sau20b}.
However, the ULTRA preprocessing phase does not account for delays and is too slow to be rerun in the update phase.

\subparagraph*{Contribution and Outline.}
We present Delay-ULTRA, a variant of ULTRA that anticipates possible delays during the shortcut computation phase.
Doing this for arbitrarily high delays poses a conceptual challenge:
Nearly every journey is optimal in at least one theoretically possible delay scenario (e.g., one that applies extreme delays to all competing journeys).
Thus, almost every possible shortcut is required in theory.
In realistic scenarios, however, most vehicles are delayed by no more than a few minutes.
Delay-ULTRA exploits this with a three-phase approach:
The first preprocessing phase, which does not receive any delay information, computes a set of shortcuts that is provably sufficient for all delays below a specified threshold (e.g., 5 minutes).
During the update phase, when delays are known, irrelevant shortcuts are discarded and additional ones for higher delays are computed in a second step.
Finding all required shortcuts is not feasible within the few seconds allowed by the update phase, since this would require enumerating all potential incident shortcuts of the delayed vehicles.
Instead, we propose a heuristic \emph{replacement search} for shortcuts that have become infeasible due to delays.
Finally, the query algorithm runs TB on the filtered set of Delay-ULTRA shortcuts as well as the replacement shortcuts.

The remainder of this paper is organized as follows.
Section~\ref{sec:preliminaries} introduces notation and definitions.
Furthermore, it gives an overview of the RAPTOR and ULTRA algorithms, upon which we build.
In Section~\ref{sec:optimality-conditions}, we establish a characterization of the shortcuts that are required for delays up to a certain limit.
While this characterization is compact, it does not immediately imply an efficient algorithm for enumerating all such shortcuts.
Section~\ref{sec:optimality-tests} therefore develops formulas for computing them efficiently.
Based on these results, Section~\ref{sec:delay-ultra} outlines the Delay-ULTRA shortcut computation algorithm.
Section~\ref{sec:update} describes the update phase, including the replacement search.
In Section~\ref{sec:experiments}, we evaluate our algorithms on four real-world multimodal networks, using a synthetic delay model based on real-world punctuality data.
Our experiments show that the original ULTRA-TB is already fairly delay-robust, failing to find at most 1\% of optimal journeys.
Delay-ULTRA reduces the error rate by a factor of~4--15, yielding less than~0.02\% suboptimal journeys for metropolitan and mid-sized country networks, and~0.16\% for the much larger Germany network.
If we take into account that a realistic routing application cannot have access to perfectly accurate or up-to-date delay information, these error rates are negligible.
Our query algorithm, Delay-ULTRA-TB, retains a speedup of up to~8 compared to MR, and is at most two times slower than ULTRA-TB.
Finally, Section~\ref{sec:conclusion} summarizes our results and gives an outlook on future work.

\section{Preliminaries}
\label{sec:preliminaries}
\subparagraph*{Network.}
A public transit network is a 5-tuple~$(\stops,\stopEvents,\trips,\routes,\graph)$ consisting of a set of \emph{stops}~$\stops$, a set of \emph{stop events}~$\stopEvents$, a set of \emph{trips}~$\trips$, a set of \emph{routes}~$\routes$ and a directed, weighted~\emph{transfer graph}~$\graph=(\vertices,\edges)$.
A stop~$\astop\in\stops$ is a location where vehicles are entered or exited.
A stop event~$\stopEvent\in\stopEvents$ represents a vehicle arriving at a stop~$\astop(\stopEvent)\in\stops$ with the scheduled arrival time~$\arrivalTime(\stopEvent)$ and then departing from the same stop with the scheduled departure time~$\departureTime(\stopEvent)$.
Some works consider a \emph{departure buffer time} that passengers must observe before entering a vehicle.
As shown in~\cite{Bau23}, this can be handled implicitly by reducing~$\departureTime(\stopEvent)$ accordingly.
A trip~$\aTrip=\langle\stopEvent_0,\dots,\stopEvent_k\rangle\in\trips$ is a sequence of stop events performed by the same vehicle.
The~$i$-th stop event of~$\aTrip$ is denoted as~$\aTrip[i]$, and the number of stop events in~$\aTrip$ as~$|\aTrip|$.
The trip of a stop event~$\stopEvent$ is denoted as~$\aTrip(\stopEvent)$.
We denote the sets of stop events preceding and succeeding~$\aTrip[i]$ in~$\aTrip$ by~$\precedingEvents(\aTrip[i])$ and~$\succeedingEvents(\aTrip[i])$, respectively.
A~\emph{trip segment}~\mbox{$\aTripSegment{i}{j}:=\langle\stopEvent_i,\dots,\stopEvent_j\rangle$} is the contiguous subsequence of~\aTrip from~$\aTrip[i]$ to~$\aTrip[j]$.
The trips are partitioned into a set of routes~$\routes$ such that all trips of a route share the same stop sequence and no trip overtakes another along the sequence.
This induces a total ordering on the trips of a route:
for two trips~$\aTripA,\aTripB$ of the same route~$\aRoute\in\routes$, we write~$\aTripA\preceq\aTripB$ if~$\arrivalTime(\aTripA[i])\leq\arrivalTime(\aTripB[i])$ for every index~$i$ along~$\aRoute$.
Note that trips from different routes are not comparable via~$\preceq$.
The transfer graph~$\graph=(\vertices,\edges)$ consists of a set of \emph{vertices}~$\vertices$ with~$\stops\subseteq\vertices$, and a set of \emph{edges}~$\edges\subseteq\vertices\times\vertices$.
Traversing an edge~$\edge$ requires the \emph{transfer time}~$\transferTime(\edge)$.
The transfer time~$\transferTime(\aPath)$ of a path~$\aPath$ in~$\graph$ is the sum of the transfer times of its edges.\looseness=-1

\subparagraph*{Delays.}
A \emph{delay scenario}~\delayScenario assigns to every stop event~$\stopEvent\in\stopEvents$ an \emph{arrival} and~\emph{departure delay}~$\arrivalDelay(\stopEvent),\departureDelay(\stopEvent)\in\mathbb{N}_0$.
This yields delayed arrival and departure times~$\arrivalTime(\delayScenario,\stopEvent)=\arrivalTime(\stopEvent)+\arrivalDelay(\stopEvent)$ and~$\departureTime(\delayScenario,\stopEvent)=\departureTime(\stopEvent)+\departureDelay(\stopEvent)$.
Note that this definition considers the delays of all stop events independently of each other.
This allows impossible delay scenarios in which vehicles travel faster than is possible, or even backwards in time.
We permit these to avoid introducing dependencies between stop events (see Section~\ref{sec:time-travel}).

\subparagraph*{Journeys.}
A~\emph{journey} represents the movement of a passenger through the public transit network.
Formally, it is an alternating sequence~$\aJourney=\langle\aPath_0,\aNamedTripSegment{0}{i}{j},\dots,\aNamedTripSegment{k-1}{m}{n},\aPath_k\rangle$ of transfers and trip segments.
A~\emph{transfer} is a shortest path in~$\graph$; it is called \emph{empty} if it consists of a single vertex.
For source and target vertices~$\sourceVertex,\targetVertex\in\vertices$, we call~$\aJourney$ an~$\sourceVertex$-$\targetVertex$-journey if~$\aPath_0$ begins with~$\sourceVertex$ and~$\aPath_k$ ends with~$\targetVertex$.
We call~$\aPath_0$ the~\emph{initial transfer}, $\aPath_k$ the~\emph{final transfer} and all other transfers~\emph{intermediate transfers}.
For a delay scenario~\delayScenario, the latest possible departure time of~$\aJourney$ at~$\sourceVertex$ is~$\departureTime(\delayScenario,\aJourney):=\departureTime(\delayScenario,\aTrip_0[i])-\transferTime(\aPath_0)$ and the arrival time at~$\targetVertex$ is~$\arrivalTime(\delayScenario,\aJourney):=\arrivalTime(\delayScenario,\aTrip_{k-1}[n])+\transferTime(\aPath_{k})$.
The number of used trips is denoted as~$|\aJourney|:=k$.
We call~\aJourney \emph{feasible} for a departure time~\departureTime if~$\departureTime(\delayScenario,\aJourney)\geq\departureTime$ and all intermediate transfers are feasible.
An intermediate transfer between stop events~$\originEvent$ and~$\destinationEvent$ is feasible if~$\departureTime(\delayScenario,\destinationEvent)\geq\arrivalTime(\delayScenario,\originEvent)+\transferTime(\astop(\originEvent),\astop(\destinationEvent))$.

A journey~$\aJourney$ is evaluated according to its arrival time~$\arrivalTime(\delayScenario,\aJourney)$ and number of trips~$|\aJourney|$.
We say that~\aJourney \emph{weakly dominates} another journey~$\aJourney'$ if~$\arrivalTime(\delayScenario,\aJourney)\leq\arrivalTime(\delayScenario,\aJourney')$ and~$|\aJourney|\leq|\aJourney'|$.
If one criterion is also strictly smaller, $\aJourney$ \emph{strongly dominates}~$\aJourney'$.
A journey~$\aJourney$ is \emph{Pareto-optimal} if it is not strongly dominated by any journey that is feasible for~$\departureTime(\delayScenario,\aJourney)$.
A \emph{Pareto set}~\journeys is a set containing a minimal number of journeys such that every valid journey is weakly dominated by a journey in the set.

\subparagraph*{Problem Statement.}
A \emph{delay update}~$\delayUpdate$ represents changes in the delays of a single trip~$\aTrip$, starting at some index~$i$.
For each stop event~$\aTrip[j]$ with~$j\geq{}i$, $\delayUpdate$ specifies a new departure and arrival delay.
A delay-responsive algorithm receives a stream~$\langle\delayUpdate_1,\delayUpdate_2,\dots\rangle$ of delay updates and a stream~$\langle\query_0,\query_1,\dots\rangle$ of~\emph{queries}.
In the initial delay scenario~$\delayScenario_0$, all stop events are punctual.
After receiving update~$\delayUpdate_i$, its new delays are applied to the previous scenario~$\delayScenario_{i-1}$ to obtain the current scenario~$\delayScenario_i$.
A query~$\query$ consists of source and target vertices~$\sourceVertex,\targetVertex\in\vertices$, an earliest departure time~\departureTime and an~\emph{execution time}~$\queryExecutionTime\leq\departureTime$.
It must be answered with a Pareto set of~$\sourceVertex$-$\targetVertex$-journeys departing no later than~\departureTime in the delay scenario~$\delayScenario$ that is current at~$\queryExecutionTime$.

\subparagraph*{Partial Journeys.}
Since journeys that share a common suffix always have the same arrival time, there are often multiple equivalent Pareto-optimal journeys, any of which can be chosen for a Pareto set.
RAPTOR and ULTRA use the following rules to break some of these ties:
If two journeys meet during a transfer, the one with the lower arrival time at the meeting point is preferred.
If they meet during a trip segment, the one that entered the trip first is preferred.
We formalize these rules by introducing partial journeys, which start or end midway through a trip segment.
A~\emph{trip segment prefix}~$\aTripSegment{i}{\cdot}$ represents a passenger entering the trip~$\aTrip$ at~$\aTrip[i]$ but not yet exiting it, whereas a~\emph{trip segment suffix}~$\aTripSegment{\cdot}{j}$ represents a passenger entering~$\aTrip$ at an unspecified stop event and exiting at~$\aTrip[j]$.
A~\emph{journey prefix} is either a journey or a journey followed by a trip segment prefix.
Likewise, a~\emph{journey suffix} is either a journey or a journey preceded by a trip segment suffix.
In both cases, the incomplete trip segment counts as a used trip.
Journey prefixes and suffixes are collectively called~\emph{partial journeys}.
Two (partial) journeys~$\aJourney_1$ and~$\aJourney_2$ can be concatenated into a (partial) journey~$\aJourney_1\circ\aJourney_2$ if one of two conditions is fulfilled:
\begin{enumerate}
	\item $\aJourney_1$ ends with a final transfer~$\aPath_1$ to a vertex~$\aVertex$ and~$\aJourney_2$ starts with an initial transfer~$\aPath_2$ from the same vertex~$\aVertex$.
	Then~$\aJourney_1\circ\aJourney_2$ is obtained by replacing~$\aPath_1$ and~$\aPath_2$ with the intermediate transfer~$\aPath_1\circ\aPath_2$.
	\item $\aJourney_1$ ends with a trip segment prefix~$\aTripSegment{i}{\cdot}$, $\aJourney_2$ starts with a trip segment suffix~$\aTripSegment{\cdot}{j}$, and~$i<j$.
	Then~$\aJourney_1\circ\aJourney_2$ contains the proper trip segment~$\aTripSegment{i}{j}$ in their place.
\end{enumerate}

A journey~$\aJourney=\langle\aPath_0,\dots,\aNamedTripSegment{k}{i}{j},\aPath_{k+1},\dots\rangle$ has two~\emph{standard prefixes} with~$k$ trips:
the journey prefix~$\langle\aPath_0,\dots,\aNamedTripSegment{k}{i}{\cdot}\rangle$ and the proper journey~$\langle\aPath_0,\dots,\aNamedTripSegment{k}{i}{j},\aPath_{k+1}\rangle$.
Additionally, each journey~$\langle\aPath_0,\dots,\aNamedTripSegment{k}{i}{j},\aPath'_{k+1}\rangle$ where~$\aPath'_{k+1}$ is a prefix of~$\aPath_{k+1}$ is a~\emph{non-standard prefix} of~$\aJourney$.\looseness=-1

We extend the notion of dominance to journey prefixes~$\prefix{\aJourney}$ and~$\prefix{\aJourney'}$:
If~$\prefix{\aJourney}$ and~$\prefix{\aJourney'}$ are both proper journeys, i.e.,~end with final transfers, the original definition applies.
If~$\prefix{\aJourney}$ ends with a trip segment prefix~$\aTripSegment{i}{\cdot}$ and~$\prefix{\aJourney'}$ with a trip segment prefix~$\aTripSegment{j}{\cdot}$ of the same trip~$\aTrip$, then~$\prefix{\aJourney'}$ weakly dominates~$\prefix{\aJourney}$ if~$|\prefix{\aJourney'}|\leq|\prefix{\aJourney}|$ and~$j\leq{}i$.
If~$|\prefix{\aJourney'}|<|\prefix{\aJourney}|$ or~$j<i$ also holds, then~$\prefix{\aJourney'}$ strongly dominates~$\prefix{\aJourney}$.
In all other cases, $\prefix{\aJourney'}$ does not strongly or weakly dominate~$\prefix{\aJourney}$.
The definition of Pareto optimality carries over from proper journeys.
We call a journey~\emph{prefix-optimal} if all of its prefixes are Pareto-optimal.
Note that if a standard prefix that ends with a final transfer~$\aPath$ is Pareto-optimal, then so are all non-standard prefixes ending with a prefix~$\aPath'$ of~$\aPath$.
Thus, a journey is prefix-optimal iff all of its standard prefixes are Pareto-optimal.

\subparagraph*{Algorithms.}
MR~\cite{Del13} is an extension of RAPTOR~\cite{Del15b} that Pareto-optimizes arrival time and number of trips in networks with an unrestricted transfer graph.
It operates in \emph{rounds}, where round~$n$ finds journeys with~$n$ trips by extending Pareto-optimal journeys with~$n-1$ trips.
For each vertex~$\aVertex\in\vertices$ and each round~$n$, MR maintains a~\emph{tentative arrival time}~$\arrivalTime(\aVertex,n)$, which is the earliest arrival time among all journeys to~$\aVertex$ with at most~$n$ trips that have been found so far.
At the start of round~$n$, it is initialized with the arrival time~$\arrivalTime(\aVertex,n-1)$ from the previous round.
If~$\arrivalTime(\aVertex,n)$ is improved during round~$n$, then~$\aVertex$ is~\emph{marked} for the next round.\looseness=-1

Each round consists of a~\emph{route scanning phase} and a~\emph{transfer relaxation phase}.
The route scanning phase of round~$n$ collects all routes that visit marked stops and \emph{scans} them.
A route~$\aRoute$ is scanned by iterating across the stops visited by~$\aRoute$, starting at the first marked stop.
During the scan, the algorithm maintains an~\emph{active trip}~$\activeTrip$, which is the earliest trip of~$\aRoute$ that has been entered so far.
Let~$\astop$ be the~$i$-th stop of~$\aRoute$.
If~$\activeTrip$ has already been set, the algorithm checks whether exiting~$\activeTrip$ at~$\astop$ with arrival time~$\arrivalTime(\activeTrip[i])$ improves~$\arrivalTime(\astop,n)$.
If so, $\arrivalTime(\astop,n)$ is updated accordingly and~$\astop$ is marked.
Afterwards, the algorithm checks whether there is an earlier trip than~$\activeTrip$ that can be entered when arriving at~$\astop$ with arrival time~$\arrivalTime(\astop,n-1)$.
If so, $\activeTrip$ is updated accordingly.
The transfer relaxation phase runs Dijkstra's algorithm on a contracted version of the transfer graph, using~$\arrivalTime(\cdot,n)$ as the tentative distances.
The priority queue is initialized with all marked stops, and all stops that are settled by the search are themselves marked.

To answer a query with source stop~$\sourceVertex\in\stops$ and departure time~$\departureTime$, MR initializes~$\arrivalTime(\sourceVertex,0)$ with~$0$ and all other arrival times with~$\infty$.
Then it performs round~$0$ by marking~$\sourceVertex$ and relaxing its outgoing transfers.
Afterwards, new rounds are performed until no more stops have been marked.

ULTRA~\cite{Bau23} allows any query algorithm to handle an unrestricted transfer graph by precomputing a set~$\shortcutEdges\subseteq\stopEvents\times\stopEvents$ of \emph{shortcuts}.
A shortcut~$(\aTripA[i],\aTripB[j])\in\shortcutEdges$ represents a passenger exiting the trip~$\aTripA$ at the stop event~$\aTripA[i]$, taking a transfer from~$\astop(\aTripA[i])$ to~$\astop(\aTripB[j])$ and entering the trip~$\aTripB$ at the stop event~$\aTripB[j]$.
When combined with TB, ULTRA replaces the preprocessing phase of TB.
An ULTRA-TB query then explores intermediate transfers with~$\shortcutEdges$, and initial and final transfers with Bucket-CH~\cite{Kno07,Gei12}, a one-to-many search algorithm for road networks.
The shortcuts are computed by enumerating \emph{candidates}, which are journeys with exactly two trips connected by an intermediate transfer, but with empty initial and final transfers.
ULTRA produces correct results if it enumerates all prefix-optimal candidates\footnote{Since prefix optimality does not resolve all ties between equivalent journeys, ULTRA introduces the stronger notion of~\emph{canonicity} to break the remaining ties. While this reduces the number of shortcuts, it is not required for correctness since every canonical journey is prefix-optimal.} and generates shortcuts representing their intermediate transfers.
To determine whether a candidate~$\aCandidateJourney$ is prefix-optimal, ULTRA tests for each prefix~$\prefix{\aCandidateJourney}$ of~$\aCandidateJourney$ whether it is strongly dominated by another journey (prefix)~$\aWitnessJourney$.
If so, $\aWitnessJourney$ is called a~\emph{witness} for~$\prefix{\aCandidateJourney}$ and~$\aCandidateJourney$ is discarded.

\section{Characterizing Required Shortcuts}
\label{sec:optimality-conditions}
Since a set of shortcuts that is sufficient for all possible delay scenarios would be too large, we limit the precomputation to scenarios in which no delay exceeds a given \emph{delay limit}~$\maxDelay$.
Our goal is to enumerate all candidates that are prefix-optimal in at least one such delay scenario.
For the purpose of the shortcut computation, which we discuss throughout the following sections, we assume that all delay scenarios conform to this delay limit.
We discuss how to handle scenarios beyond the delay limit in Section~\ref{sec:update}.

Even with the delay limit, the number of possible delay scenarios is still astronomical, so it is not feasible to consider each one individually.
Instead, we develop a more succinct characterization of candidates that are prefix-optimal in at least one delay scenario.
To aid in this, Section~\ref{sec:optimality-conditions:notation} introduces a simplified notation for candidates and witnesses.
Section~\ref{sec:optimality-conditions:virtual} establishes a simple characterization under the assumption that candidates and witnesses do not have any stop events in common.
In Section~\ref{sec:optimality-conditions:shared}, we generalize this result to allow for shared stop events.
Based on our findings, Section~\ref{sec:optimality-conditions:definition} formally defines the shortcut computation problem in the presence of delays.

\subsection{Notation}
\label{sec:optimality-conditions:notation}
\begin{figure}
	\centering
	\begin{tikzpicture}
\node (es) at (0.00, 0.00) {};%
\node (eo) at (4.00, 0.00) {};%
\node (es2) at (1.00, 1.25) {};%
\node (eo2) at (5.00, 1.25) {};%
\node (v_inter) at (6.00, 0.00) {};%
\node (e1) at (6.00, -1.00) {};%
\node (ed) at (8.00, 0.00) {};%
\node (e2) at (10.00, 0.00) {};%
\node (et) at (12.00, 0.00) {};%
\node (ed2) at (7.00, 1.25) {};%
\node (et2) at (10.00, 1.25) {};%
\node (v_final) at (11.50, 0.90) {};%

\node (vs) at (0.00,-0.50) {};%
\node (vo) at (4.00,-0.50) {};%
\node (vd) at (8.00,-0.50) {};%
\node (vt) at (12.00,-0.50) {};%

\draw [edge]  (es) -- (es2);
\draw [edge]  (eo) -- (v_inter);
\draw [edge]  (eo2) -- (v_inter);
\draw [edge]  (v_inter) -- (e1);
\draw [edge]  (v_inter) -- (ed);
\draw [edge]  (v_inter) -- (ed2);
\draw [edge]  (e2) -- (v_final);
\draw [edge]  (et2) -- (v_final);
\draw [edge]  (v_final) -- (et);

\node [stop,fit=(es)(vs)] {};%
\node [stop,fit=(eo)(vo)] {};%
\node [stop,fit=(ed)(vd)] {};%
\node [stop,fit=(et)(vt)] {};%
\node [stop,fit=(es2)] {};%
\node [stop,fit=(eo2)] {};%
\node [stop,fit=(ed2)] {};%
\node [stop,fit=(et2)] {};%
\node [stop,fit=(e1)] {};%
\node [stop,fit=(e2)] {};%

\draw [tripColor1,route] (es) -- (eo);
\draw [tripColor2,route] (es2) -- (eo2);
\draw [tripColor3,route] (e1) -- (ed) -- (e2) -- (et);
\draw [tripColor4,route] (ed2) -- (et2);

\node (es_v) at (es) [vertex,draw=tripColor1,fill=tripColor1!15] {\gs};%
\node (eo_v) at (eo) [vertex,draw=tripColor1,fill=tripColor1!15] {\gs};%
\node (es2_v) at (es2) [vertex,draw=tripColor2,fill=tripColor2!15] {\gs};%
\node (eo2_v) at (eo2) [vertex,draw=tripColor2,fill=tripColor2!15] {\gs};%
\node (v_inter_v) at (v_inter) [vertex,draw=KITblack,fill=KITblack!15] {};%
\node (e1_v) at (e1) [vertex,draw=tripColor3,fill=tripColor3!15] {\gs};%
\node (ed_v) at (ed) [vertex,draw=tripColor3,fill=tripColor3!15] {\gs};%
\node (e2_v) at (e2) [vertex,draw=tripColor3,fill=tripColor3!15] {\gs};%
\node (et_v) at (et) [vertex,draw=tripColor3,fill=tripColor3!15] {\gs};%
\node (ed2_v) at (ed2) [vertex,draw=tripColor4,fill=tripColor4!15] {\gs};%
\node (et2_v) at (et2) [vertex,draw=tripColor4,fill=tripColor4!15] {\gs};%
\node (v_final_v) at (v_final) [vertex,draw=KITblack,fill=KITblack!15] {};%

\node at (es) [text=nodeColor!100] {\small{$\sourceEvent$}};%
\node at (eo) [text=nodeColor!100] {\small{$\originEvent$}};%
\node at (es2) [text=nodeColor!100] {\small{$\sourceEvent'$}};%
\node at (eo2) [text=nodeColor!100] {\small{$\originEvent'$}};%
\node at (e1) [text=nodeColor!100] {\small{$\stopEvent_1$}};%
\node at (ed) [text=nodeColor!100] {\small{$\destinationEvent$}};%
\node at (e2) [text=nodeColor!100] {\small{$\stopEvent_2$}};%
\node at (et) [text=nodeColor!100] {\small{$\targetEvent$}};%
\node at (ed2) [text=nodeColor!100] {\small{$\destinationEvent'$}};%
\node at (et2) [text=nodeColor!100] {\small{$\targetEvent'$}};%

\node at (vs) [text=nodeColor!100] {\small{$\sourceVertex$}};%
\node at (vo) [text=nodeColor!100] {\small{$\originVertex$}};%
\node at (vd) [text=nodeColor!100] {\small{$\destinationVertex$}};%
\node at (vt) [text=nodeColor!100] {\small{$\targetVertex$}};%
\end{tikzpicture}%
	\caption{%
        An example network with a candidate~$\aCandidateJourney=\candidateSequence$ and various witness types.
		Stops and vertices in gray, stop events as colored nodes, transfer edges as thin gray lines, trips as thick colored lines.
    }%
    \label{fig:candidates}%
\end{figure}

A candidate is a journey of the form
\[\aCandidateJourney=\langle\langle\sourceVertex\rangle,\aNamedTripSegment{1}{\sourceIndex}{\originIndex},\langle\originVertex,\dots,\destinationVertex\rangle,\aNamedTripSegment{2}{\destinationIndex}{\targetIndex},\langle\targetVertex\rangle\rangle.\]
An example is shown in Figure~\ref{fig:candidates}.
We define the~\emph{source event}~$\sourceEvent:=\aTrip_1[\sourceIndex]$, \emph{origin event}~$\originEvent:=\aTrip_1[\originIndex]$, \emph{destination event}~$\destinationEvent:=\aTrip_2[\destinationIndex]$ and~\emph{target event}~$\targetEvent:=\aTrip_2[\targetIndex]$.
Corresponding to these are the~\emph{source vertex}~$\sourceVertex:=\astop(\sourceEvent)$, \emph{origin vertex}~$\originVertex:=\astop(\originEvent)$, \emph{destination vertex}~$\destinationVertex:=\astop(\destinationEvent)$ and~\emph{target vertex}~$\targetVertex:=\astop(\targetEvent)$.
In the following, we use a more compact notation:
We write~$[\stopEvent_a,\stopEvent_b]$ for a trip segment from~$\stopEvent_a$ to~$\stopEvent_b$, whereas a trip segment prefix starting at~$\stopEvent_a$ is simply denoted as~$\stopEvent_a$.
Since transfers are shortest paths, they are uniquely determined by their endpoints.
We therefore omit all transfers from the notation except for non-empty initial or final transfers, which are represented by the source or target vertex, respectively.
Thus, a candidate is notated as~$\aCandidateJourney=\candidateSequence$.
Its standard prefixes are the~\emph{source prefix}~$\sourcePrefix=\sourcePrefixSequence$, the~\emph{origin prefix}~$\originPrefix=\originPrefixSequence$, the~\emph{destination prefix}~$\destinationPrefix=\destinationPrefixSequence$, and~$\aCandidateJourney$ itself.
The non-standard prefix that ends at a vertex~$\aVertex$ is denoted as~$\originPrefix(\aVertex)=\originStopPrefixSequence{\aVertex}$.
In particular, we call~$\originPrefix(\originVertex)$ the~\emph{origin stop prefix}.
For a journey (prefix)~$\aJourney$, we denote by~$\stopEvents(\aJourney)$ the sequence of stop events in~$\aJourney$ at which a trip is entered or exited.
For the candidate~$\aCandidateJourney$, this sequence is~$\stopEvents(\aCandidateJourney)=\langle\sourceEvent,\originEvent,\destinationEvent,\targetEvent\rangle$.
Standard prefixes of~$\aCandidateJourney$ are uniquely identified by their stop event sequence.
The standard prefix that ends with a stop event~$\stopEvent\in\stopEvents(\aCandidateJourney)$ is called the~\emph{$\stopEvent$-prefix}.

\subsection{Best-Case and Virtual Delay Scenarios}
\label{sec:optimality-conditions:virtual}
To simplify proofs, we define a partial order~$\leqEval$ on delay scenarios.
For two delay scenarios~$\delayScenario^1,\delayScenario^2$, we write~$\delayScenario^1\leqEval\delayScenario^2$ if~$\departureDelay^1(\stopEvent)\geq\departureDelay^2(\stopEvent)$ and~$\arrivalDelay^1(\stopEvent)\leq\arrivalDelay^2(\stopEvent)$ holds for every stop event~$\stopEvent\in\stopEvents$.
Then~$\delayScenario^1$ is ``better'' than~$\delayScenario^2$ in the following sense:
a journey~$\aJourney$ that is feasible in~$\delayScenario^1$ for a departure time~$\departureTime$ is also feasible for~$\departureTime$ in~$\delayScenario^2$, and its arrival time in~$\delayScenario^1$ is not higher than in~$\delayScenario^2$.
Consider the global \emph{best-case} delay scenario~$\bestDelayScenario$, in which all arrivals have delay~0 and all departures have delay~\maxDelay.
The other extreme is the \emph{worst-case} scenario~$\worstDelayScenario$, which assumes that all departures are punctual and all arrivals have maximum delay.
Then~$\bestDelayScenario\leqEval\delayScenario\leqEval\worstDelayScenario$ holds for every delay scenario~$\delayScenario$.
For a journey prefix~$\prefix{\aJourney}$, the best-case scenario~$\bestDelayScenario(\prefix{\aJourney})$ assumes the best case for all stop events in~$\stopEvents(\prefix{\aJourney})$ and the worst case otherwise.

Consider a candidate~$\aCandidateJourney$ and a witness~$\aWitnessJourney$ that does not use any stop events in~$\stopEvents(\aCandidateJourney)$.
If~$\aWitnessJourney$ strongly dominates a prefix of~$\aCandidateJourney$ in~$\bestDelayScenario(\aCandidateJourney)$, then it does so in every delay scenario.
Thus, if we (wrongly) assume that witnesses do not share stop events with~$\aCandidateJourney$, we obtain the very simple condition that~$\aCandidateJourney$ is prefix-optimal in at least one delay scenario iff it is prefix-optimal in~$\bestDelayScenario(\aCandidateJourney)$.
We formalize this assumption by introducing virtual delay scenarios.
A~\emph{virtual delay scenario}~$\virtualDelayScenario=(\candidateDelayScenario,\witnessDelayScenario)$ consists of a \emph{candidate scenario}~$\candidateDelayScenario$ and a \emph{witness scenario}~$\witnessDelayScenario$.
A candidate prefix~$\prefix{\aCandidateJourney}$ is strongly dominated by a witness~$\aWitnessJourney$ in~$\virtualDelayScenario$ if~$\aWitnessJourney$ as evaluated in~$\witnessDelayScenario$ strongly dominates~$\prefix{\aCandidateJourney}$ as evaluated in~$\candidateDelayScenario$.
Thus, stop events that are shared between both journeys can act as if they were not shared by assuming different delays in~$\candidateDelayScenario$ and~$\witnessDelayScenario$.

We define another partial order~$\leqDom$ specifically for virtual delay scenarios.
Given two virtual delay scenarios~$\virtualDelayScenario_1=(\candidateDelayScenario_1,\witnessDelayScenario_1)$ and~$\virtualDelayScenario_2=(\candidateDelayScenario_2,\witnessDelayScenario_2)$, we write~$\virtualDelayScenario_1\leqDom\virtualDelayScenario_2$ if~$\candidateDelayScenario_1\leqEval\candidateDelayScenario_2$ and~$\witnessDelayScenario_1\geqEval\witnessDelayScenario_2$.
If a witness~$\aWitnessJourney$ strongly dominates a candidate prefix~$\prefix{\aCandidateJourney}$ in~$\virtualDelayScenario_1$, it also does so in~$\virtualDelayScenario_2$.
Thus, if~$\prefix{\aCandidateJourney}$ is prefix-optimal in~$\virtualDelayScenario_2$, it is also prefix-optimal in~$\virtualDelayScenario_1$.
Since each proper delay scenario~$\delayScenario$ has an equivalent virtual delay scenario~$(\delayScenario,\delayScenario)$, virtual delay scenarios are a superset of proper delay scenarios.
We therefore extend the definition of~$\leqDom$ to proper delay scenarios as well.

For a candidate prefix~$\prefix{\aCandidateJourney}$, the \emph{virtual best-case scenario}~$\virtualDelayScenario(\prefix{\aCandidateJourney}):=(\bestDelayScenario,\bestDelayScenario(\prefix{\aCandidateJourney}))$ assumes the best case for~$\prefix{\aCandidateJourney}$ and for all witness events that are shared with it, and the worst case for everything else.
For a full candidate~$\aCandidateJourney$, $\virtualDelayScenario(\aCandidateJourney)$ is equivalent to~$\bestDelayScenario(\aCandidateJourney)$.
For the empty prefix~$\prefix{\aCandidateJourney}=\langle\rangle$, $\virtualDelayScenario(\langle\rangle)=(\bestDelayScenario,\worstDelayScenario)$ assumes the best case for~$\aCandidateJourney$ and the worst case for all witnesses.
Theorem~\ref{th:app:virtual} shows that we can use~$\virtualDelayScenario(\langle\rangle)$ to obtain a simple characterization of prefix-optimal candidates.

\begin{theorem}
	\label{th:app:virtual}
	A candidate~$\aCandidateJourney$ is prefix-optimal in at least one virtual delay scenario iff it is prefix-optimal in~$\virtualDelayScenario(\langle\rangle)=(\bestDelayScenario,\worstDelayScenario)$.
\end{theorem}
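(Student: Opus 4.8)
The statement is an ``iff'' between two conditions on a candidate~$\aCandidateJourney$: prefix-optimality in \emph{some} virtual delay scenario, and prefix-optimality in the specific scenario~$\virtualDelayScenario(\langle\rangle)=(\bestDelayScenario,\worstDelayScenario)$. One direction is immediate: $\virtualDelayScenario(\langle\rangle)$ is itself a virtual delay scenario, so prefix-optimality in it trivially implies prefix-optimality in at least one virtual delay scenario. The work is in the converse. The plan is to use the monotonicity of dominance along the partial order~$\leqDom$ that was already established in the excerpt: if a witness~$\aWitnessJourney$ strongly dominates a candidate prefix~$\prefix{\aCandidateJourney}$ in~$\virtualDelayScenario_1$ and $\virtualDelayScenario_1\leqDom\virtualDelayScenario_2$, then it also dominates in~$\virtualDelayScenario_2$; contrapositively, prefix-optimality in~$\virtualDelayScenario_2$ gives prefix-optimality in~$\virtualDelayScenario_1$. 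So it suffices to show that $\virtualDelayScenario(\langle\rangle)$ is the~$\leqDom$-minimum among all virtual delay scenarios, i.e.\ $\virtualDelayScenario(\langle\rangle)\leqDom\virtualDelayScenario$ for every virtual~$\virtualDelayScenario=(\candidateDelayScenario,\witnessDelayScenario)$.

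\textbf{Key steps.} First I would unfold the definition of~$\leqDom$: $\virtualDelayScenario(\langle\rangle)\leqDom\virtualDelayScenario$ means $\bestDelayScenario\leqEval\candidateDelayScenario$ and $\worstDelayScenario\geqEval\witnessDelayScenario$. Both are instances of the already-noted fact that $\bestDelayScenario\leqEval\delayScenario\leqEval\worstDelayScenario$ holds for \emph{every} delay scenario~$\delayScenario$ — applied once with $\delayScenario=\candidateDelayScenario$ and once with $\delayScenario=\witnessDelayScenario$. That settles $\virtualDelayScenario(\langle\rangle)\leqDom\virtualDelayScenario$. Second, I would invoke the monotonicity statement from Section~\ref{sec:optimality-conditions:virtual}: since every (proper or virtual) delay scenario~$\virtualDelayScenario$ satisfies $\virtualDelayScenario(\langle\rangle)\leqDom\virtualDelayScenario$, prefix-optimality of~$\aCandidateJourney$ in~$\virtualDelayScenario(\langle\rangle)$ implies prefix-optimality in~$\virtualDelayScenario$ — wait, the direction needs care, so let me state it the way the excerpt does: ``if $\prefix{\aCandidateJourney}$ is prefix-optimal in~$\virtualDelayScenario_2$, it is also prefix-optimal in~$\virtualDelayScenario_1$'' whenever $\virtualDelayScenario_1\leqDom\virtualDelayScenario_2$. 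With $\virtualDelayScenario_1=\virtualDelayScenario(\langle\rangle)$ and $\virtualDelayScenario_2=\virtualDelayScenario$ this reads: prefix-optimality in~$\virtualDelayScenario$ implies prefix-optimality in~$\virtualDelayScenario(\langle\rangle)$. That is exactly the converse direction of the theorem. Finally I would note that a single prefix~$\prefix{\aCandidateJourney}$ being non-dominated transfers through the partial order per-prefix, and ``prefix-optimal'' is the conjunction over all standard prefixes, so the implication lifts from individual prefixes to the whole candidate without extra argument.

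\textbf{Main obstacle.} There is essentially no deep obstacle here — the theorem is a clean corollary of the $\leqDom$-monotonicity lemma plus the observation that $\virtualDelayScenario(\langle\rangle)$ is the bottom element of that order. The only point requiring a little care is bookkeeping with the direction of the two reversed inequalities inside $\leqDom$ (the candidate part uses~$\leqEval$, the witness part uses~$\geqEval$) and making sure the ``dominance is preserved going up, optimality is preserved going down'' phrasing is applied consistently; a sloppy sign flip would invert the whole argument. I would also briefly double-check the edge cases mentioned in the text — that $\virtualDelayScenario(\aCandidateJourney)$ coincides with $\bestDelayScenario(\aCandidateJourney)$ for a full candidate and that $\virtualDelayScenario(\langle\rangle)=(\bestDelayScenario,\worstDelayScenario)$ — but these are consistency checks rather than steps of the proof. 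Everything else is a one-line invocation of results already in place.
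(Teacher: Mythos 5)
Your proposal is correct and follows exactly the paper's argument: the trivial direction plus the observation that $\bestDelayScenario\leqEval\candidateDelayScenario$ and $\witnessDelayScenario\leqEval\worstDelayScenario$ give $\virtualDelayScenario(\langle\rangle)\leqDom\virtualDelayScenario$ for every virtual scenario, after which the $\leqDom$-monotonicity of prefix-optimality finishes the proof. No gaps; the sign bookkeeping you flag is handled correctly.
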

\begin{proof}
	Let~$\virtualDelayScenario=(\candidateDelayScenario,\witnessDelayScenario)$ be a virtual delay scenario.
	With
	\begin{align*}
		\witnessDelayScenario&\leqEval\worstDelayScenario=\bestDelayScenario(\langle\rangle)\quad\text{and}\\ \candidateDelayScenario&\geqEval\bestDelayScenario,
	\end{align*}
	it follows that~$\virtualDelayScenario\geqDom\virtualDelayScenario(\langle\rangle)$.
	Hence, if~$\aCandidateJourney$ is prefix-optimal in~$\virtualDelayScenario$, it is also prefix-optimal in~$\virtualDelayScenario(\langle\rangle)$.
\end{proof}

\subsection{Shared Stop Events}
\label{sec:optimality-conditions:shared}
Theorem~\ref{th:app:virtual} implies that a straightforward adaptation of ULTRA that explores candidates in~$\bestDelayScenario$ and witnesses in~$\worstDelayScenario$ will generate a sufficient set of shortcuts.
However, it will be impractically large because many of these shortcuts are only required in a virtual delay scenario, but not in any proper delay scenarios.
To avoid this, we investigate the effects of shared stop events.

\subparagraph*{Hook Witnesses.}
In Figure~\ref{fig:candidates}, consider the witness~$\aWitnessJourney=\langle[\sourceEvent,\originEvent],[\destinationEvent',\targetEvent']\rangle$.
If the origin event~$\originEvent$ is delayed, this may cause~$\aWitnessJourney$ to miss its intermediate transfer, even as the candidate~$\aCandidateJourney=\candidateSequence$ remains feasible.
This shows that it is not sufficient to consider the best case for shared stop events.
In the following, we will show that~$\originEvent$ is the only stop event of~$\aCandidateJourney$ where this is an issue.
For this purpose, we introduce the concept of hook witnesses.
We call a witness~$\aWitnessJourney$ for a journey (prefix)~$\aJourney$ a~\emph{hook witness} if~$\stopEvents(\aWitnessJourney)$ can be divided into a prefix that is shared with~$\stopEvents(\aJourney)$ (the \emph{handle}) and a suffix that is not (the \emph{hook}).
In Figure~\ref{fig:candidates}, $\langle[\sourceEvent,\originEvent],\stopEvent_1\rangle$ is a hook witness for~$\destinationPrefix$ while~$\langle[\sourceEvent,\originEvent],[\stopEvent_1,\targetEvent]\rangle$ is a non-hook witness for~$\aCandidateJourney$.
Note that if~$\stopEvents(\aWitnessJourney)$ includes a stop event~$\stopEvent\in\stopEvents(\aJourney)$, it includes the entire~$\stopEvent$-prefix of~$\aJourney$.
Lemma~\ref{th:app:hook} shows that it is sufficient to consider hook witnesses, because every non-hook witness has an equivalent hook witness that replaces everything up to the last shared stop event~$\stopEvent$ with the $\stopEvent$-prefix of~$\aJourney$.

\begin{lemma}
	\label{th:app:hook}
	A candidate~$\aCandidateJourney$ is prefix-optimal iff no prefix of~$\aCandidateJourney$ is strongly dominated by a hook witness.
\end{lemma}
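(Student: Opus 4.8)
The plan is to prove the two implications of the equivalence separately. The direction ``prefix-optimal $\Rightarrow$ no prefix is strongly dominated by a hook witness'' is immediate: a hook witness is in particular a witness, i.e., a feasible journey prefix, and prefix-optimality of $\aCandidateJourney$ means exactly that no prefix of $\aCandidateJourney$ is strongly dominated by any such journey prefix. For the converse I would argue contrapositively. Suppose $\aCandidateJourney$ is not prefix-optimal; since, as noted above, it suffices to consider standard prefixes, some standard prefix $\prefix{\aCandidateJourney}$ is strongly dominated by a witness $\aWitnessJourney$, and the task reduces to turning $\aWitnessJourney$ into a hook witness that still strongly dominates a prefix of $\aCandidateJourney$.

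For the transformation I would follow the recipe indicated just before the lemma. If $\stopEvents(\aWitnessJourney)$ and $\stopEvents(\prefix{\aCandidateJourney})$ share no stop event, then $\aWitnessJourney$ is already a hook witness with empty handle and there is nothing to do. Otherwise, let $\stopEvent$ be the stop event that occurs last in $\stopEvents(\aWitnessJourney)$ among those also lying in $\stopEvents(\prefix{\aCandidateJourney})$, denote by $\aCandidateJourney_{\stopEvent}$ the $\stopEvent$-prefix of $\aCandidateJourney$ and by $\aWitnessJourney_{\stopEvent}$ the suffix of $\aWitnessJourney$ starting at $\stopEvent$, and set $\aWitnessJourney' := \aCandidateJourney_{\stopEvent}\circ\aWitnessJourney_{\stopEvent}$, i.e., replace the part of $\aWitnessJourney$ up to $\stopEvent$ by $\aCandidateJourney_{\stopEvent}$. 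That $\aWitnessJourney'$ is a hook witness for $\prefix{\aCandidateJourney}$ is then straightforward: $\stopEvent\in\stopEvents(\prefix{\aCandidateJourney})$ forces $\aCandidateJourney_{\stopEvent}$ to be a prefix of $\prefix{\aCandidateJourney}$, so the handle $\stopEvents(\aCandidateJourney_{\stopEvent})$ is shared with $\stopEvents(\prefix{\aCandidateJourney})$, while the hook $\stopEvents(\aWitnessJourney_{\stopEvent})\setminus\{\stopEvent\}$ is disjoint from $\stopEvents(\prefix{\aCandidateJourney})$ by the maximality of $\stopEvent$.

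The real work, and the step I expect to be the main obstacle, is showing that $\aWitnessJourney'$ is a well-defined, feasible journey prefix that still strongly dominates $\prefix{\aCandidateJourney}$. Here I would invoke the observation stated right before the lemma: because $\stopEvents(\aWitnessJourney)$ contains all stop events of $\aCandidateJourney_{\stopEvent}$ and consecutive stop events of a journey are linked either by a trip segment or by a shortest-path transfer (which is uniquely determined by its endpoints), the part of $\aWitnessJourney$ reaching $\stopEvent$ uses no fewer trips than $\aCandidateJourney_{\stopEvent}$ and $\stopEvent$ occurs in the same role (trip entry or trip exit) in both, so the concatenation at $\stopEvent$ is well-defined and $|\aWitnessJourney'|\leq|\aWitnessJourney|$. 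Since $\aWitnessJourney'$ and $\aWitnessJourney$ then agree on the entire suffix $\aWitnessJourney_{\stopEvent}$ and $\stopEvent$ has a fixed delayed arrival and departure time in $\delayScenario$, $\aWitnessJourney'$ arrives at the target at the same time as $\aWitnessJourney$ (or, if $\prefix{\aCandidateJourney}$ ends in a trip segment prefix, enters the same trip at the same index), its source departure time is no later than that of $\prefix{\aCandidateJourney}$, and its intermediate transfers are all feasible: those inside $\aCandidateJourney_{\stopEvent}$ because $\aCandidateJourney$ is feasible, and those from $\stopEvent$ onward because $\aWitnessJourney$ is. Hence $\aWitnessJourney'$ strongly dominates $\prefix{\aCandidateJourney}$, which completes the contrapositive. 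Everything hinges on the ``no fewer trips, same role'' claim and the accompanying index bookkeeping (which must distinguish the two roles $\stopEvent$ can take); the rest of the argument is mechanical once that is in place.
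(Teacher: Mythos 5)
Your construction is exactly the paper's: take the last shared stop event~$\stopEvent$, and splice the candidate's $\stopEvent$-prefix onto the suffix of~$\aWitnessJourney$ after~$\stopEvent$; the forward direction is likewise dispatched as trivial. Where you go beyond the paper is in flagging the well-definedness of the concatenation, the trip count, and the role of~$\stopEvent$ (entry vs.\ exit) as the real content --- the paper settles all of this with ``by construction.'' One caveat on your justification of that crux: you invoke the observation that $\stopEvents(\aWitnessJourney)$ contains the entire $\stopEvent$-prefix of the candidate, but the paper states that property only for \emph{hook} witnesses, where the shared part is by definition a common prefix of both stop-event sequences. For the arbitrary witness you start from, it can fail --- e.g.\ a one-trip witness that boards the candidate's second trip directly at~$\destinationEvent$ contains~$\destinationEvent$ but neither~$\sourceEvent$ nor the origin event --- and in that case ``no fewer trips, same role'' is not automatic. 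So that sub-argument needs to be replaced by a direct case analysis on how~$\stopEvent$ is used in~$\aWitnessJourney$; the paper's own proof does not supply one either, so you are no worse off, but the gap you correctly identified is not yet closed by the reasoning you give.
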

\begin{proof}
	Let~$\aWitnessJourney$ be a non-hook witness for a prefix~$\prefix{\aCandidateJourney}$ of~$\aCandidateJourney$.
	We construct a hook witness~$\aJourney'$ that strongly dominates~$\prefix{\aCandidateJourney}$ as follows:
	Since~$\aWitnessJourney$ is not a hook witness, it must share at least one stop event with~$\prefix{\aCandidateJourney}$.
	Let~$\stopEvent$ be the last shared stop event, and let~$\aCandidateJourney_1$ and~$\aWitnessJourney_1$ be the~$\stopEvent$-prefixes of~$\prefix{\aCandidateJourney}$ and~$\aWitnessJourney$, respectively.
	Then there is a suffix~$\aWitnessJourney_2$ such that~$\aWitnessJourney=\aWitnessJourney_1\circ\aWitnessJourney_2$.
	Because~$\aWitnessJourney_1$ and~$\aCandidateJourney_1$ end with the same stop event, $\aJourney':=\aCandidateJourney_1\circ\aWitnessJourney_2$ is a valid and feasible journey (prefix).
	By construction, $\aJourney'$ is a hook witness and strongly dominates~$\prefix{\aCandidateJourney}$.
\end{proof}

A simple observation about hook witnesses is that they never use the last stop event of the journey (prefix) they strongly dominate:

\begin{lemma}
	\label{th:app:last-event}
	Let~$\prefix{\aJourney}$ be a journey prefix with stop event sequence~$\stopEvents(\prefix{\aJourney})=\langle\stopEvent_1,\dots,\stopEvent_k\rangle$, $\delayScenario$ a delay scenario, and~$\aWitnessJourney$ a hook witness that strongly dominates~$\prefix{\aJourney}$ in~$\delayScenario$.
	Then~$\stopEvent_k\notin\stopEvents(\aWitnessJourney)$.
\end{lemma}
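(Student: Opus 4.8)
The plan is a proof by contradiction: assume $\stopEvent_k\in\stopEvents(\aWitnessJourney)$ and show that $\aWitnessJourney$ can then dominate $\prefix{\aJourney}$ only weakly, never strongly.

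The first step is to nail down the shape of $\aWitnessJourney$. Since $\aWitnessJourney$ is a hook witness for $\prefix{\aJourney}$, its stop event sequence splits into a handle, which is a common prefix with $\stopEvents(\prefix{\aJourney})=\langle\stopEvent_1,\dots,\stopEvent_k\rangle$, and a hook, none of whose events appear in $\stopEvents(\prefix{\aJourney})$. By assumption $\stopEvent_k$ is shared, hence lies in the handle; but since it is the \emph{last} element of $\stopEvents(\prefix{\aJourney})$, a prefix of this sequence that contains $\stopEvent_k$ must be the whole sequence. So the handle equals $\langle\stopEvent_1,\dots,\stopEvent_k\rangle$ and $\stopEvents(\aWitnessJourney)=\langle\stopEvent_1,\dots,\stopEvent_k\rangle$ followed by the (possibly empty) hook; in particular $\aWitnessJourney$ copies $\prefix{\aJourney}$ up to and including $\stopEvent_k$, so it uses at least $|\prefix{\aJourney}|$ trips.

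The second step is a short case distinction on the type of $\prefix{\aJourney}$ and on whether the hook is empty. If $\prefix{\aJourney}$ is a proper journey, then $\stopEvent_k$ is an exit event, so a non-empty hook begins by entering a fresh trip; then $|\aWitnessJourney|>|\prefix{\aJourney}|$ and $\aWitnessJourney$ does not even weakly dominate. If $\prefix{\aJourney}$ ends with a trip segment prefix of a trip $\aTrip$, then $\stopEvent_k$ is the entry event of $\aTrip$, so a non-empty hook continues inside $\aTrip$ and then exits it; consequently $\aWitnessJourney$ does not end with a trip segment prefix of $\aTrip$, and by the dominance rules for prefixes that end with a trip segment prefix this already precludes domination of $\prefix{\aJourney}$. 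In the remaining two cases the hook is empty, so $\stopEvents(\aWitnessJourney)=\stopEvents(\prefix{\aJourney})$; then $|\aWitnessJourney|=|\prefix{\aJourney}|$, and the shared stop event sequence forces either the same entry index into the last trip (entry-event case) or the same arrival time -- because both final transfers are the shortest path from $\astop(\stopEvent_k)$ to the common target of the two journeys. In every case no criterion of $\aWitnessJourney$ is strictly better, contradicting strong domination.

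I expect the main work to lie not in the ideas but in the bookkeeping around the dominance relation for partial journeys: in each branch one must check that the type of $\aWitnessJourney$ (proper journey versus ending in a trip segment prefix) is compatible with that of $\prefix{\aJourney}$, and invoke the catch-all clause ``in all other cases, $\prefix{\aJourney'}$ does not strongly or weakly dominate $\prefix{\aJourney}$'' whenever they clash. The one genuinely non-syntactic ingredient is the equal-arrival-time argument in the empty-hook/proper-journey case, which relies on transfers being shortest paths uniquely determined by their endpoints, so that $\aWitnessJourney$ and $\prefix{\aJourney}$ necessarily use the same final transfer.
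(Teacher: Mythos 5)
Your proof is correct and rests on the same core idea as the paper's: once $\stopEvent_k$ is shared, the handle must be all of~$\stopEvents(\prefix{\aJourney})$, so the witness either coincides with~$\prefix{\aJourney}$ (which cannot strongly dominate itself) or strictly extends it (which, by the trip count or the type-compatibility rules for partial-journey dominance, cannot dominate at all). The paper simply asserts in one clause that the hook must be empty and concludes~$\aWitnessJourney=\prefix{\aJourney}$; your case analysis of non-empty hooks for the two prefix types supplies exactly the justification that assertion needs, so your write-up is, if anything, more complete than the published proof.
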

\begin{proof}   
	Assume that~$\stopEvent_k\in\stopEvents(\aWitnessJourney)$.
	Because~$\aWitnessJourney$ is a hook witness, this implies that the non-shared suffix of~$\aWitnessJourney$ is empty, and thus~$\aWitnessJourney=\prefix{\aJourney}$.
	This contradicts the fact that~$\aWitnessJourney$ strongly dominates~$\prefix{\aJourney}$.
\end{proof}

\subparagraph*{Parameterized Scenarios.}
Let~$\aCandidateJourney=\candidateSequence$ be a candidate and~$\aWitnessJourney$ a hook witness.
In order to evaluate whether~$\aWitnessJourney$ strongly dominates a prefix of~$\aCandidateJourney$ in any delay scenario, we can assume the best case for~$\sourceEvent$, $\destinationEvent$ and~$\targetEvent$:
If~$\aWitnessJourney$ includes~$\targetEvent$, then~$\aWitnessJourney$ is identical to~$\aCandidateJourney$, which cannot strongly dominate itself.
If it includes~$\destinationEvent$, it must also include~$\originEvent$ and thus have the same intermediate transfer as~$\aCandidateJourney$.
If it includes~$\sourceEvent$, then it has the same initial transfer.
Thus, changing the delay of~$\sourceEvent$ or~$\destinationEvent$ to make~$\aWitnessJourney$ infeasible will also make~$\aCandidateJourney$ infeasible.
We formalize this by defining the \emph{parameterized best-case scenario}~$\parameterizedBestDelayScenario(\prefix{\aCandidateJourney},\delay)$ for a candidate prefix~$\prefix{\aCandidateJourney}$ with origin event~$\originEvent$ and a delay~$\delay$.
It is identical to the best-case scenario~$\bestDelayScenario(\prefix{\aCandidateJourney})$, except that the arrival delay of~$\originEvent$ is~$\delay$.
Note that~$\bestDelayScenario(\prefix{\aCandidateJourney})=\parameterizedBestDelayScenario(\prefix{\aCandidateJourney},0)$.

It is only necessary to consider parameterized scenarios in which the candidate is feasible.
The \emph{slack}~$\slack(\originEvent,\destinationEvent)$ of an intermediate transfer between two stop events~$\originEvent$ and~$\destinationEvent$ is the waiting time at~$\astop(\destinationEvent)$ before~$\destinationEvent$ departs, assuming both events are punctual.
Note that the slack may be negative; in this case, the transfer is only feasible if~$\destinationEvent$ is sufficiently delayed.
The~\emph{feasibility limit}
\[\feasibilityLimit(\aCandidateJourney):=\min(0,\slack(\originEvent,\destinationEvent))+\maxDelay\]
of the candidate~$\aCandidateJourney$ is the maximum delay~$\delay\leq\maxDelay$ such that~$\aCandidateJourney$ is feasible in~$\parameterizedBestDelayScenario(\aCandidateJourney,\delay)$.
Lemma~\ref{th:app:all-best-case} shows that it is sufficient to consider parameterized scenarios~$\parameterizedBestDelayScenario(\aCandidateJourney,\delay)$ with~$\delay\in[0,\feasibilityLimit(\aCandidateJourney)]$.

\begin{lemma}
	\label{th:app:all-best-case}
	A candidate~$\aCandidateJourney=\candidateSequence$ is prefix-optimal in at least one delay scenario iff there is a~$\delay\in[0,\feasibilityLimit(\aCandidateJourney)]$ such that~$\aCandidateJourney$ is prefix-optimal in~$\parameterizedBestDelayScenario(\aCandidateJourney,\delay)$.
\end{lemma}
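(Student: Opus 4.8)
The plan is to prove the two directions separately; the ``if'' direction is immediate, and the ``only if'' direction carries all the weight. Throughout I use that a journey is prefix-optimal exactly when all of its standard prefixes are Pareto-optimal, so it suffices to reason about $\sourcePrefix,\originPrefix,\destinationPrefix$ and $\aCandidateJourney$ itself.

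For ``if'', suppose $\aCandidateJourney$ is prefix-optimal in $\parameterizedBestDelayScenario(\aCandidateJourney,\delay)$ for some $\delay\in[0,\feasibilityLimit(\aCandidateJourney)]$. Since $\feasibilityLimit(\aCandidateJourney)=\min(0,\slack(\originEvent,\destinationEvent))+\maxDelay\le\maxDelay$, we have $\delay\le\maxDelay$, hence $\parameterizedBestDelayScenario(\aCandidateJourney,\delay)$ assigns every stop event a delay in $\{0,\dots,\maxDelay\}$ and is thus a genuine delay scenario; the claim follows.

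For ``only if'', fix a delay scenario $\delayScenario$ in which $\aCandidateJourney$ is prefix-optimal. Then $\aCandidateJourney$ is feasible in $\delayScenario$, so its intermediate transfer is feasible, i.e.\ $\arrivalTime(\delayScenario,\originEvent)+\transferTime(\originVertex,\destinationVertex)\le\departureTime(\delayScenario,\destinationEvent)$. Setting $\delay:=\arrivalDelay(\originEvent)$ and using $\arrivalTime(\delayScenario,\originEvent)=\arrivalTime(\originEvent)+\delay$ and $\departureTime(\delayScenario,\destinationEvent)=\departureTime(\destinationEvent)+\departureDelay(\destinationEvent)$, this rearranges to $\delay\le\slack(\originEvent,\destinationEvent)+\departureDelay(\destinationEvent)\le\slack(\originEvent,\destinationEvent)+\maxDelay$; together with the trivial bound $\delay\le\maxDelay$ this gives $\delay\in[0,\feasibilityLimit(\aCandidateJourney)]$. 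It then remains to show that $\aCandidateJourney$ is prefix-optimal in $\delayScenario':=\parameterizedBestDelayScenario(\aCandidateJourney,\delay)$. I would argue by contradiction: if not, then by Lemma~\ref{th:app:hook} some standard prefix $\prefix{\aCandidateJourney}$ is strongly dominated in $\delayScenario'$ by a hook witness $\aWitnessJourney$, whose handle, by Lemma~\ref{th:app:last-event}, is the $\stopEvent^*$-prefix of $\aCandidateJourney$ for the last shared event $\stopEvent^*$ and in particular omits the last event of $\prefix{\aCandidateJourney}$. I would then transform $\aWitnessJourney$ into a hook witness that strongly dominates some standard prefix of $\aCandidateJourney$ in $\delayScenario$, contradicting prefix-optimality in $\delayScenario$.

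The transformation rests on comparing the two scenarios: $\delayScenario'$ agrees with $\delayScenario$ on the arrival delay of $\originEvent$ (both equal $\delay$), is best-case on $\sourceEvent,\destinationEvent,\targetEvent$, and worst-case on every other stop event. On the handle, $\aWitnessJourney$ uses exactly the arrival and departure times that $\aCandidateJourney$ uses, so those events are ``neutral''; on hook events outside $\stopEvents(\aCandidateJourney)$, $\delayScenario'$ is worst-case, so passing to $\delayScenario$ can only make $\aWitnessJourney$ arrive earlier and its intermediate transfers easier to satisfy, while $\prefix{\aCandidateJourney}$ (evaluated with the best case at its last exit in $\delayScenario'$) cannot arrive earlier or depart later in $\delayScenario'$ than in $\delayScenario$. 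Hence, as long as the hook of $\aWitnessJourney$ does not enter or leave one of $\aCandidateJourney$'s own trips at one of $\sourceEvent,\destinationEvent,\targetEvent$, the strong domination transfers verbatim to $\delayScenario$ by routine monotonicity with $\leqEval$. The main obstacle, and the only remaining case, is precisely when the hook does touch one of those events: they are best-case in $\delayScenario'$ but possibly worse in $\delayScenario$, so feasibility of $\aWitnessJourney$ need not survive. I would resolve this by an \emph{extension argument}: since such an event lies on one of the two trips of $\aCandidateJourney$ (with natural exits $\originEvent$ and $\targetEvent$), replace the offending part of $\aWitnessJourney$ by riding $\aCandidateJourney$'s trip segment to its natural exit and then reusing $\aCandidateJourney$'s remaining suffix; this yields a hook witness for a later standard prefix of $\aCandidateJourney$ (possibly $\aCandidateJourney$ itself) with no more trips and no later arrival, and its feasibility in $\delayScenario$ for $\departureTime(\delayScenario,\aCandidateJourney)$ follows from feasibility of $\aCandidateJourney$ in $\delayScenario$, because the only delay on the shared part that changes — that of $\sourceEvent$ or $\destinationEvent$ — can only decrease when passing from $\delayScenario'$ to $\delayScenario$. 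I would organise the remaining details as a short case distinction over which standard prefix $\prefix{\aCandidateJourney}$ is and over which of $\aCandidateJourney$'s events the hook uses and whether it enters or exits the corresponding trip there.
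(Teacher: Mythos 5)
Your overall strategy is the paper's: the ``if'' direction is immediate because $\feasibilityLimit(\aCandidateJourney)\leq\maxDelay$, and for ``only if'' you fix $\delayScenario$, set $\delay:=\arrivalDelay(\originEvent)$ (correctly verifying $\delay\in[0,\feasibilityLimit(\aCandidateJourney)]$ from feasibility of the intermediate transfer), take a hook witness $\aWitnessJourney$ in $\delayScenario':=\parameterizedBestDelayScenario(\aCandidateJourney,\delay)$ via Lemma~\ref{th:app:hook}, and transport the domination to $\delayScenario$ by monotonicity. Up to that point the argument is sound and matches the paper, which phrases it contrapositively and routes the monotonicity through an auxiliary scenario and the $\leqDom$ order, but that difference is cosmetic.

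The gap is in your treatment of the ``main obstacle''. First, the case you isolate cannot occur as stated: by definition, the hook of a hook witness contains no events of $\stopEvents(\aCandidateJourney)$, so the hook never ``enters or leaves one of $\aCandidateJourney$'s trips at one of $\sourceEvent,\destinationEvent,\targetEvent$''; any use of those events lies in the handle. Second, the extension argument you propose for this case does not produce a witness: riding $\aCandidateJourney$'s trip segment to its ``natural exit'' and then reusing $\aCandidateJourney$'s remaining suffix reconstructs (a standard prefix of) $\aCandidateJourney$ itself, which arrives exactly when $\aCandidateJourney$ does and uses as many trips, so it cannot strongly dominate any prefix of $\aCandidateJourney$. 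Fortunately, no transformation is needed at all. Because $\aWitnessJourney$ is a hook witness, if it uses $\sourceEvent$ then its departure is $\aCandidateJourney$'s departure, and if it uses $\destinationEvent$ then the shared event preceding it is $\originEvent$, so its transfer into $\destinationEvent$ is exactly $\aCandidateJourney$'s intermediate transfer; both are feasible in $\delayScenario$ because $\aCandidateJourney$ is feasible there (this is precisely where the choice $\delay=\arrivalDelay(\originEvent)$ is used), and $\targetEvent$ is never used by Lemma~\ref{th:app:last-event}. Every other event of $\aWitnessJourney$ is a non-shared event, for which $\delayScenario'$ assumes the worst case, so passing to $\delayScenario$ can only make $\aWitnessJourney$ feasible earlier and never later. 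Hence $\aWitnessJourney$ itself remains feasible, arrives no later, and still strongly dominates $\prefix{\aCandidateJourney}$ in $\delayScenario$ --- which is exactly how the paper closes the argument. With that replacement for your final case analysis, the proof is complete.
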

\begin{proof}
	Assume that for every~$\delay\in[0,\feasibilityLimit(\aCandidateJourney)]$, a prefix of~$\aCandidateJourney$ is strongly dominated by a witness in~$\parameterizedBestDelayScenario(\aCandidateJourney,\delay)$.
	Let~$\delayScenario$ be a delay scenario.
	By Lemma~\ref{th:app:hook}, there is a prefix~$\prefix{\aCandidateJourney}$ of~$\aCandidateJourney$ that is strongly dominated by a hook witness~$\aWitnessJourney$ in~$\delayScenario^\textsf{o}:=\parameterizedBestDelayScenario(\aCandidateJourney,\arrivalDelay(\originEvent))$.
	We show that~$\aWitnessJourney$ is feasible and strongly dominates~$\prefix{\aCandidateJourney}$ in~$\delayScenario$.
	Consider the delay scenario~$\delayScenario^\text{max}$ with
	\begin{align*}
		\departureDelay^\text{max}(\stopEvent)&=\min(\departureDelay^\textsf{o}(\stopEvent),\departureDelay(\stopEvent)),\\ \arrivalDelay^\text{max}(\stopEvent)&=\max(\arrivalDelay^\textsf{o}(\stopEvent),\arrivalDelay(\stopEvent))
	\end{align*}
	for each stop event~$\stopEvent\in\stopEvents$.
	This scenario differs from~$\delayScenario^\textsf{o}$ only in the arrival delay of~$\targetEvent$ and in the departure delays of~$\sourceEvent$ and~$\destinationEvent$.
	By Lemma~\ref{th:app:last-event}, $\aWitnessJourney$ cannot use~$\targetEvent$.
	The departure delays of~$\sourceEvent$ and~$\destinationEvent$, as well as the arrival delay of~$\originEvent$, are identical in~$\delayScenario$ and~$\delayScenario^\text{max}$.
	Accordingly, the departure and intermediate transfer of~$\aCandidateJourney$ are feasible in~$\delayScenario^\text{max}$.        
	If~$\aWitnessJourney$ contains an intermediate transfer between some stop event~$\stopEvent'_\originIndex$ and~$\destinationEvent$, then~$\stopEvent'_\originIndex=\originEvent$ because~$\aWitnessJourney$ is a hook witness.
	Thus, the intermediate transfer is feasible in~$\delayScenario^\text{max}$.
	Likewise, if~$\aWitnessJourney$ uses~$\sourceEvent$, then its departure is feasible in~$\delayScenario^\text{max}$.
	Therefore, $\aWitnessJourney$ is feasible in~$\delayScenario^\text{max}$ and its arrival time remains unchanged from~$\delayScenario^\textsf{o}$.    
	Accordingly, $\aWitnessJourney$ strongly dominates~$\prefix{\aCandidateJourney}$ in the virtual delay scenario~$(\delayScenario^\textsf{o},\delayScenario^\text{max})$.
	Since~$\delayScenario^\text{max}\geqEval\delayScenario^\textsf{o}$ and~$\delayScenario\leqEval\delayScenario^\text{max}$, this is still the case in~$(\delayScenario^\text{max},\delayScenario)\geqDom(\delayScenario^\textsf{o},\delayScenario^\text{max})$.
	Since~$\delayScenario$ and~$\delayScenario^\text{max}$ are equivalent for~$\aCandidateJourney$, this implies that $\aWitnessJourney$ strongly dominates~$\prefix{\aCandidateJourney}$ in~$(\delayScenario,\delayScenario)$, which is equivalent to~$\delayScenario$.
\end{proof}

\subparagraph*{Hook Witness Classification.}
\begin{figure}
	\centering
	\begin{tabular}{c@{\hskip 1cm}c}
		\begin{subfigure}{0.4\textwidth}
			\centering%
			\begin{tikzpicture}
\draw [draw=none] (-0.35,-0.35) rectangle (5.35,1.35);

\node (es) at (0.00, 0.00) {};%
\node (es2) at (1.00, 1.00) {};%
\node (eo) at (2.00, 0.00) {};%
\node (ed) at (3.00, 0.00) {};%
\node (et) at (5.00, 0.00) {};%

\draw [edge, rounded corners=15]  (es) -- (0.00,1.00) -- (es2);
\draw [edgeFaded]  (eo) -- (ed);

\node [stop,fit=(es)] {};%
\node [stop,fit=(es2)] {};%
\node [stopFaded,fit=(eo)] {};%
\node [stopFaded,fit=(ed)] {};%
\node [stopFaded,fit=(et)] {};%

\draw [tripColor1,routeNoArrow] (es2) -- (es);
\draw [tripColor1!50,route] (es) -- (eo);
\draw [tripColor2!50,route] (ed) -- (et);

\node (es_v) at (es) [vertex,draw=tripColor1,fill=tripColor1!15] {\gs};%
\node (es2_v) at (es2) [vertex,draw=tripColor1,fill=tripColor1!15] {\gs};%
\node (eo_v) at (eo) [vertex,draw=tripColor1!50,fill=tripColor1!15] {\gs};%
\node (ed_v) at (ed) [vertex,draw=tripColor2!50,fill=tripColor2!15] {\gs};%
\node (et_v) at (et) [vertex,draw=tripColor2!50,fill=tripColor2!15] {\gs};%

\node at (es) [text=nodeColor] {\small{$\sourceEvent$}};%
\node at (eo) [text=nodeColorFaded] {\small{$\originEvent$}};%
\node at (ed) [text=nodeColorFaded] {\small{$\destinationEvent$}};%
\node at (et) [text=nodeColorFaded] {\small{$\targetEvent$}};%
\end{tikzpicture}%
			\caption{A $(\bot,\sourceEvent)$-witness (full).}%
			\label{fig:witnessTypes:dot-source}%
		\end{subfigure}&
		\begin{subfigure}{0.4\textwidth}
			\centering
			\begin{tikzpicture}
\draw [draw=none] (-0.35,-0.35) rectangle (5.35,1.35);

\node (es) at (0.00, 0.00) {};%
\node (eo) at (2.00, 0.00) {};%
\node (es2) at (1.00, 1.00) {};%
\node (eo2) at (2.00, 1.00) {};%
\node (ed) at (3.00, 0.00) {};%
\node (et) at (5.00, 0.00) {};%

\draw [edge]  (eo) -- (ed);
\draw [edge, rounded corners=15]  (es) -- (0.00, 1.00) -- (es2);
\draw [edge, rounded corners=15]  (eo2) -- (3.00, 1.00) -- (ed);

\node [stop,fit=(es)] {};%
\node [stop,fit=(eo)] {};%
\node [stop,fit=(es2)] {};%
\node [stop,fit=(eo2)] {};%
\node [stop,fit=(ed)] {};%
\node [stopFaded,fit=(et)] {};%

\draw [tripColor1,route] (es) -- (eo);
\draw [tripColor2!50,route] (ed) -- (et);
\draw [tripColor3,route] (es2) -- (eo2);

\node (es_v) at (es) [vertex,draw=tripColor1,fill=tripColor1!15] {\gs};%
\node (eo_v) at (eo) [vertex,draw=tripColor1,fill=tripColor1!15] {\gs};%
\node (ed_v) at (ed) [vertex,draw=tripColor2!50,fill=tripColor2!15] {\gs};%
\node (et_v) at (et) [vertex,draw=tripColor2!50,fill=tripColor2!15] {\gs};%
\node (es2_v) at (es2) [vertex,draw=tripColor3,fill=tripColor3!15] {\gs};%
\node (eo2_v) at (eo2) [vertex,draw=tripColor3,fill=tripColor3!15] {\gs};%

\node at (es) [text=nodeColor!100] {\small{$\sourceEvent$}};%
\node at (eo) [text=nodeColor!100] {\small{$\originEvent$}};%
\node at (ed) [text=nodeColorFaded] {\small{$\destinationEvent$}};%
\node at (et) [text=nodeColorFaded] {\small{$\targetEvent$}};%
\end{tikzpicture}%
			\caption{A $(\bot,\originEvent)$-witness (join).}%
			\label{fig:witnessTypes:dot-origin}%
		\end{subfigure}\\
		\newline
		\begin{subfigure}{0.4\textwidth}
            \vspace{5pt}
			\centering
			\begin{tikzpicture}
\draw [draw=none] (-0.35,-0.35) rectangle (5.35,1.35);

\node (es) at (0.00, 0.00) {};%
\node (eo) at (2.00, 0.00) {};%
\node (es2) at (1.00, 1.00) {};%
\node (eo2) at (2.00, 1.00) {};%
\node (ed) at (3.00, 0.00) {};%
\node (ed2) at (3.00, 1.00) {};%
\node (et) at (5.00, 0.00) {};%

\draw [edge]  (eo) -- (ed);
\draw [edge, rounded corners=15]  (es) -- (0.00, 1.00) -- (es2);
\draw [edge]  (eo2) -- (ed2);

\node [stop,fit=(es)] {};%
\node [stop,fit=(eo)] {};%
\node [stop,fit=(es2)] {};%
\node [stop,fit=(eo2)] {};%
\node [stop,fit=(ed)] {};%
\node [stop,fit=(ed2)] {};%
\node [stopFaded,fit=(et)] {};%

\draw [tripColor1,route] (es) -- (eo);
\draw [tripColor2,routeNoArrow] (ed2) -- (ed);
\draw [tripColor2!50,route] (ed) -- (et);
\draw [tripColor3,route] (es2) -- (eo2);

\node (es_v) at (es) [vertex,draw=tripColor1,fill=tripColor1!15] {\gs};%
\node (eo_v) at (eo) [vertex,draw=tripColor1,fill=tripColor1!15] {\gs};%
\node (ed2_v) at (ed2) [vertex,draw=tripColor2,fill=tripColor2!15] {\gs};%
\node (ed_v) at (ed) [vertex,draw=tripColor2,fill=tripColor2!15] {\gs};%
\node (et_v) at (et) [vertex,draw=tripColor2!50,fill=tripColor2!15] {\gs};%
\node (es2_v) at (es2) [vertex,draw=tripColor3,fill=tripColor3!15] {\gs};%
\node (eo2_v) at (eo2) [vertex,draw=tripColor3,fill=tripColor3!15] {\gs};%

\node at (es) [text=nodeColor!100] {\small{$\sourceEvent$}};%
\node at (eo) [text=nodeColor!100] {\small{$\originEvent$}};%
\node at (ed) [text=nodeColor!100] {\small{$\destinationEvent$}};%
\node at (et) [text=nodeColorFaded] {\small{$\targetEvent$}};%
\end{tikzpicture}%
			\caption{A $(\bot,\destinationEvent)$-witness (full).}%
			\label{fig:witnessTypes:dot-destination}%
		\end{subfigure}&
		\begin{subfigure}{0.4\textwidth}
            \vspace{5pt}
			\centering
			\begin{tikzpicture}
\draw [draw=none] (-0.35,-0.35) rectangle (5.35,1.35);

\node (es) at (0.00, 0.00) {};%
\node (eo) at (2.00, 0.00) {};%
\node (es2) at (1.00, 1.00) {};%
\node (eo2) at (2.00, 1.00) {};%
\node (ed) at (3.00, 0.00) {};%
\node (ed2) at (3.00, 1.00) {};%
\node (et) at (5.00, 0.00) {};%
\node (et2) at (4.00, 1.00) {};%

\draw [edge]  (eo) -- (ed);
\draw [edge, rounded corners=15]  (es) -- (0.00, 1.00) -- (es2);
\draw [edge]  (eo2) -- (ed2);
\draw [edge, rounded corners=15]  (et2) -- (5.00, 1.00) -- (et);

\node [stop,fit=(es)] {};%
\node [stop,fit=(eo)] {};%
\node [stop,fit=(es2)] {};%
\node [stop,fit=(eo2)] {};%
\node [stop,fit=(ed)] {};%
\node [stop,fit=(ed2)] {};%
\node [stop,fit=(et)] {};%
\node [stop,fit=(et2)] {};%

\draw [tripColor1,route] (es) -- (eo);
\draw [tripColor2,route] (ed) -- (et);
\draw [tripColor3,route] (es2) -- (eo2);
\draw [tripColor4,route] (ed2) -- (et2);

\node (es_v) at (es) [vertex,draw=tripColor1,fill=tripColor1!15] {\gs};%
\node (eo_v) at (eo) [vertex,draw=tripColor1,fill=tripColor1!15] {\gs};%
\node (ed_v) at (ed) [vertex,draw=tripColor2,fill=tripColor2!15] {\gs};%
\node (et_v) at (et) [vertex,draw=tripColor2,fill=tripColor2!15] {\gs};%
\node (es2_v) at (es2) [vertex,draw=tripColor3,fill=tripColor3!15] {\gs};%
\node (eo2_v) at (eo2) [vertex,draw=tripColor3,fill=tripColor3!15] {\gs};%
\node (ed2_v) at (ed2) [vertex,draw=tripColor4,fill=tripColor4!15] {\gs};%
\node (et2_v) at (et2) [vertex,draw=tripColor4,fill=tripColor4!15] {\gs};%

\node at (es) [text=nodeColor!100] {\small{$\sourceEvent$}};%
\node at (eo) [text=nodeColor!100] {\small{$\originEvent$}};%
\node at (ed) [text=nodeColor!100] {\small{$\destinationEvent$}};%
\node at (et) [text=nodeColor!100] {\small{$\targetEvent$}};%
\end{tikzpicture}%
			\caption{A $(\bot,\targetEvent)$-witness (full).}%
			\label{fig:witnessTypes:dot-target}%
		\end{subfigure}\\
		\newline
		\begin{subfigure}{0.4\textwidth}
            \vspace{5pt}
			\centering
			\begin{tikzpicture}
\draw [draw=none] (-0.35,-0.35) rectangle (5.35,1.35);

\node (es) at (0.00, 0.00) {};%
\node (eo) at (2.00, 0.00) {};%
\node (eo2) at (1.00, 0.00) {};%
\node (ed) at (3.00, 0.00) {};%
\node (et) at (5.00, 0.00) {};%

\draw [edge]  (eo) -- (ed);
\draw [edge, rounded corners=15]  (eo2) -- (1.00, 1.00) -- (3.00, 1.00) -- (ed);

\node [stop,fit=(es)] {};%
\node [stop,fit=(eo)] {};%
\node [stop,fit=(eo2)] {};%
\node [stop,fit=(ed)] {};%
\node [stopFaded,fit=(et)] {};%

\draw [tripColor1,route] (es) -- (eo2) -- (eo);
\draw [tripColor2!50,route] (ed) -- (et);

\node (es_v) at (es) [vertex,draw=tripColor1,fill=tripColor1!15] {\gs};%
\node (eo_v) at (eo) [vertex,draw=tripColor1,fill=tripColor1!15] {\gs};%
\node (ed_v) at (ed) [vertex,draw=tripColor2!50,fill=tripColor2!15] {\gs};%
\node (et_v) at (et) [vertex,draw=tripColor2!50,fill=tripColor2!15] {\gs};%
\node (eo2_v) at (eo2) [vertex,draw=tripColor1,fill=tripColor1!15] {\gs};%

\node at (es) [text=nodeColor!100] {\small{$\sourceEvent$}};%
\node at (eo) [text=nodeColor!100] {\small{$\originEvent$}};%
\node at (ed) [text=nodeColorFaded] {\small{$\destinationEvent$}};%
\node at (et) [text=nodeColorFaded] {\small{$\targetEvent$}};%
\end{tikzpicture}%
			\caption{A $(\sourceEvent,\originEvent)$-witness (join).}%
			\label{fig:witnessTypes:source-origin}%
		\end{subfigure}&
		\begin{subfigure}{0.4\textwidth}
            \vspace{5pt}
			\centering
			\begin{tikzpicture}
\draw [draw=none] (-0.35,-0.35) rectangle (5.35,1.35);

\node (es) at (0.00, 0.00) {};%
\node (eo) at (2.00, 0.00) {};%
\node (eo2) at (1.00, 0.00) {};%
\node (ed) at (3.00, 0.00) {};%
\node (ed2) at (3.00, 1.00) {};%
\node (et) at (5.00, 0.00) {};%

\draw [edge]  (eo) -- (ed);
\draw [edge, rounded corners=15]  (eo2) -- (1.00, 1.00) -- (ed2);

\node [stop,fit=(es)] {};%
\node [stop,fit=(eo)] {};%
\node [stop,fit=(eo2)] {};%
\node [stop,fit=(ed)] {};%
\node [stop,fit=(ed2)] {};%
\node [stopFaded,fit=(et)] {};%

\draw [tripColor1,route] (es) -- (eo2) -- (eo);
\draw [tripColor2,routeNoArrow] (ed2) -- (ed);
\draw [tripColor2!50,route] (ed) -- (et);

\node (es_v) at (es) [vertex,draw=tripColor1,fill=tripColor1!15] {\gs};%
\node (eo_v) at (eo) [vertex,draw=tripColor1,fill=tripColor1!15] {\gs};%
\node (ed2_v) at (ed2) [vertex,draw=tripColor2,fill=tripColor2!15] {\gs};%
\node (ed_v) at (ed) [vertex,draw=tripColor2,fill=tripColor2!15] {\gs};%
\node (et_v) at (et) [vertex,draw=tripColor2!50,fill=tripColor2!15] {\gs};%
\node (eo2_v) at (eo2) [vertex,draw=tripColor1,fill=tripColor1!15] {\gs};%

\node at (es) [text=nodeColor!100] {\small{$\sourceEvent$}};%
\node at (eo) [text=nodeColor!100] {\small{$\originEvent$}};%
\node at (ed) [text=nodeColor!100] {\small{$\destinationEvent$}};%
\node at (et) [text=nodeColorFaded] {\small{$\targetEvent$}};%
\end{tikzpicture}%
			\caption{A $(\sourceEvent,\destinationEvent)$-witness (full).}%
			\label{fig:witnessTypes:source-destination}%
		\end{subfigure}\\
		\newline
		\begin{subfigure}{0.4\textwidth}
            \vspace{5pt}
			\centering
			\begin{tikzpicture}
\draw [draw=none] (-0.35,-0.35) rectangle (5.35,1.35);

\node (es) at (0.00, 0.00) {};%
\node (eo) at (2.00, 0.00) {};%
\node (eo2) at (1.00, 0.00) {};%
\node (ed) at (3.00, 0.00) {};%
\node (ed2) at (3.00, 1.00) {};%
\node (et) at (5.00, 0.00) {};%
\node (et2) at (4.00, 1.00) {};%

\draw [edge]  (eo) -- (ed);
\draw [edge, rounded corners=15]  (eo2) -- (1.00, 1.00) -- (ed2);
\draw [edge, rounded corners=15]  (et2) -- (5.00, 1.00) -- (et);

\node [stop,fit=(es)] {};%
\node [stop,fit=(eo)] {};%
\node [stop,fit=(eo2)] {};%
\node [stop,fit=(ed)] {};%
\node [stop,fit=(ed2)] {};%
\node [stop,fit=(et)] {};%
\node [stop,fit=(et2)] {};%

\draw [tripColor1,route] (es) -- (eo2) -- (eo);
\draw [tripColor2,route] (ed) -- (et);
\draw [tripColor4,route] (ed2) -- (et2);

\node (es_v) at (es) [vertex,draw=tripColor1,fill=tripColor1!15] {\gs};%
\node (eo_v) at (eo) [vertex,draw=tripColor1,fill=tripColor1!15] {\gs};%
\node (ed_v) at (ed) [vertex,draw=tripColor2,fill=tripColor2!15] {\gs};%
\node (et_v) at (et) [vertex,draw=tripColor2,fill=tripColor2!15] {\gs};%
\node (eo2_v) at (eo2) [vertex,draw=tripColor1,fill=tripColor1!15] {\gs};%
\node (ed2_v) at (ed2) [vertex,draw=tripColor4,fill=tripColor4!15] {\gs};%
\node (et2_v) at (et2) [vertex,draw=tripColor4,fill=tripColor4!15] {\gs};%

\node at (es) [text=nodeColor!100] {\small{$\sourceEvent$}};%
\node at (eo) [text=nodeColor!100] {\small{$\originEvent$}};%
\node at (ed) [text=nodeColor!100] {\small{$\destinationEvent$}};%
\node at (et) [text=nodeColor!100] {\small{$\targetEvent$}};%
\end{tikzpicture}%
			\caption{A $(\sourceEvent,\targetEvent)$-witness (full).}%
			\label{fig:witnessTypes:source-target}%
		\end{subfigure}&
		\begin{subfigure}{0.4\textwidth}
            \vspace{5pt}
			\centering
			\begin{tikzpicture}
\draw [draw=none] (-0.35,-0.35) rectangle (5.35,1.35);

\node (es) at (0.00, 0.00) {};%
\node (eo) at (2.00, 0.00) {};%
\node (ed) at (3.00, 0.00) {};%
\node (ed2) at (3.00, 1.00) {};%
\node (et) at (5.00, 0.00) {};%

\draw [edge]  (eo) -- (ed);
\draw [edge, rounded corners=15]  (eo) -- (2.00, 1.00) -- (ed2);

\node [stop,fit=(es)] {};%
\node [stop,fit=(eo)] {};%
\node [stop,fit=(ed)] {};%
\node [stop,fit=(ed2)] {};%
\node [stopFaded,fit=(et)] {};%

\draw [tripColor1,route] (es) -- (eo);
\draw [tripColor2,routeNoArrow] (ed2) -- (ed);
\draw [tripColor2!50,route] (ed) -- (et);

\node (es_v) at (es) [vertex,draw=tripColor1,fill=tripColor1!15] {\gs};%
\node (eo_v) at (eo) [vertex,draw=tripColor1,fill=tripColor1!15] {\gs};%
\node (ed2_v) at (ed2) [vertex,draw=tripColor2,fill=tripColor2!15] {\gs};%
\node (ed_v) at (ed) [vertex,draw=tripColor2,fill=tripColor2!15] {\gs};%
\node (et_v) at (et) [vertex,draw=tripColor2!50,fill=tripColor2!15] {\gs};%

\node at (es) [text=nodeColor!100] {\small{$\sourceEvent$}};%
\node at (eo) [text=nodeColor!100] {\small{$\originEvent$}};%
\node at (ed) [text=nodeColor!100] {\small{$\destinationEvent$}};%
\node at (et) [text=nodeColorFaded] {\small{$\targetEvent$}};%
\end{tikzpicture}%
			\caption{A $(\originEvent,\destinationEvent)$-witness (split).}%
			\label{fig:witnessTypes:origin-destination}%
		\end{subfigure}\\
		\begin{subfigure}{0.4\textwidth}
            \vspace{5pt}
			\centering
			\begin{tikzpicture}
\draw [draw=none] (-0.35,-0.35) rectangle (5.35,1.35);

\node (es) at (0.00, 0.00) {};%
\node (eo) at (2.00, 0.00) {};%
\node (ed) at (3.00, 0.00) {};%
\node (ed2) at (3.00, 1.00) {};%
\node (et) at (5.00, 0.00) {};%
\node (et2) at (4.00, 1.00) {};%

\draw [edge]  (eo) -- (ed);
\draw [edge, rounded corners=15]  (eo) -- (2.00, 1.00) -- (ed2);
\draw [edge, rounded corners=15]  (et2) -- (5.00, 1.00) -- (et);

\node [stop,fit=(es)] {};%
\node [stop,fit=(eo)] {};%
\node [stop,fit=(ed)] {};%
\node [stop,fit=(ed2)] {};%
\node [stop,fit=(et)] {};%
\node [stop,fit=(et2)] {};%

\draw [tripColor1,route] (es) -- (eo);
\draw [tripColor2,route] (ed) -- (et);
\draw [tripColor4,route] (ed2) -- (et2);

\node (es_v) at (es) [vertex,draw=tripColor1,fill=tripColor1!15] {\gs};%
\node (eo_v) at (eo) [vertex,draw=tripColor1,fill=tripColor1!15] {\gs};%
\node (ed_v) at (ed) [vertex,draw=tripColor2,fill=tripColor2!15] {\gs};%
\node (et_v) at (et) [vertex,draw=tripColor2,fill=tripColor2!15] {\gs};%
\node (ed2_v) at (ed2) [vertex,draw=tripColor4,fill=tripColor4!15] {\gs};%
\node (et2_v) at (et2) [vertex,draw=tripColor4,fill=tripColor4!15] {\gs};%

\node at (es) [text=nodeColor!100] {\small{$\sourceEvent$}};%
\node at (eo) [text=nodeColor!100] {\small{$\originEvent$}};%
\node at (ed) [text=nodeColor!100] {\small{$\destinationEvent$}};%
\node at (et) [text=nodeColor!100] {\small{$\targetEvent$}};%
\end{tikzpicture}%
			\caption{A $(\originEvent,\targetEvent)$-witness (split).}%
			\label{fig:witnessTypes:origin-target}%
		\end{subfigure}&
		\begin{subfigure}{0.4\textwidth}
            \vspace{5pt}
			\centering
			\begin{tikzpicture}
\draw [draw=none] (-0.35,-0.35) rectangle (5.35,1.35);

\node (es) at (0.00, 0.00) {};%
\node (eo) at (2.00, 0.00) {};%
\node (ed) at (3.00, 0.00) {};%
\node (et) at (5.00, 0.00) {};%
\node (et2) at (4.00, 0.00) {};%

\draw [edge]  (eo) -- (ed);
\draw [edge, rounded corners=15]  (et2) -- (4.00, 1.00) -- (5.00, 1.00) -- (et);

\node [stop,fit=(es)] {};%
\node [stop,fit=(eo)] {};%
\node [stop,fit=(ed)] {};%
\node [stop,fit=(et)] {};%
\node [stop,fit=(et2)] {};%

\draw [tripColor1,route] (es) -- (eo);
\draw [tripColor2,route] (ed) -- (et2) -- (et);

\node (es_v) at (es) [vertex,draw=tripColor1,fill=tripColor1!15] {\gs};%
\node (eo_v) at (eo) [vertex,draw=tripColor1,fill=tripColor1!15] {\gs};%
\node (ed_v) at (ed) [vertex,draw=tripColor2,fill=tripColor2!15] {\gs};%
\node (et_v) at (et) [vertex,draw=tripColor2,fill=tripColor2!15] {\gs};%
\node (et2_v) at (et2) [vertex,draw=tripColor2,fill=tripColor2!15] {\gs};%

\node at (es) [text=nodeColor!100] {\small{$\sourceEvent$}};%
\node at (eo) [text=nodeColor!100] {\small{$\originEvent$}};%
\node at (ed) [text=nodeColor!100] {\small{$\destinationEvent$}};%
\node at (et) [text=nodeColor!100] {\small{$\targetEvent$}};%
\end{tikzpicture}%
			\caption{A $(\destinationEvent,\targetEvent)$-witness (full).}%
			\label{fig:witnessTypes:destination-target}%
		\end{subfigure}\\
	\end{tabular}
	\caption{
		Examples of the different hook witness types for a candidate~$\aCandidateJourney=\candidateSequence$.
		An~$(\stopEvent_a,\stopEvent_b)$-witness is a hook witness for the~$\stopEvent_b$-prefix of~$\aCandidateJourney$ whose shared prefix is the~$\stopEvent_a$-prefix of~$\aCandidateJourney$.
		If the shared prefix is empty, we write~$\stopEvent_a=\bot$.
        The suffix that follows after the witness rejoins the candidate is grayed out.
		Also listed are the types according to the join/split/full classification.
		Witnesses of the form~$(\cdot,\originEvent)$ are join witnesses, whereas witnesses of the form~$(\originEvent,\cdot)$ are split witnesses.
		All other hook witnesses are full witnesses.
	}
	\label{fig:witnessTypes}
\end{figure}

Lemmas~\ref{th:app:hook} and~\ref{th:app:all-best-case} significantly narrow the scope of the shortcut computation problem.
To determine whether a candidate~$\aCandidateJourney$ is prefix-optimal in at least one delay scenario, it is enough to consider hook witnesses and to keep the delays of all stop events except for the origin event~$\originEvent$ fixed.
To make further progress, we examine the effect that the delay of~$\originEvent$ has on different types of hook witnesses.
For this purpose, we introduce the following classification:
Given a candidate~$\aCandidateJourney=\candidateSequence$ and stop events~$\stopEvent_a,\stopEvent_b\in\stopEvents(\aCandidateJourney)$, we call a (partial) journey~$\aWitnessJourney$ an~$(\stopEvent_a,\stopEvent_b)$-witness if it is a hook witness for the~$\stopEvent_b$-prefix of~$\aCandidateJourney$ whose handle is the~$\stopEvent_a$-prefix of~$\aCandidateJourney$.
If~$\aWitnessJourney$ does not share any stop events with~$\aCandidateJourney$, we call it a~$(\bot,\stopEvent_b)$-witness.
Overall, this yields ten different hook witness types, which are illustrated in Figure~\ref{fig:witnessTypes}.

Consider a parameterized delay scenario~$\parameterizedBestDelayScenario(\aCandidateJourney,\delay)$ with origin delay~$\delay$.
The arrival time of the origin prefix~$\originPrefix=\originPrefixSequence$ increases with~$\delay$, whereas the arrival time of hook witnesses for~$\originPrefix$ is unaffected.
Accordingly, they only strongly dominate~$\aCandidateJourney$ if~$\delay$ is high enough.
We call these witnesses, which have the type~$(\bot,\originEvent)$ or~$(\sourceEvent,\originEvent)$, \emph{join} witnesses.
\emph{Split} witnesses are hook witnesses of the types~$(\originEvent,\destinationEvent)$ or~$(\originEvent,\targetEvent)$, whose handle ends with~$\originEvent$.
If~$\delay$ is too high, their intermediate transfer becomes infeasible.
All other hook witnesses are not affected by~$\delay$ and are therefore called~\emph{full witnesses}.
Lemma~\ref{th:app:group-dominance} characterizes the effect of each hook witness group on the values of~$\delay$ for which~$\aCandidateJourney$ is prefix-optimal:
If there is a full witness that strongly dominates~$\aCandidateJourney$ for any value of~$\delay$, then it does so for all values and~$\aCandidateJourney$ is not needed.
Otherwise, join witnesses establish an upper bound on~$\delay$, whereas split witnesses establish a lower bound.

\begin{lemma}
	\label{th:app:group-dominance}
	Let~$\aCandidateJourney=\candidateSequence$ be a candidate and~$\delay\in[0,\feasibilityLimit(\aCandidateJourney)]$ a delay such that a prefix~$\prefix{\aCandidateJourney}$ of~$\aCandidateJourney$ is strongly dominated by a witness~$\aWitnessJourney$ in~$\parameterizedBestDelayScenario(\aCandidateJourney,\delay)$.
	Then~$\aWitnessJourney$ strongly dominates~$\prefix{\aCandidateJourney}$ in every delay scenario~$\parameterizedBestDelayScenario(\aCandidateJourney,\delay')$ with
	\[\delay' \in\begin{cases}
		[0,\feasibilityLimit(\aCandidateJourney)] & \text{ if }\aWitnessJourney \text{ is a full witness,}\\
		[\delay,\feasibilityLimit(\aCandidateJourney)] & \text{ if }\aWitnessJourney \text{ is a join witness,} \\
		[0,\delay] & \text{ if }\aWitnessJourney \text{ is a split witness.} \\
	\end{cases}\]
\end{lemma}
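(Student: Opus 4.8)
The plan is to exploit that $\parameterizedBestDelayScenario(\aCandidateJourney,\delay)$ and $\parameterizedBestDelayScenario(\aCandidateJourney,\delay')$ are identical except for the arrival delay of the origin event $\originEvent$, which is $\delay$ in one scenario and $\delay'$ in the other. By Lemma~\ref{th:app:hook} we may assume $\aWitnessJourney$ is a hook witness, and it is then a full, join, or split witness for one of the four standard prefixes of $\aCandidateJourney$ (a non-standard prefix of $\aCandidateJourney$ has the form $\originPrefix(\aVertex)$ and behaves exactly like $\originPrefix$). The guiding observation is that changing $\arrivalDelay(\originEvent)$ affects a journey (prefix) only if it uses $\originEvent$, and then in one of two ways: either (A) $\originEvent$ is the last stop event and is followed only by a final transfer, in which case the journey's arrival time increases by the same amount as $\arrivalDelay(\originEvent)$; or (B) $\originEvent$ is followed by an intermediate transfer to some stop event $\stopEvent'$, in which case the only effect is on the feasibility of that single transfer, and as long as it stays feasible the journey's arrival time is unchanged, since the next trip is still boarded at the delay-independent departure time of $\stopEvent'$.

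First I would settle the candidate side. If $\prefix{\aCandidateJourney}=\sourcePrefix$, it does not use $\originEvent$, so its value is constant in $\delay'$. If $\prefix{\aCandidateJourney}=\originPrefix$, case~(A) applies and its arrival time equals $\arrivalTime(\originEvent)+\delay'+\transferTime(\originVertex,\destinationVertex)$, which is non-decreasing in $\delay'$. If $\prefix{\aCandidateJourney}\in\{\destinationPrefix,\aCandidateJourney\}$, case~(B) applies with $\stopEvent'=\destinationEvent$; since $\destinationEvent$ is a best-case event, the transfer $\originEvent\to\destinationEvent$ is feasible exactly when $\delay'\le\feasibilityLimit(\aCandidateJourney)$ (this is precisely what $\feasibilityLimit$ encodes), and on that interval the value of $\prefix{\aCandidateJourney}$ does not depend on $\delay'$.

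Next I would treat each witness group and combine it with the candidate side. For a join witness (type $(\bot,\originEvent)$ or $(\sourceEvent,\originEvent)$) the shared handle does not reach $\originEvent$, so $\aWitnessJourney$ never uses $\originEvent$; hence its feasibility and arrival time are independent of $\delay'$. As it dominates $\originPrefix$, whose arrival time only grows with $\delay'$, strong domination is preserved for every $\delay'\in[\delay,\feasibilityLimit(\aCandidateJourney)]$. For a full witness (a hook witness that is neither a $(\cdot,\originEvent)$- nor an $(\originEvent,\cdot)$-witness), either its handle stays within the $\sourceEvent$-prefix, so $\aWitnessJourney$ does not use $\originEvent$ and is unaffected by $\delay'$; or it is a $(\destinationEvent,\targetEvent)$-witness, which contains the destination prefix and therefore uses $\originEvent$ followed by exactly the candidate's transfer to $\destinationEvent$ (case~(B) with $\stopEvent'=\destinationEvent$), so it is feasible with a fixed arrival time throughout $[0,\feasibilityLimit(\aCandidateJourney)]$. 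Since a full witness can only dominate $\sourcePrefix$, $\destinationPrefix$, or $\aCandidateJourney$, the candidate side is also constant on $[0,\feasibilityLimit(\aCandidateJourney)]$, and strong domination holds for every $\delay'\in[0,\feasibilityLimit(\aCandidateJourney)]$. Finally, for a split witness (type $(\originEvent,\destinationEvent)$ or $(\originEvent,\targetEvent)$, dominating $\destinationPrefix$ or $\aCandidateJourney$), the handle is the $\originEvent$-prefix, so $\aWitnessJourney$ uses $\originEvent$ and then takes an intermediate transfer from $\originVertex$ to the entry stop $\astop(\stopEvent')$ of its second trip, where $\stopEvent'\notin\stopEvents(\aCandidateJourney)$ is a worst-case event; this transfer is feasible iff $\arrivalTime(\originEvent)+\delay'+\transferTime(\originVertex,\astop(\stopEvent'))\le\departureTime(\stopEvent')$, which is an upper bound on $\delay'$ that does not itself depend on $\delay'$. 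As it holds at $\delay$, it holds for all $\delay'\le\delay$, and on that range the witness keeps its number of trips and arrival time (its remaining transfers do not involve $\originEvent$) while the candidate side is constant, so strong domination holds for every $\delay'\in[0,\delay]$.

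I expect the bookkeeping to be the main obstacle: one has to pin down exactly which stop events a hook witness of each type shares with $\aCandidateJourney$ (hence which are best-case versus worst-case in $\parameterizedBestDelayScenario(\aCandidateJourney,\cdot)$) and verify that its arrival time has no residual dependence on $\delay'$ within the relevant interval. This rests on Lemma~\ref{th:app:last-event} (a hook witness never uses the last stop event of the prefix it dominates, ruling out degenerate empty-hook witnesses) together with the fact that the shared handle of a hook witness is itself a standard prefix of $\aCandidateJourney$. The split-witness case is the most delicate, since the direction of monotonicity reverses — it is feasibility rather than arrival time that degrades as $\arrivalDelay(\originEvent)$ grows — and one must check that no later transfer in the witness's hook reintroduces a dependence on the delay of $\originEvent$. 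Checking that $\aWitnessJourney$ also remains feasible for the departure time of $\prefix{\aCandidateJourney}$ is routine, since that departure time and the witness's own departure time are both independent of $\delay'$.
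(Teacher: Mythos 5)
Your proposal is correct and follows essentially the same route as the paper's proof: both rest on the observation that the two scenarios differ only in the arrival delay of $\originEvent$, that this can affect a journey only through the feasibility of an intermediate transfer leaving $\originEvent$ (harmful only when the delay increases, and only for split witnesses) or through the arrival time of the origin prefix (harmful only when the delay decreases, and only for join witnesses). The paper merely organizes the case analysis by these two effects rather than by witness group, and is terser about the candidate-side bookkeeping you spell out.
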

\begin{proof}
	W.l.o.g.,~we assume that~$\prefix{\aCandidateJourney}$ is a standard prefix of~$\aCandidateJourney$.
	We show that~$\aWitnessJourney$ is feasible and strongly dominates~$\prefix{\aCandidateJourney}$ in~$\parameterizedBestDelayScenario(\aCandidateJourney,\delay')$.
	Only the delay of~$\originEvent$ differs between the two scenarios.
	This can have two effects:
	\begin{enumerate}
		\item If~$\delay'>\delay$ and~$\aWitnessJourney$ includes an intermediate transfer from~$\originEvent$ to some other stop event~$\stopEvent'_\destinationIndex$, this intermediate transfer can become infeasible.
		If~$\stopEvent'_\destinationIndex=\destinationEvent$, the intermediate transfer is the same as that of~$\aCandidateJourney$, which is feasible in~$\parameterizedBestDelayScenario(\aCandidateJourney,\delay')$.
		Otherwise, $\aWitnessJourney$ is a split witness, which contradicts~$\delay'>\delay$.
		\item If~$\delay'<\delay$ and~$\prefix{\aCandidateJourney}$ is the origin prefix~$\originPrefixSequence$, then the arrival time of~$\prefix{\aCandidateJourney}$ decreases in~$\parameterizedBestDelayScenario(\aCandidateJourney,\delay')$.
		If~$\aWitnessJourney$ does not use~$\originEvent$ as well, it may no longer strongly dominate~$\prefix{\aCandidateJourney}$.
		However, in this case~$\aWitnessJourney$ is a join witness, which contradicts~$\delay'<\delay$.
	\end{enumerate}
\end{proof}

\subparagraph*{Final Characterization.}
\begin{figure}
	\centering
	\begin{tikzpicture}
	\node (es) at (0.00, 0.00) {};%
	\node (eo) at (4.00, 0.00) {};%
	\node (es2) at (1.00, 1.25) {};%
	\node (eo2) at (4.00, 1.25) {};%
	\node (ed) at (6.00, 0.00) {};%
	\node (ed2) at (6.00, -1.25) {};%
	\node (et) at (10.00, 0.00) {};%
	\node (et2) at (9.00, -1.25) {};%
	
	\draw [edge] (eo) -- (ed) node [midway, anchor=south] {$1$};
	\draw [edge, rounded corners=15]  (es) -- (0.00, 1.25) -- (es2) node [midway, anchor=south] {$1$};
	\draw [edge, rounded corners=15]  (eo2) -- (6.00, 1.25) -- (ed) node [midway, anchor=west] {$1$};
	\draw [edge, rounded corners=15]  (eo) -- (4.00, -1.25) -- (ed2) node [midway, anchor=south] {$1$};
	\draw [edge, rounded corners=15]  (et2) -- (10.00, -1.25) -- (et) node [midway, anchor=west] {$1$};
	
	\node [stop,fit=(es)] {};%
	\node [stop,fit=(eo)] {};%
	\node [stop,fit=(es2)] {};%
	\node [stop,fit=(eo2)] {};%
	\node [stop,fit=(ed)] {};%
	\node [stop,fit=(ed2)] {};%
	\node [stop,fit=(et)] {};%
	\node [stop,fit=(et2)] {};%
	
	\draw [tripColor1,route] (es) -- (eo);
	\draw [tripColor2,route] (ed) -- (et);
	\draw [tripColor3,route] (es2) -- (eo2);
	\draw [tripColor4,route] (ed2) -- (et2);
	
	\node [align=left,text=tripColor1] at (2.00, 0.70) {$0\rightarrow20$};%
	\node [align=left,text=tripColor2] at (8.00, 0.70) {$19\rightarrow50$};%
	\node [align=left,text=tripColor3] at (2.50, 1.95) {$6\rightarrow17$};%
	\node [align=left,text=tripColor4] at (7.50, -0.55) {$24\rightarrow40$};%
	\node [align=left,text=tripColor1] at (2.00, 0.30) {$(5\rightarrow20+\delay)$};%
	\node [align=left,text=tripColor2] at (8.00, 0.30) {$(24\rightarrow50)$};%
	\node [align=left,text=tripColor3] at (2.50, 1.55) {$(6\rightarrow22)$};%
	\node [align=left,text=tripColor4] at (7.50, -0.95) {$(24\rightarrow45)$};%
		
	\node (es_v) at (es) [vertex,draw=tripColor1,fill=tripColor1!15] {\gs};%
	\node (eo_v) at (eo) [vertex,draw=tripColor1,fill=tripColor1!15] {\gs};%
	\node (ed_v) at (ed) [vertex,draw=tripColor2,fill=tripColor2!15] {\gs};%
	\node (et_v) at (et) [vertex,draw=tripColor2,fill=tripColor2!15] {\gs};%
	\node (es2_v) at (es2) [vertex,draw=tripColor3,fill=tripColor3!15] {\gs};%
	\node (eo2_v) at (eo2) [vertex,draw=tripColor3,fill=tripColor3!15] {\gs};%
	\node (ed2_v) at (ed2) [vertex,draw=tripColor4,fill=tripColor4!15] {\gs};%
	\node (et2_v) at (et2) [vertex,draw=tripColor4,fill=tripColor4!15] {\gs};%
	
	\node at (es) [text=nodeColor!100] {\small{$\sourceEvent$}};%
	\node at (eo) [text=nodeColor!100] {\small{$\originEvent$}};%
	\node at (ed) [text=nodeColor!100] {\small{$\destinationEvent$}};%
	\node at (et) [text=nodeColor!100] {\small{$\targetEvent$}};%
	\node at (es2) [text=nodeColor!100] {\small{$\sourceEvent'$}};%
	\node at (eo2) [text=nodeColor!100] {\small{$\originEvent'$}};%
	\node at (ed2) [text=nodeColor!100] {\small{$\destinationEvent'$}};%
	\node at (et2) [text=nodeColor!100] {\small{$\targetEvent'$}};%
\end{tikzpicture}%
	\caption{%
		An example of how the origin delay interval is calculated for the candidate~$\aCandidateJourney=\candidateSequence$, assuming a delay limit of~$\maxDelay=5$.
		Each connection between two consecutive stop events~$\stopEvent_a$ and~$\stopEvent_b$ is labeled with their scheduled departure and arrival times, in the format~$\departureTime(\stopEvent_a)\to\arrivalTime(\stopEvent_b)$.
		The departure and arrival times in the parameterized delay scenario~$\bestDelayScenario(\aCandidateJourney,\delay)$ are given below in parentheses.
		The intermediate transfer from~$\originEvent$ to~$\destinationEvent$ has a slack of~$\slack(\originEvent,\destinationEvent)=-2$, which yields a feasibility limit is~$\feasibilityLimit(\aCandidateJourney)=3$.
		There are two witnesses to consider: the join witness~$\aJourney^\textsf{j}=\langle[\sourceEvent',\originEvent'],\destinationVertex\rangle$ and the split witness~$\aJourney^\textsf{s}=\langle[\sourceEvent,\originEvent],[\destinationEvent',\targetEvent']\rangle$.
		The join witness~$\aJourney^\textsf{j}$ has an arrival time of~23 at~$\destinationVertex$, whereas the origin prefix~$\originPrefix=\originPrefixSequence$ has an arrival time of~$21+\delay$.
		This yields a join limit of~$\joinLimit(\aCandidateJourney)=2$.
		The split witness~$\aJourney^\textsf{s}$ strongly dominates~$\aCandidateJourney$.
		It is feasible if the arrival time at~$\astop(\destinationEvent')$ is at most~24, which corresponds to a split limit of~$\splitLimit(\aCandidateJourney)=3$.
		Because there are no full witnesses, the minimum origin delay is~$\minOriginDelay(\aCandidateJourney)=\splitLimit(\aCandidateJourney)=3$ and the maximum origin delay is~$\maxOriginDelay(\aCandidateJourney)=\min(\feasibilityLimit(\aCandidateJourney),\joinLimit(\aCandidateJourney))=2$.
		Thus, the origin delay interval~$\originDelayInterval(\aCandidateJourney)=[3,2]$ is empty, which means that there is no delay scenario in which~$\aCandidateJourney$ is prefix-optimal.
	}%
	\label{fig:delay:originDelayInterval}%
\end{figure}

Given a group~$X$ of witnesses, a delay scenario~$\delayScenario$ is called \emph{$X$-avoiding} for~$\aCandidateJourney$ if no prefix of~$\aCandidateJourney$ is strongly dominated by an~$X$-witness in~$\delayScenario$.
A delay~$\delay$ is called~$X$-avoiding if~$\parameterizedBestDelayScenario(\aCandidateJourney,\delay)$ is~$X$-avoiding.
We call the lowest split-avoiding delay in~$[0,\maxDelay+1]$ the~\emph{split limit}~$\splitLimit(\aCandidateJourney)$, and the highest join-avoiding delay in~$(-\infty,\maxDelay]$ the~\emph{join limit}~$\joinLimit(\aCandidateJourney)$.
The \emph{minimum} and \emph{maximum origin delay}~$\minOriginDelay(\aCandidateJourney)$ and~$\maxOriginDelay(\aCandidateJourney)$ additionally take full witnesses into account:
\begin{align*}
	\minOriginDelay(\aCandidateJourney)&:=\begin{cases}
		\maxDelay+1 &\text{if }\bestDelayScenario(\aCandidateJourney)\text{ is not full-avoiding,}\\
		\splitLimit(\aCandidateJourney) &\text{otherwise.}
	\end{cases}\\
	\maxOriginDelay(\aCandidateJourney)&:=\begin{cases}
		-\infty &\text{if }\bestDelayScenario(\aCandidateJourney)\text{ is not full-avoiding,}\\
		\min(\feasibilityLimit(\aCandidateJourney),\joinLimit(\aCandidateJourney)) &\text{otherwise.}
	\end{cases}
\end{align*}
Together, they form the~\emph{origin delay interval}~$\originDelayInterval(\aCandidateJourney):=[\minOriginDelay(\aCandidateJourney),\maxOriginDelay(\aCandidateJourney)]$.
Figure~\ref{fig:delay:originDelayInterval} gives an example of how it is calculated.
Based on this, Theorem~\ref{th:app:prefix-optimality-conditions} establishes our final characterization of candidates that are prefix-optimal in at least one delay scenario.

\begin{theorem}
	\label{th:app:prefix-optimality-conditions}
	A candidate~$\aCandidateJourney$ is prefix-optimal in at least one delay scenario iff~$\originDelayInterval(\aCandidateJourney)\neq\emptyset$.
\end{theorem}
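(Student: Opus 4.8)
The plan is to combine Lemma~\ref{th:app:all-best-case} with Lemma~\ref{th:app:group-dominance} and the definitions of the split limit, join limit, feasibility limit, and full-avoidance. By Lemma~\ref{th:app:all-best-case}, $\aCandidateJourney$ is prefix-optimal in at least one delay scenario iff there is a delay $\delay\in[0,\feasibilityLimit(\aCandidateJourney)]$ for which $\aCandidateJourney$ is prefix-optimal in $\parameterizedBestDelayScenario(\aCandidateJourney,\delay)$, i.e., no prefix of $\aCandidateJourney$ is strongly dominated by any witness in that scenario. By Lemma~\ref{th:app:hook}, it suffices to consider hook witnesses, which by the classification preceding Lemma~\ref{th:app:group-dominance} are exactly the full, join, and split witnesses. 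So the statement reduces to: there exists $\delay\in[0,\feasibilityLimit(\aCandidateJourney)]$ that is simultaneously full-avoiding, join-avoiding, and split-avoiding, iff $\originDelayInterval(\aCandidateJourney)\neq\emptyset$.

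First I would dispatch the full-witness case. By Lemma~\ref{th:app:group-dominance}, a full witness that dominates some prefix of $\aCandidateJourney$ for one value of $\delay$ does so for every $\delay\in[0,\feasibilityLimit(\aCandidateJourney)]$; in particular it does so at $\delay=0$, i.e., in $\bestDelayScenario(\aCandidateJourney)=\parameterizedBestDelayScenario(\aCandidateJourney,0)$. Hence some $\delay$ in the range is full-avoiding iff $\bestDelayScenario(\aCandidateJourney)$ is full-avoiding, and in that case every $\delay$ in the range is full-avoiding. If $\bestDelayScenario(\aCandidateJourney)$ is not full-avoiding, then $\minOriginDelay(\aCandidateJourney)=\maxDelay+1>\maxDelay\geq\maxOriginDelay(\aCandidateJourney)$ (using $\maxOriginDelay\leq\feasibilityLimit\leq\maxDelay$), so $\originDelayInterval(\aCandidateJourney)=\emptyset$ and there is no valid $\delay$ at all — both sides of the iff are false. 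So from here on assume $\bestDelayScenario(\aCandidateJourney)$ is full-avoiding, so that $\minOriginDelay(\aCandidateJourney)=\splitLimit(\aCandidateJourney)$ and $\maxOriginDelay(\aCandidateJourney)=\min(\feasibilityLimit(\aCandidateJourney),\joinLimit(\aCandidateJourney))$.

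Next I would characterize the split-avoiding and join-avoiding sets as intervals. By Lemma~\ref{th:app:group-dominance}, if a split witness dominates a prefix at delay $\delay$, it does so at every $\delay'\in[0,\delay]$; contrapositively, the set of split-avoiding delays in $[0,\maxDelay]$ is upward closed, hence (together with $\splitLimit$ being the lowest split-avoiding delay, and noting $\delay=\feasibilityLimit(\aCandidateJourney)\leq\maxDelay$ keeps us in range) a delay $\delay\in[0,\feasibilityLimit(\aCandidateJourney)]$ is split-avoiding iff $\delay\geq\splitLimit(\aCandidateJourney)$. Symmetrically, if a join witness dominates a prefix at $\delay$, it does so at every $\delay'\in[\delay,\feasibilityLimit(\aCandidateJourney)]$, so the join-avoiding set is downward closed and $\delay$ is join-avoiding iff $\delay\leq\joinLimit(\aCandidateJourney)$. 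Therefore a delay $\delay\in[0,\feasibilityLimit(\aCandidateJourney)]$ witnesses prefix-optimality iff $\splitLimit(\aCandidateJourney)\leq\delay\leq\min(\feasibilityLimit(\aCandidateJourney),\joinLimit(\aCandidateJourney))$, i.e., iff $\delay\in\originDelayInterval(\aCandidateJourney)$; such a $\delay$ exists iff $\originDelayInterval(\aCandidateJourney)\neq\emptyset$, which is the claim.

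The main obstacle I anticipate is the bookkeeping around the boundary values and half-open ranges: $\splitLimit$ is defined over $[0,\maxDelay+1]$ and $\joinLimit$ over $(-\infty,\maxDelay]$, and one must check that clamping these to the feasible range $[0,\feasibilityLimit(\aCandidateJourney)]$ — and using $\feasibilityLimit(\aCandidateJourney)\leq\maxDelay$ — genuinely yields the stated interval without off-by-one errors, especially in the degenerate cases where $\splitLimit(\aCandidateJourney)=\maxDelay+1$ (no split-avoiding delay exists, interval empty) or where $\joinLimit(\aCandidateJourney)<0$ (no join-avoiding delay exists, interval empty). A secondary subtlety is confirming that the "feasible in $\parameterizedBestDelayScenario(\aCandidateJourney,\delay)$" condition baked into the range $[0,\feasibilityLimit(\aCandidateJourney)]$ is exactly what Lemma~\ref{th:app:all-best-case} requires, so that restricting to that range loses nothing; this is immediate from the definition of $\feasibilityLimit$ but worth stating explicitly.
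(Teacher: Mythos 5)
Your proposal is correct and follows the same route as the paper's proof: reduce to parameterized scenarios via Lemma~\ref{th:app:all-best-case}, restrict to hook witnesses via Lemma~\ref{th:app:hook}, and then use Lemma~\ref{th:app:group-dominance} to turn the full/join/split classification into the interval condition. The paper compresses your last two paragraphs into a single appeal to Lemma~\ref{th:app:group-dominance}; your explicit handling of the monotonicity of the split- and join-avoiding sets and of the degenerate full-witness case is exactly the bookkeeping the paper leaves implicit.
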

\begin{proof}
	By Lemma~\ref{th:app:all-best-case}, $\aCandidateJourney$ is prefix-optimal in at least one delay scenario iff there is a~$\delta\in[0,\feasibilityLimit(\aCandidateJourney)]$ such that~$\aCandidateJourney$ in~$\parameterizedBestDelayScenario(\aCandidateJourney,\delta)$.
	By Lemma~\ref{th:app:hook}, this is the case iff~$\parameterizedBestDelayScenario(\aCandidateJourney,\delta)$ is hook-avoiding for~$\aCandidateJourney$.
	By Lemma~\ref{th:app:group-dominance}, this is the case iff~$\delta\in\originDelayInterval(\aCandidateJourney)$.
	Hence, $\aCandidateJourney$ is prefix-optimal in at least one delay scenario iff~$\originDelayInterval(\aCandidateJourney)\neq\emptyset$.
\end{proof}

\subsection{Problem Definition}
\label{sec:optimality-conditions:definition}
Based on the condition established by Theorem~\ref{th:app:prefix-optimality-conditions}, we formally define the shortcut computation problem.
For a potential shortcut~$\edge=(\originEvent,\destinationEvent)$, let~$\candidateJourneys(\edge)$ be the set of candidates with an intermediate transfer from~$\originEvent$ to~$\destinationEvent$.
The shortcut is necessary if the union of the origin delay intervals for all candidates in~$\candidateJourneys(\edge)$ is not empty.
For simplicity, we do not consider the union (which may not form an interval itself) but rather the smallest interval containing all origin delay intervals.
Let~$\minOriginDelay(\edge)$ and~$\maxOriginDelay(\edge)$ be the lowest minimum and highest maximum origin delays among~$\candidateJourneys(\edge)$, respectively.
Then the interval is given by~$\originDelayInterval(\edge):=[\minOriginDelay(\edge),\maxOriginDelay(\edge)]$.
We will see in Section~\ref{sec:optimality-tests} that computing~$\minOriginDelay(\edge)$ exactly is expensive, so we only ask for a lower bound~$\minOriginDelayBound(\edge)$.
If the corresponding interval~$\originDelayIntervalBound(\edge):=[\minOriginDelayBound(\edge),\maxOriginDelay(\edge)]$ is empty, then~$\edge$ is not required in any delay scenario.
While the converse is not necessarily true, superfluous shortcuts only affect the performance of the query algorithm, not its correctness.

Given a network~$(\stops,\stopEvents,\trips,\routes,\graph)$ and a delay limit~\maxDelay, the~\textsc{DelayShortcut} problem asks for the set~$\shortcutEdges=\{\edge\in\stopEvents\times\stopEvents\mid{}\originDelayIntervalBound(\edge)\neq\emptyset\}$ of relevant shortcuts, as well as the origin delay interval~$\originDelayIntervalBound(\edge)$ for each shortcut~$\edge\in\shortcutEdges$.
The latter can be used to discard irrelevant shortcuts when the delay scenario is revealed in the update phase:
a shortcut~$\edge=(\originEvent,\destinationEvent)$ can be discarded in a delay scenario~$\delayScenario$ if~$\arrivalDelay(\originEvent)\notin\originDelayIntervalBound(\edge)$.

\section{Efficient Candidate Testing}
\label{sec:optimality-tests}
Theorem~\ref{th:app:prefix-optimality-conditions} implies an algorithmic framework for solving the~\textsc{DelayShortcut} problem: generate all possible candidates and test for each candidate~$\aCandidateJourney$ whether the origin delay interval~$\originDelayInterval(\aCandidateJourney)$ is empty.
To perform this test, the algorithm needs to search for witnesses that strongly dominate a prefix of~$\aCandidateJourney$ in one of the relevant delay scenarios.
These scenarios are different for each candidate, so a naive approach would perform a new search each time.
To obtain a more efficient algorithm, we exploit the fact that candidates with a common prefix share the same set of witnesses for this prefix.

Given an origin stop prefix~$\originPrefix(\originVertex)=\originStopPrefixSequence{\originVertex}$, we define the subproblem~\textsc{DelayShortcut}-$\originPrefix(\originVertex)$, which only allows candidates that begin with~$\originPrefix(\originVertex)$.
The overall~\textsc{DelayShortcut} problem can be broken down into solving~\textsc{DelayShortcut}-$\originPrefix(\originVertex)$ for every possible origin stop prefix~$\originPrefix(\originVertex)$ and merging the results.
This has the advantage that the origin event~$\originEvent$ is now fixed by the input, which means that a shortcut~$\edge=(\originEvent,\destinationEvent)$ is uniquely identified by~$\destinationEvent$.
Thus, the set~$\candidateJourneys(\edge)$ of candidates containing~$\edge$ becomes the set~$\candidateJourneys(\destinationEvent)$ of candidates with destination prefix~$\destinationPrefix$.
Likewise, the minimum and maximum origin delays~$\minOriginDelay(\edge)=\minOriginDelay(\destinationEvent)$ and~$\maxOriginDelay(\edge)=\maxOriginDelay(\destinationEvent)$, the origin delay interval~$\originDelayInterval(\edge)=\originDelayInterval(\destinationEvent)$ and their lower bounds also depend only on~$\destinationEvent$.

In this section, we show how an individual~\textsc{DelayShortcut}-$\originPrefix(\originVertex)$ problem can be solved with only two witness searches: one in~$\bestDelayScenario(\sourcePrefix)$ and one in~$\bestDelayScenario(\originPrefix)$.
For this purpose, we define various arrival times:
\begin{itemize}
\item For a vertex~$\aVertex$, the \emph{candidate arrival time} of~$\originPrefix(\aVertex)=\originStopPrefixSequence{\aVertex}$ in~$\bestDelayScenario$ is given by~$\candidateArrivalTime(\aVertex):=\arrivalTime(\originEvent)+\transferTime(\originVertex,\aVertex)$.
\item For a candidate prefix~$\prefix{\aCandidateJourney}$, a vertex~$\aVertex$ and a number of trips~$n\leq{}2$, the \emph{witness arrival time}~$\witnessArrivalTime(\prefix{\aCandidateJourney},\aVertex,n)$ is the earliest arrival time among $\sourceVertex$-$\aVertex$-journeys with at most~$n$ trips that depart no earlier than~$\departureTime(\sourceEvent)+\maxDelay$ in~$\bestDelayScenario(\prefix{\aCandidateJourney})$.
\end{itemize}
In the following, we establish formulas for the various components of the origin delay interval that depend only on~$\candidateArrivalTime(\cdot)$, $\witnessArrivalTime(\sourcePrefix,\cdot,\cdot)$, $\witnessArrivalTime(\originPrefix,\cdot,\cdot)$ and derived values.
This is done in three steps: calculating the join and feasibility limits (Section~\ref{sec:optimality-tests:join-feasibility}), examining full witnesses (Section~\ref{sec:optimality-tests:full-witnesses}), and calculating the split limits (Section~\ref{sec:optimality-tests:split}).
Additionally, Section~\ref{sec:time-travel} discusses how to exclude impossible delay scenarios in which trips travel backwards in time.

Throughout this section, we assume that the source prefix~$\sourcePrefix=\langle\aTrip_1[\sourceIndex]\rangle$ is Pareto-optimal in every delay scenario~$\delayScenario$.
To see why this assumption is realistic, consider the example in Figure~\ref{fig:witnessTypes:dot-source}.
A witness that strongly dominates~$\sourcePrefix$ can only be a~$(\bot,\sourceEvent)$-witness.
This means it must have the form~$\aWitnessJourney=\langle\aTrip_1[i]\rangle$ with~$i<\sourceIndex$, i.e., it takes a transfer to an earlier stop along~$\aTrip$ and enters the trip there.
If~$\aWitnessJourney$ is feasible in~$\delayScenario$, then~$\departureTime(\aTrip_1[i],\delayScenario)\geq\departureTime(\aTrip_1[\sourceIndex],\delayScenario)$ must hold.
This requires that the trip moves from index~$i$ to~$\sourceIndex$ instantaneously (or even backwards in time), which is unrealistic.

\subsection{Join and Feasibility Limit}
\label{sec:optimality-tests:join-feasibility}
The join limit~$\joinLimit(\aCandidateJourney)$ of a candidate~$\aCandidateJourney=\candidateSequence$ is the highest delay~$\delay\in[-\infty,\maxDelay]$ such that~$\parameterizedBestDelayScenario(\aCandidateJourney,\delay)$ is join-avoiding.
Since join witnesses are witnesses for the origin prefix~$\originPrefix=\originPrefixSequence$, they do not use~$\destinationEvent$ or~$\targetEvent$.
Accordingly, we can consider the parameterized scenario~$\parameterizedBestDelayScenario(\sourcePrefix,\delay)$ for the source prefix~$\sourcePrefix$ instead of~$\parameterizedBestDelayScenario(\aCandidateJourney,\delay)$.
We define the join limit for a vertex~$\aVertex$ as
\begin{equation}\label{eqn:join-limit}
	\joinLimit(\aVertex):=\witnessArrivalTime(\sourcePrefix,\aVertex,1)-\candidateArrivalTime(\aVertex).
\end{equation}

Lemma~\ref{th:app:join-limit} shows the relation to the join limit of a candidate.

\begin{lemma}
	\label{th:app:join-limit}
	For a candidate~$\aCandidateJourney=\candidateSequence$, $\joinLimit(\aCandidateJourney)=\joinLimit(\destinationVertex)$.
\end{lemma}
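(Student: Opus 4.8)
The plan is to unfold the definitions one layer at a time until the claim reduces to a direct comparison of two arrival times. By definition, $\joinLimit(\aCandidateJourney)$ is the largest delay $\delay\in(-\infty,\maxDelay]$ for which $\parameterizedBestDelayScenario(\aCandidateJourney,\delay)$ is join-avoiding, i.e.\ no prefix of $\aCandidateJourney$ is strongly dominated by a join witness. Since join witnesses are exactly the hook witnesses of type $(\bot,\originEvent)$ or $(\sourceEvent,\originEvent)$, they are witnesses for the origin prefix $\originPrefix=\originPrefixSequence$, so the condition collapses to: $\originPrefix$ is not strongly dominated by any such witness. As $\originPrefix$ ends with a final transfer to $\destinationVertex$, only proper journeys that end at $\destinationVertex$ and use at most $|\originPrefix|=1$ trip can dominate it, and in $\parameterizedBestDelayScenario(\aCandidateJourney,\delay)$ the arrival time of $\originPrefix$ is $\candidateArrivalTime(\destinationVertex)+\delay$.

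The first substantive step is to replace the scenario $\parameterizedBestDelayScenario(\aCandidateJourney,\delay)$ by $\parameterizedBestDelayScenario(\sourcePrefix,\delay)$, which is what allows the witnesses to be captured by $\witnessArrivalTime(\sourcePrefix,\cdot,\cdot)$. These two scenarios agree except on the departure delay of $\originEvent$ and on both delays of $\destinationEvent$ and $\targetEvent$. I would argue that none of these differences is visible here: $\originPrefix$ exits $\aTrip_1$ at $\originEvent$ and uses neither $\destinationEvent$ nor $\targetEvent$, so its arrival time is $\candidateArrivalTime(\destinationVertex)+\delay$ in both; and a join witness is a hook witness for $\originPrefix$ whose handle, by Lemma~\ref{th:app:last-event}, does not contain $\originEvent$, hence its stop events are disjoint from $\{\originEvent,\destinationEvent,\targetEvent\}$, making its feasibility and its arrival time independent of these delays and, in particular, of $\delay$. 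Therefore $\originPrefix$ is strongly dominated by a join witness in $\parameterizedBestDelayScenario(\aCandidateJourney,\delay)$ if and only if there is a $\sourceVertex$-$\destinationVertex$-journey with at most one trip that departs no earlier than $\departureTime(\sourceEvent)+\maxDelay$ in $\bestDelayScenario(\sourcePrefix)$ and strongly dominates $\langle[\sourceEvent,\originEvent],\destinationVertex\rangle$.

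The final step is arithmetic. The earliest arrival time over all journeys eligible as such a witness is, by definition, $\witnessArrivalTime(\sourcePrefix,\destinationVertex,1)$; and $\originPrefix$ itself, evaluated in $\bestDelayScenario(\sourcePrefix)$ where $\originEvent$ is worst-case (arrival $\candidateArrivalTime(\destinationVertex)+\maxDelay$), is eligible, so $\witnessArrivalTime(\sourcePrefix,\destinationVertex,1)\le\candidateArrivalTime(\destinationVertex)+\maxDelay$ and hence $\joinLimit(\destinationVertex)=\witnessArrivalTime(\sourcePrefix,\destinationVertex,1)-\candidateArrivalTime(\destinationVertex)\le\maxDelay$. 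If $\delay>\joinLimit(\destinationVertex)$, the minimizing witness has arrival time strictly below $\candidateArrivalTime(\destinationVertex)+\delay$ and at most one trip, hence strongly dominates $\originPrefix$; if $\delay\le\joinLimit(\destinationVertex)$, every eligible witness arrives no earlier than $\candidateArrivalTime(\destinationVertex)+\delay$, so none strictly improves on $\originPrefix$. Combining this with the reduction of the previous paragraph, the set of join-avoiding delays is $(-\infty,\joinLimit(\destinationVertex)]$, whose maximum in $(-\infty,\maxDelay]$ is $\joinLimit(\destinationVertex)$.

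I expect the delicate point to be precisely the boundary delay $\delay=\joinLimit(\destinationVertex)$: domination of the one-trip prefix $\originPrefix$ by a one-trip witness needs a \emph{strictly} earlier arrival, but a witness using \emph{fewer} trips (a bare transfer from $\sourceVertex$ to $\destinationVertex$) would already dominate on a tie in arrival time. I would resolve this either by showing that $\witnessArrivalTime(\sourcePrefix,\destinationVertex,1)$ can always be attained by a journey using exactly one trip, or by appealing to the tie-breaking convention on witness searches fixed in the preceding sections, so that equality rather than an off-by-one bound holds; everything else is routine bookkeeping with the best-case/worst-case delay assignments.
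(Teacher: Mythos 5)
Your proof follows essentially the same route as the paper's: reduce join witnesses to witnesses for the origin prefix evaluated in $\parameterizedBestDelayScenario(\sourcePrefix,\delay)$, observe that their arrival times are independent of~$\delay$, and compare $\witnessArrivalTime(\sourcePrefix,\destinationVertex,1)$ against $\candidateArrivalTime(\destinationVertex)+\delay$, with the same bound $\joinLimit(\destinationVertex)\leq\maxDelay$ obtained by noting that $\originPrefix$ itself is an eligible one-trip journey. The boundary tie you flag (a zero-trip witness arriving at exactly $\candidateArrivalTime(\destinationVertex)+\joinLimit(\destinationVertex)$ would strongly dominate by trip count) is likewise glossed over in the paper's own proof, which implicitly compares arrival times only, so your treatment is correct to the same standard.
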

\begin{proof}
	Let~$\delayScenario:=\bestDelayScenario(\sourcePrefix)=\parameterizedBestDelayScenario(\sourcePrefix,\maxDelay)$ and let~$\aWitnessJourney$ be a~$\sourceVertex$-$\destinationVertex$-journey with at most one trip such that~$\arrivalTime(\aWitnessJourney,\delayScenario)=\witnessArrivalTime(\sourcePrefix,\destinationVertex,1)$.
	Since~$\candidateArrivalTime(\destinationVertex)=\arrivalTime(\originPrefix,\bestDelayScenario)$ and~$\aWitnessJourney$ has minimal arrival time among journeys with at most one trip, it follows that
	\begin{align*}
		\joinLimit(\destinationVertex)&=\arrivalTime(\aWitnessJourney,\delayScenario)-\arrivalTime(\originPrefix,\bestDelayScenario)\leq\arrivalTime(\originPrefix,\delayScenario)-\arrivalTime(\originPrefix,\bestDelayScenario)=\maxDelay.
	\end{align*}
	There are two possible cases:
	\begin{enumerate}
		\item If~$\aWitnessJourney$ uses~$\originEvent$, then~$\arrivalTime(\aWitnessJourney,\delayScenario)=\arrivalTime(\originPrefix,\delayScenario)$, so it follows that~$\joinLimit(\destinationVertex)=\maxDelay$.
		Since no join witness for~$\originPrefix$ has an earlier arrival time than~$\aWitnessJourney$ in~$\delayScenario$, it follows that~$\maxDelay$ is join-avoiding.
		\item If~$\aWitnessJourney$ does not use~$\originEvent$, it is a join witness.
		Consider the delay scenario~$\parameterizedBestDelayScenario(\sourcePrefix,\delay)$ for some delay~$\delay$.
		The arrival time of~$\aWitnessJourney$ is equal to~$\arrivalTime(\aWitnessJourney,\delayScenario)=\witnessArrivalTime(\sourcePrefix,\destinationVertex,1)$ and the arrival time of~$\originPrefix$ is~$\candidateArrivalTime(\destinationVertex)+\delay$.
		Thus, the scenario is join-avoiding iff~$\delay\leq\witnessArrivalTime(\sourcePrefix,\destinationVertex,1)-\candidateArrivalTime(\destinationVertex)=\joinLimit(\destinationVertex)$.
	\end{enumerate}
	In both cases, the highest join-avoiding delay in~$[-\infty,\maxDelay]$ is~$\joinLimit(\destinationVertex)$.
\end{proof}

The following two lemmas establish pruning rules based on the join limit of a vertex.

\begin{lemma}
	\label{th:app:join-limit-negative}
	Let~$\aCandidateJourney=\candidateSequence$ be a candidate with~$\joinLimit(\aCandidateJourney)\geq{}0$.
	For each vertex~$\aVertex\in\vertices$ visited by the intermediate transfer of~$\aCandidateJourney$, $\joinLimit(\aVertex)\geq{}0$.
\end{lemma}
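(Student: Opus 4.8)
The plan is to prove the stronger, \emph{unconditional} monotonicity statement that $\joinLimit(\aVertex)\geq\joinLimit(\destinationVertex)$ holds for every vertex~$\aVertex$ visited by the intermediate transfer of~$\aCandidateJourney=\candidateSequence$, and then to combine it with the hypothesis via Lemma~\ref{th:app:join-limit}. Indeed, once the monotonicity is established, Lemma~\ref{th:app:join-limit} gives $\joinLimit(\aCandidateJourney)=\joinLimit(\destinationVertex)$, so $\joinLimit(\aVertex)\geq\joinLimit(\destinationVertex)=\joinLimit(\aCandidateJourney)\geq 0$. Since $\joinLimit(\aVertex)$ and $\joinLimit(\destinationVertex)$ are, by definition (Equation~\eqref{eqn:join-limit}), differences of the form $\witnessArrivalTime(\sourcePrefix,\cdot,1)-\candidateArrivalTime(\cdot)$ evaluated in the \emph{same} scenario $\bestDelayScenario(\sourcePrefix)$, it suffices to compare the candidate and witness arrival times at~$\aVertex$ and at~$\destinationVertex$.

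First I would treat the candidate arrival time. Since the intermediate transfer is a shortest $\originVertex$-$\destinationVertex$-path in~$\graph$ and $\aVertex$ lies on it, the subpath property of shortest paths yields $\transferTime(\originVertex,\destinationVertex)=\transferTime(\originVertex,\aVertex)+\transferTime(\aVertex,\destinationVertex)$, and hence $\candidateArrivalTime(\destinationVertex)=\candidateArrivalTime(\aVertex)+\transferTime(\aVertex,\destinationVertex)$ by the definition of $\candidateArrivalTime(\cdot)$. Next I would establish the triangle-type inequality $\witnessArrivalTime(\sourcePrefix,\destinationVertex,1)\leq\witnessArrivalTime(\sourcePrefix,\aVertex,1)+\transferTime(\aVertex,\destinationVertex)$: take a $\sourceVertex$-$\aVertex$-journey with at most one trip that departs no earlier than $\departureTime(\sourceEvent)+\maxDelay$ in $\bestDelayScenario(\sourcePrefix)$ and realizes $\witnessArrivalTime(\sourcePrefix,\aVertex,1)$; if it uses a trip, let $\stopEvent$ be the stop event at which that trip is exited and replace the final transfer (running from $\astop(\stopEvent)$ to~$\aVertex$) by a shortest path from $\astop(\stopEvent)$ to~$\destinationVertex$. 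The triangle inequality of the shortest-path metric, $\transferTime(\astop(\stopEvent),\destinationVertex)\leq\transferTime(\astop(\stopEvent),\aVertex)+\transferTime(\aVertex,\destinationVertex)$, bounds the resulting arrival time at~$\destinationVertex$ by $\witnessArrivalTime(\sourcePrefix,\aVertex,1)+\transferTime(\aVertex,\destinationVertex)$; the used trips and the departure time are unchanged, so the modified object is an admissible $\sourceVertex$-$\destinationVertex$-journey for the definition of $\witnessArrivalTime(\sourcePrefix,\destinationVertex,1)$. (If the optimal journey uses no trip, it is a shortest $\sourceVertex$-$\aVertex$-path and the argument is the same, rerouting the whole transfer to~$\destinationVertex$.)

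Combining the two facts gives
\[\joinLimit(\destinationVertex)=\witnessArrivalTime(\sourcePrefix,\destinationVertex,1)-\candidateArrivalTime(\destinationVertex)\leq\bigl(\witnessArrivalTime(\sourcePrefix,\aVertex,1)+\transferTime(\aVertex,\destinationVertex)\bigr)-\bigl(\candidateArrivalTime(\aVertex)+\transferTime(\aVertex,\destinationVertex)\bigr)=\joinLimit(\aVertex),\]
which is the claimed monotonicity, and the lemma follows as explained above. I expect the rerouting step to be the main obstacle: one must argue carefully that swapping a journey's final transfer for the shortest path to a different endpoint always produces a legitimate journey (this relies on the fact that final transfers are by definition shortest paths, and that rerouting leaves the used trips, the departure time, and all feasibility conditions intact), so that the modified journey genuinely certifies $\witnessArrivalTime(\sourcePrefix,\destinationVertex,1)$, and one must not overlook the degenerate trip-free case.
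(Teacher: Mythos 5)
Your proposal is correct and uses essentially the same argument as the paper: the subpath property giving $\candidateArrivalTime(\destinationVertex)=\candidateArrivalTime(\aVertex)+\transferTime(\aVertex,\destinationVertex)$, the triangle inequality $\witnessArrivalTime(\sourcePrefix,\destinationVertex,1)\leq\witnessArrivalTime(\sourcePrefix,\aVertex,1)+\transferTime(\aVertex,\destinationVertex)$, and Lemma~\ref{th:app:join-limit} to identify $\joinLimit(\aCandidateJourney)$ with $\joinLimit(\destinationVertex)$. The only differences are cosmetic: you state the monotonicity $\joinLimit(\aVertex)\geq\joinLimit(\destinationVertex)$ directly where the paper argues by contradiction, and you spell out the rerouting justification for the triangle inequality, which the paper asserts without elaboration.
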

\begin{proof}
	Assume that~$\joinLimit(\aVertex)<0$, i.e.,~$\witnessArrivalTime(\sourcePrefix,\aVertex,1)<\candidateArrivalTime(\aVertex)$.
	The transfer time between~$\aVertex$ and~$\destinationVertex$ is given by~$\transferTime(\aVertex,\destinationVertex)$.
	By the triangle inequality, $\witnessArrivalTime(\sourcePrefix,\destinationVertex,1)\leq\witnessArrivalTime(\sourcePrefix,\aVertex,1)+\transferTime(\aVertex,\destinationVertex)$.
	Since~$\aVertex$ is visited by the intermediate transfer of~$\aCandidateJourney$, it lies on a shortest~$\originVertex$-$\destinationVertex$-path.
	Hence, $\candidateArrivalTime(\destinationVertex)=\candidateArrivalTime(\aVertex)+\transferTime(\aVertex,\destinationVertex)$.
	It follows that~$\witnessArrivalTime(\sourcePrefix,\destinationVertex,1)<\candidateArrivalTime(\destinationVertex)$ and therefore~$\joinLimit(\aCandidateJourney)=\joinLimit(\destinationVertex)<0$, a contradiction.
\end{proof}

\begin{lemma}
	\label{th:app:join-limit-virt}
	Let~$\aVertex\in\vertices$ be a vertex.
	If~$\joinLimit(\aVertex)\geq{}0$, then~$\originPrefix(\aVertex)=\originStopPrefixSequence{\aVertex}$ is prefix-optimal in~$\virtualDelayScenario(\sourcePrefix)$.
\end{lemma}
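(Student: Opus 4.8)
The plan is to invoke the characterization that a journey is prefix-optimal iff all of its standard prefixes are Pareto-optimal. Viewed as a one-trip proper journey, $\originPrefix(\aVertex)=\originStopPrefixSequence{\aVertex}$ has exactly two standard prefixes: the journey prefix $\sourcePrefix$ (enter $\aTrip_1$ at $\sourceEvent$ without exiting) and $\originPrefix(\aVertex)$ itself. It therefore suffices to show that both are Pareto-optimal in $\virtualDelayScenario(\sourcePrefix)=(\bestDelayScenario,\bestDelayScenario(\sourcePrefix))$, and I would treat the two cases separately.

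For the prefix $\sourcePrefix$ I would reduce to the standing assumption that $\sourcePrefix$ is Pareto-optimal in every proper delay scenario. Since $\sourcePrefix$ has no proper prefix other than itself, its prefix-optimality and its Pareto-optimality coincide. Moreover $\virtualDelayScenario(\sourcePrefix)\leqDom\bestDelayScenario(\sourcePrefix)$: the candidate scenario $\bestDelayScenario$ of $\virtualDelayScenario(\sourcePrefix)$ satisfies $\bestDelayScenario\leqEval\bestDelayScenario(\sourcePrefix)$, while the witness scenarios agree. As prefix-optimality is preserved downward under $\leqDom$, the Pareto-optimality of $\sourcePrefix$ in the proper scenario $\bestDelayScenario(\sourcePrefix)$ carries over to $\virtualDelayScenario(\sourcePrefix)$.

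The substantive case is $\originPrefix(\aVertex)$ itself, and this is where $\joinLimit(\aVertex)\geq0$ enters. Suppose a witness $\aWitnessJourney$ strongly dominates $\originPrefix(\aVertex)$ in $\virtualDelayScenario(\sourcePrefix)$. By Lemma~\ref{th:app:hook} I may take $\aWitnessJourney$ to be a hook witness, and by Lemma~\ref{th:app:last-event} it then does not use $\originEvent$, the last stop event of $\originPrefix(\aVertex)$ (these lemmas apply unchanged, since the witness scenario $\bestDelayScenario(\sourcePrefix)$ is $\geqEval\bestDelayScenario$, so $\aWitnessJourney$ cannot dominate a copy of itself). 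Hence $\aWitnessJourney$ is a $\sourceVertex$-$\aVertex$-journey with at most one trip, feasible for the departure time $\departureTime(\bestDelayScenario,\originPrefix(\aVertex))=\departureTime(\sourceEvent)+\maxDelay$ and evaluated in $\bestDelayScenario(\sourcePrefix)$. By definition of the witness arrival time its arrival at $\aVertex$ is at least $\witnessArrivalTime(\sourcePrefix,\aVertex,1)$, and the hypothesis gives $\witnessArrivalTime(\sourcePrefix,\aVertex,1)\geq\candidateArrivalTime(\aVertex)$, which is exactly the arrival time of $\originPrefix(\aVertex)$ at $\aVertex$ in $\bestDelayScenario$ (where $\originEvent$ is punctual). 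Thus $\aWitnessJourney$ does not arrive strictly earlier than $\originPrefix(\aVertex)$, so strong domination would force it to use strictly fewer trips, i.e.,~to be a pure transfer reaching $\aVertex$ no later than $\candidateArrivalTime(\aVertex)$.

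I expect the main obstacle to be twofold. First, the asymmetry of $\virtualDelayScenario(\sourcePrefix)$ must be exploited carefully: $\originPrefix(\aVertex)$ is evaluated in the candidate scenario $\bestDelayScenario$, where $\originEvent$ is punctual and it arrives early at $\candidateArrivalTime(\aVertex)$, whereas witnesses are handicapped in $\bestDelayScenario(\sourcePrefix)$, where $\originEvent$ (shared with no witness, by Lemma~\ref{th:app:last-event}) is maximally delayed. In a plain delay scenario $\originPrefix(\aVertex)$ would arrive at $\candidateArrivalTime(\aVertex)+\maxDelay$ and could not survive, so the virtual scenario is genuinely needed and $\joinLimit(\aVertex)$ is precisely the slack it measures. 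Second, one must dispose of the residual pure-transfer tie: such a transfer is itself among the journeys defining $\witnessArrivalTime(\sourcePrefix,\aVertex,1)$, which would force equality throughout, and it is excluded by the same non-degeneracy that underlies the assumption on $\sourcePrefix$ (or, in the rare degenerate case, it only causes a harmless superfluous shortcut). With both points settled, no strongly dominating witness exists, and together with the $\sourcePrefix$ case this yields that $\originPrefix(\aVertex)$ is prefix-optimal in $\virtualDelayScenario(\sourcePrefix)$.
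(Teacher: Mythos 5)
Your proof is correct and follows essentially the same route as the paper's: reduce the $\sourcePrefix$ prefix to the standing non-degeneracy assumption, then rule out witnesses for $\originPrefix(\aVertex)$ by comparing $\candidateArrivalTime(\aVertex)$ (candidate evaluated in $\bestDelayScenario$) against $\witnessArrivalTime(\sourcePrefix,\aVertex,1)$ (witnesses evaluated in $\bestDelayScenario(\sourcePrefix)$) via $\joinLimit(\aVertex)\geq 0$. Your extra care with the zero-trip equal-arrival tie goes slightly beyond the paper's own (terser) argument, which silently treats non-strict arrival-time improvement as sufficient to exclude strong domination, but this does not change the substance of the proof.
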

\begin{proof}
	By our assumption at the beginning of Section~\ref{sec:optimality-tests}, $\sourcePrefix$ is Pareto-optimal in every delay scenario.
	It remains to be shown that there is no witness~$\aWitnessJourney$ that strongly dominates~$\originPrefix(\aVertex)$ in~$\virtualDelayScenario(\sourcePrefix)=(\bestDelayScenario,\bestDelayScenario(\sourcePrefix))$.
	The arrival time of~$\originPrefix(\aVertex)$ in~$\bestDelayScenario$ is given by~$\candidateArrivalTime(\aVertex)$.
	The arrival time of~$\aWitnessJourney$ in~$\bestDelayScenario(\sourcePrefix)$ is at most~$\witnessArrivalTime(\sourcePrefix,\aVertex,1)$.
	Since~$\joinLimit(\aVertex)=\witnessArrivalTime(\sourcePrefix,\aVertex,1)-\candidateArrivalTime(\aVertex)\geq{}0$, it follows that~$\aWitnessJourney$ does not strongly dominate~$\originPrefix(\aVertex)$.
\end{proof}

For the feasibility limit~$\feasibilityLimit(\aCandidateJourney)$, applying the definition of~$\candidateArrivalTime(\cdot)$ yields
\begin{equation}\label{eqn:feasibility-limit}
	\feasibilityLimit(\destinationEvent)=\min(0,\departureTime(\destinationEvent)-\candidateArrivalTime(\astop(\destinationEvent)))+\maxDelay.
\end{equation}

From this point onwards, we can restrict the set of relevant destination events to those with non-negative feasibility and join limits:
\[
\feasibleDestinationEvents:=\{\destinationEvent\in\stopEvents\mid\min(\feasibilityLimit(\destinationEvent),\joinLimit(\astop(\destinationEvent)))\geq0\}.
\]
All other destination events can be discarded because they cannot occur in prefix-optimal candidates.

\subsection{Examining Full Witnesses}
\label{sec:optimality-tests:full-witnesses}
For each destination event~$\destinationEvent\in\feasibleDestinationEvents$, we need to determine the minimum and maximum origin delay~$\minOriginDelay(\destinationEvent)$ and~$\maxOriginDelay(\destinationEvent)$.
These depend on the minimum and maximum origin delays of the candidates that use~$\destinationEvent$, which are contained in~$\candidateJourneys(\destinationEvent)$.
Candidates~$\aCandidateJourney$ for which~$\bestDelayScenario(\aCandidateJourney)$ is not full-avoiding do not contribute to~$\minOriginDelay(\destinationEvent)$ or~$\maxOriginDelay(\destinationEvent)$, so the next step is to examine full witnesses in order to discard these candidates.
Unfortunately, the witness arrival times~$\witnessArrivalTime(\cdot,\cdot,\cdot)$ do not distinguish between split and full witnesses.
We can circumvent this issue by considering the set~$\optimalCandidateJourneys(\destinationEvent)$ of candidates~$\aCandidateJourney\in\candidateJourneys(\destinationEvent)$ for which~$\minOriginDelay(\aCandidateJourney)\leq\maxDelay$, i.e.,~$\bestDelayScenario(\aCandidateJourney)$ is full-avoiding and~$\bestDelayScenario(\aCandidateJourney,\maxDelay)$ is split-avoiding.
Consider a candidate~$\aCandidateJourney\in\candidateJourneys(\destinationEvent)$.
If~$\aCandidateJourney$ is not contained in~$\optimalCandidateJourneys(\destinationEvent)$, we know that~$\originDelayInterval(\aCandidateJourney)=\emptyset$, so we can discard~$\aCandidateJourney$.
Otherwise, we can calculate~$\minOriginDelay(\aCandidateJourney)$ and~$\maxOriginDelay(\aCandidateJourney)$ from the split, feasibility and join limits.

\subparagraph*{Using Virtual Delay Scenarios.}
Our objective is to compute the set
\[\optimalDestinationEvents:=\{\destinationEvent\in\feasibleDestinationEvents\mid\optimalCandidateJourneys(\destinationEvent)\neq\emptyset\}.\]
For a destination event~$\destinationEvent\in\feasibleDestinationEvents$, it is difficult to compute~$\optimalCandidateJourneys(\destinationEvent)$ exactly.
Instead, we compute the set~$\virtualCandidateJourneys(\destinationEvent)$ of candidates in~$\candidateJourneys(\destinationEvent)$ that are prefix-optimal in~$\virtualDelayScenario(\sourcePrefix)$.
This allows us to use the witness arrival times for~$\bestDelayScenario(\sourcePrefix)$.
Lemma~\ref{th:app:optimal-virtual-candidates} shows that~$\virtualCandidateJourneys(\destinationEvent)$ is a superset of~$\optimalCandidateJourneys(\destinationEvent)$.

\begin{lemma}
	\label{th:app:optimal-virtual-candidates}
	For every destination event~$\destinationEvent\in\feasibleDestinationEvents$, $\optimalCandidateJourneys(\destinationEvent)\subseteq\virtualCandidateJourneys(\destinationEvent)$.
\end{lemma}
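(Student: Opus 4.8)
The plan is to prove the contrapositive: I assume that $\aCandidateJourney\in\candidateJourneys(\destinationEvent)$ is \emph{not} contained in~$\virtualCandidateJourneys(\destinationEvent)$, i.e., $\aCandidateJourney$ is not prefix-optimal in~$\virtualDelayScenario(\sourcePrefix)=(\bestDelayScenario,\bestDelayScenario(\sourcePrefix))$, and show that~$\minOriginDelay(\aCandidateJourney)=\maxDelay+1$, which rules out~$\aCandidateJourney\in\optimalCandidateJourneys(\destinationEvent)$. Since~$\aCandidateJourney$ is not prefix-optimal there, some prefix~$\prefix{\aCandidateJourney}$ is strongly dominated in~$\virtualDelayScenario(\sourcePrefix)$ by a witness~$\aWitnessJourney$, which I would first argue may be taken to be a hook witness: the construction in the proof of Lemma~\ref{th:app:hook}, which replaces the shared prefix of~$\aWitnessJourney$ by the matching prefix of~$\aCandidateJourney$, still yields a feasible journey when evaluated on the witness side~$\bestDelayScenario(\sourcePrefix)$, because~$\bestDelayScenario\leqEval\bestDelayScenario(\sourcePrefix)$ and a journey feasible in a~$\leqEval$-smaller scenario stays feasible. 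I would then distinguish the three hook-witness groups of Lemma~\ref{th:app:group-dominance}.

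The technical core is a pair of order relations on virtual delay scenarios, namely
\[\virtualDelayScenario(\sourcePrefix)\leqDom\bestDelayScenario(\aCandidateJourney)\qquad\text{and}\qquad\virtualDelayScenario(\sourcePrefix)\leqDom\parameterizedBestDelayScenario(\aCandidateJourney,\maxDelay),\]
which I would establish by an event-by-event comparison of the arrival and departure delays at the four shared stop events~$\sourceEvent,\originEvent,\destinationEvent,\targetEvent$ (all other events sit at the global worst case on both witness sides and at the global best case on both candidate sides). On the candidate side, $\bestDelayScenario$ is globally best, hence~$\leqEval$ both~$\bestDelayScenario(\aCandidateJourney)$ and~$\parameterizedBestDelayScenario(\aCandidateJourney,\maxDelay)$. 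On the witness side, $\bestDelayScenario(\sourcePrefix)$ assumes the global worst case at~$\originEvent,\destinationEvent,\targetEvent$; this is~$\geqEval$ the best case assumed at these events by~$\bestDelayScenario(\aCandidateJourney)$, and it matches~$\parameterizedBestDelayScenario(\aCandidateJourney,\maxDelay)$ at~$\originEvent$ (both assign arrival delay~$\maxDelay$) while remaining~$\geqEval$ it at~$\destinationEvent,\targetEvent$. Since strong domination is preserved when moving upward in~$\leqDom$, it follows that~$\aWitnessJourney$ strongly dominates~$\prefix{\aCandidateJourney}$ both in~$\bestDelayScenario(\aCandidateJourney)$ and in~$\parameterizedBestDelayScenario(\aCandidateJourney,\maxDelay)$.

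With these relations the case distinction is short. If~$\aWitnessJourney$ is a \emph{full} witness, it strongly dominates~$\prefix{\aCandidateJourney}$ in~$\bestDelayScenario(\aCandidateJourney)$, so that scenario is not full-avoiding and~$\minOriginDelay(\aCandidateJourney)=\maxDelay+1$ by definition. If~$\aWitnessJourney$ is a \emph{split} witness, it strongly dominates~$\prefix{\aCandidateJourney}$ in~$\parameterizedBestDelayScenario(\aCandidateJourney,\maxDelay)$, so that scenario is not split-avoiding; since not being split-avoiding is downward-closed in the delay by Lemma~\ref{th:app:group-dominance}, this gives~$\splitLimit(\aCandidateJourney)=\maxDelay+1$ and hence~$\minOriginDelay(\aCandidateJourney)=\maxDelay+1$ as well, whether or not~$\bestDelayScenario(\aCandidateJourney)$ is full-avoiding. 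Finally, if~$\aWitnessJourney$ is a \emph{join} witness, then~$\prefix{\aCandidateJourney}$ is forced to be the origin prefix, so~$\originPrefix=\originPrefix(\destinationVertex)$ is not prefix-optimal in~$\virtualDelayScenario(\sourcePrefix)$; by Lemma~\ref{th:app:join-limit-virt} this implies~$\joinLimit(\destinationVertex)<0$, contradicting~$\destinationEvent\in\feasibleDestinationEvents$, so this case cannot occur. Thus in every admissible case~$\minOriginDelay(\aCandidateJourney)=\maxDelay+1$, as claimed.

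I expect the main obstacle to be the careful bookkeeping behind the two~$\leqDom$ relations---verifying at each of~$\sourceEvent,\originEvent,\destinationEvent,\targetEvent$ that the required inequality on arrival and departure delays holds---and the accompanying subtlety that Lemma~\ref{th:app:hook} has to be invoked inside the virtual scenario~$\virtualDelayScenario(\sourcePrefix)$ rather than a proper one; this is the only point at which the specific form of the witness side~$\bestDelayScenario(\sourcePrefix)$ is used, via~$\bestDelayScenario\leqEval\bestDelayScenario(\sourcePrefix)$.
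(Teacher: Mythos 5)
Your proposal follows the same route as the paper: argue the contrapositive, transfer the strong domination from~$\virtualDelayScenario(\sourcePrefix)$ to~$\bestDelayScenario(\aCandidateJourney)$ and~$\parameterizedBestDelayScenario(\aCandidateJourney,\maxDelay)$ via~$\leqDom$, and then split into the full/split/join cases, ruling out join witnesses from~$\destinationEvent\in\feasibleDestinationEvents$. The $\leqDom$ bookkeeping and the three case conclusions are all correct and match the paper's proof.

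The one step that does not go through as you justify it is the hook-witness reduction \emph{inside} the virtual scenario. You claim the spliced journey~$\aCandidateJourney_1\circ\aWitnessJourney_2$ is feasible on the witness side~$\bestDelayScenario(\sourcePrefix)$ because ``a journey feasible in a~$\leqEval$-smaller scenario stays feasible''; but the monotonicity runs the other way. In a~$\leqEval$-smaller scenario departures are \emph{more} delayed and arrivals \emph{less} delayed, so transfers are easier there: feasibility in the~$\leqEval$-\emph{larger} scenario implies feasibility in the smaller one, not vice versa (this is exactly the direction needed to make~$\leqDom$ preserve domination). Concretely, if the last shared stop event is~$\destinationEvent$, then~$\aCandidateJourney_1=\destinationPrefixSequence$ contains the candidate's intermediate transfer, which is feasible in~$\bestDelayScenario$ but need not be feasible in~$\bestDelayScenario(\sourcePrefix)$ (there~$\originEvent$ arrives with delay~$\maxDelay$ and~$\destinationEvent$ departs with delay~$0$, so feasibility would require a slack of at least~$\maxDelay$). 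So the reduction to hook witnesses cannot in general be carried out within~$\virtualDelayScenario(\sourcePrefix)$. The fix is to swap the order of your two steps, as the paper implicitly does: first use~$\leqDom$ to transfer the domination of the (arbitrary) witness to the proper scenarios~$\bestDelayScenario(\aCandidateJourney)$ and~$\parameterizedBestDelayScenario(\aCandidateJourney,\maxDelay)$, and only then invoke Lemma~\ref{th:app:hook} there, where the candidate's prefixes are evaluated with their own best-case delays and the splice is feasible. With that reordering your case analysis goes through unchanged.
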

\begin{proof}
	Consider a candidate~$\aCandidateJourney\notin\virtualCandidateJourneys(\destinationEvent)$.
	Then a prefix~$\prefix{\aCandidateJourney}$ of~$\aCandidateJourney$ is strongly dominated by a witness~$\aWitnessJourney$ in~$\virtualDelayScenario(\sourcePrefix)$.
	Since~$\virtualDelayScenario(\sourcePrefix)\leqDom\parameterizedBestDelayScenario(\aCandidateJourney,\delay)$ for every~$\delay\in[0,\maxDelay]$, $\prefix{\aCandidateJourney}$ is also strongly dominated by~$\aWitnessJourney$ in~$\bestDelayScenario(\aCandidateJourney)$ and~$\parameterizedBestDelayScenario(\aCandidateJourney,\maxDelay)$.
	Since~$\joinLimit(\astop(\destinationEvent))\geq{}0$, $\aWitnessJourney$ is not a join witness.
	If~$\aWitnessJourney$ is a split witness, then~$\parameterizedBestDelayScenario(\aCandidateJourney,\maxDelay)$ is not split-avoiding, so~$\splitLimit(\aCandidateJourney)>\maxDelay$.
	If~$\aWitnessJourney$ a full witness, then~$\bestDelayScenario(\aCandidateJourney)$ is not full-avoiding.
	In both cases, it follows that~$\aCandidateJourney\notin\optimalCandidateJourneys(\destinationEvent)$.
\end{proof}

The reason why the two sets are not equal is that~$\bestDelayScenario(\sourcePrefix)$ assumes the worst case for~$\destinationEvent$.
This underestimates~$(\destinationEvent,\targetEvent)$-witnesses, which are the only witnesses that use~$\destinationEvent$ (see Figure~\ref{fig:witnessTypes:destination-target}).
Lemma~\ref{th:app:destination-witnesses} shows that this is the only difference between~$\optimalCandidateJourneys(\destinationEvent)$ and~$\virtualCandidateJourneys(\destinationEvent)$.

\begin{lemma}
	\label{th:app:destination-witnesses}
	Let~$\destinationEvent\in\feasibleDestinationEvents$ be a destination event and~$\aCandidateJourney$ a candidate in~$\virtualCandidateJourneys(\destinationEvent)\setminus\optimalCandidateJourneys(\destinationEvent)$.
	Then~$\aCandidateJourney$ is strongly dominated by an~$(\destinationEvent,\targetEvent)$-witness in~$\bestDelayScenario(\aCandidateJourney)$.
\end{lemma}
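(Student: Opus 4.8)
The plan is to pin down the single hook-witness type responsible for the gap between $\virtualCandidateJourneys(\destinationEvent)$ and $\optimalCandidateJourneys(\destinationEvent)$. First I would unfold the hypothesis $\aCandidateJourney\notin\optimalCandidateJourneys(\destinationEvent)$, that is, $\minOriginDelay(\aCandidateJourney)>\maxDelay$: by the definitions of $\minOriginDelay$ and $\splitLimit$, this means either (i)~$\bestDelayScenario(\aCandidateJourney)$ is not full-avoiding, or (ii)~$\splitLimit(\aCandidateJourney)>\maxDelay$, so that $\parameterizedBestDelayScenario(\aCandidateJourney,\maxDelay)$ is not split-avoiding. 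Either way, some standard prefix $\prefix{\aCandidateJourney}$ of $\aCandidateJourney$ is strongly dominated by a hook witness $\aWitnessJourney$ — a full witness in case~(i), a split witness in case~(ii) — in the scenario $\parameterizedBestDelayScenario(\aCandidateJourney,d)$ with $d=0$ in case~(i) and $d=\maxDelay$ in case~(ii). A short feasibility check — using $\destinationEvent\in\feasibleDestinationEvents$ and the observation that a scenario which fails to be full- or split-avoiding must have $\aCandidateJourney$ feasible — shows that $\aCandidateJourney$ is feasible in this scenario, so that ``strongly dominates'' is well defined there. In contrast, $\aCandidateJourney\in\virtualCandidateJourneys(\destinationEvent)$ tells us that $\aWitnessJourney$ does \emph{not} strongly dominate $\prefix{\aCandidateJourney}$ in $\virtualDelayScenario(\sourcePrefix)=(\bestDelayScenario,\bestDelayScenario(\sourcePrefix))$.

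The heart of the argument is a comparison of these two scenarios. The candidate component of $\virtualDelayScenario(\sourcePrefix)$ is $\bestDelayScenario$, which agrees with $\parameterizedBestDelayScenario(\aCandidateJourney,d)$ on every stop event of $\aCandidateJourney$ except, in case~(ii), the arrival delay of $\originEvent$; since that delay does not affect the arrival time of any prefix a split or full witness can dominate, $\prefix{\aCandidateJourney}$ has the same arrival time and trip count in both scenarios, and a feasibility difference can only make it harder to dominate under $\virtualDelayScenario(\sourcePrefix)$. Hence the change of domination status must originate in $\aWitnessJourney$, which forces $\aWitnessJourney$ to read a delay of a stop event on which the witness components $\bestDelayScenario(\sourcePrefix)$ and $\parameterizedBestDelayScenario(\aCandidateJourney,d)$ differ, namely $\originEvent$, $\destinationEvent$, or $\targetEvent$. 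Using Lemma~\ref{th:app:last-event} together with the fact that an $(\stopEvent_a,\stopEvent_b)$-witness shares with $\aCandidateJourney$ exactly the stop events of the $\stopEvent_a$-prefix, I would then go through the ten hook-witness types of Figure~\ref{fig:witnessTypes} and discard all but one: a split witness shares only $\sourceEvent$ and $\originEvent$ with $\aCandidateJourney$, and on them it reads only the departure delay of $\sourceEvent$ and the arrival delay of $\originEvent$, which both equal $\maxDelay$ in \emph{both} $\bestDelayScenario(\sourcePrefix)$ and $\parameterizedBestDelayScenario(\aCandidateJourney,\maxDelay)$ — so in case~(ii) $\aWitnessJourney$ would evaluate identically in the two scenarios and thus also strongly dominate under $\virtualDelayScenario(\sourcePrefix)$, a contradiction; and in case~(i) every full witness other than a $(\destinationEvent,\targetEvent)$-witness has an empty handle or a handle containing only $\sourceEvent$ (the $(\bot,\sourceEvent)$ subcase being excluded because $\sourcePrefix$ is Pareto-optimal in every delay scenario by the standing assumption of this section), hence reads none of $\originEvent$, $\destinationEvent$, $\targetEvent$ and would again dominate under $\virtualDelayScenario(\sourcePrefix)$. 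What remains is that $\aWitnessJourney$ is a $(\destinationEvent,\targetEvent)$-witness.

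A $(\destinationEvent,\targetEvent)$-witness is, by definition, a hook witness for the $\targetEvent$-prefix of $\aCandidateJourney$, which is $\aCandidateJourney$ itself; therefore $\prefix{\aCandidateJourney}=\aCandidateJourney$, case~(i) applies with $d=0$, and $\aCandidateJourney$ is strongly dominated by $\aWitnessJourney$ in $\bestDelayScenario(\aCandidateJourney)$, which is the claim. I expect the second paragraph to be the main obstacle: one must be precise about which arrival and departure delays each of the ten hook-witness types actually reads (exiting a trip reads an arrival time, entering or remaining on a trip reads a departure time, and only events lying in the shared handle — never in the hook — can change between the two scenarios), and one must handle the feasibility bookkeeping carefully so that the domination relations stay well defined throughout the case split.
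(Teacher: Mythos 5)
Your proof is correct and takes essentially the same route as the paper's: it unfolds $\minOriginDelay(\aCandidateJourney)>\maxDelay$ into the split-witness and full-witness cases, rules out split witnesses and all full witnesses other than $(\destinationEvent,\targetEvent)$-witnesses by showing each would still strongly dominate in $\virtualDelayScenario(\sourcePrefix)$, contradicting $\aCandidateJourney\in\virtualCandidateJourneys(\destinationEvent)$, and concludes that only a $(\destinationEvent,\targetEvent)$-witness in $\bestDelayScenario(\aCandidateJourney)$ remains. Your explicit accounting of which arrival and departure delays each witness type reads is more detailed than the paper's terse justification, but the underlying argument is identical.
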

\begin{proof}
	Since~$\aCandidateJourney\notin\optimalCandidateJourneys(\destinationEvent)$, we know that~$\minOriginDelay(\aCandidateJourney)>\maxDelay$.
	There are two possible reasons for this:
	\begin{enumerate}
		\item There is a split witness~$\aWitnessJourney$ that strongly dominates a prefix~$\prefix{\aCandidateJourney}$ of~$\aCandidateJourney$ in~$\parameterizedBestDelayScenario(\aCandidateJourney,\maxDelay)$.
		Since~$\aWitnessJourney$ does not use~$\destinationEvent$ or~$\targetEvent$, this is also the case in~$\virtualDelayScenario(\sourcePrefix)$, which contradicts~$\aCandidateJourney\in\virtualCandidateJourneys(\destinationEvent)$.
		\item There is a full witness~$\aWitnessJourney$ that strongly dominates a prefix~$\prefix{\aCandidateJourney}$ of~$\aCandidateJourney$ in~$\bestDelayScenario(\aCandidateJourney)$.
		If~$\aWitnessJourney$ is a~$(\bot,\cdot)$- or~$(\sourceEvent,\cdot)$-witness, it does not use~$\originEvent$, $\destinationEvent$ or~$\targetEvent$.
		Then it still strongly dominates~$\prefix{\aCandidateJourney}$ in~$\virtualDelayScenario(\sourcePrefix)$, which contradicts~$\aCandidateJourney\in\virtualCandidateJourneys(\destinationEvent)$.
		Hence, $\aWitnessJourney$ is a~$(\destinationEvent,\targetEvent)$-witness.
	\end{enumerate}
\end{proof}

In order to evaluate~$(\destinationEvent,\targetEvent)$-witnesses exactly, we would need to perform an additional witness search in~$\bestDelayScenario(\destinationPrefix)$.
Because this would have to be done individually for each possible destination event in~$\feasibleDestinationEvents$, this would be expensive.
However, Lemma~\ref{th:app:full-witness-condition-candidate} shows that this is unnecessary for the purpose of computing~$\optimalDestinationEvents$.
Intuitively, $(\destinationEvent,\targetEvent)$-witnesses are themselves candidates with a different target event than~$\aCandidateJourney$.
If there is a~$(\destinationEvent,\targetEvent)$-witness that is not strongly dominated in~$\virtualDelayScenario(\sourcePrefix)$, then it is also a candidate in~$\optimalCandidateJourneys(\destinationEvent)$.
Thus, if~$\virtualCandidateJourneys(\destinationEvent)$ is not empty, then neither is~$\optimalCandidateJourneys(\destinationEvent)$, which means we can compute~$\optimalDestinationEvents$ as
\[\optimalDestinationEvents=\{\destinationEvent\in\feasibleDestinationEvents\mid\virtualCandidateJourneys(\destinationEvent)\neq\emptyset\}.\]

\begin{lemma}
	\label{th:app:full-witness-condition-candidate}
	For every destination event~$\destinationEvent\in\feasibleDestinationEvents$, $\virtualCandidateJourneys(\destinationEvent)=\emptyset$ iff $\optimalCandidateJourneys(\destinationEvent)=\emptyset$.
\end{lemma}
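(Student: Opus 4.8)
The plan is to prove the two directions separately. The direction ``$\virtualCandidateJourneys(\destinationEvent)=\emptyset\implies\optimalCandidateJourneys(\destinationEvent)=\emptyset$'' is immediate from Lemma~\ref{th:app:optimal-virtual-candidates}, which already gives $\optimalCandidateJourneys(\destinationEvent)\subseteq\virtualCandidateJourneys(\destinationEvent)$. For the converse, I would assume $\virtualCandidateJourneys(\destinationEvent)\neq\emptyset$ and, since the event~$\destinationEvent$ fixes the trip~$\aTrip_2$ and the index~$\destinationIndex$ (so the candidates in $\candidateJourneys(\destinationEvent)$ differ only in their target index $\targetIndex\in\{\destinationIndex+1,\dots,|\aTrip_2|-1\}$), pick a candidate $\aCandidateJourney=\candidateSequence\in\virtualCandidateJourneys(\destinationEvent)$ whose target index~$\targetIndex$ is minimal. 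The goal is to show $\aCandidateJourney\in\optimalCandidateJourneys(\destinationEvent)$, which then yields $\optimalCandidateJourneys(\destinationEvent)\neq\emptyset$.

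Suppose instead that $\aCandidateJourney\notin\optimalCandidateJourneys(\destinationEvent)$. Then $\aCandidateJourney\in\virtualCandidateJourneys(\destinationEvent)\setminus\optimalCandidateJourneys(\destinationEvent)$, and Lemma~\ref{th:app:destination-witnesses} provides an $(\destinationEvent,\targetEvent)$-witness $\aWitnessJourney$ that strongly dominates $\aCandidateJourney$ in $\bestDelayScenario(\aCandidateJourney)$. The first key step is a structural analysis of $\aWitnessJourney$: its handle is the destination prefix $\destinationPrefix$, which already uses two trips, and $|\aWitnessJourney|\leq|\aCandidateJourney|=2$, so the hook can only finish trip~$\aTrip_2$; hence $\aWitnessJourney=\langle[\sourceEvent,\originEvent],[\destinationEvent,\aTrip_2[j]],\aPath\rangle$ for some index $j>\destinationIndex$, where $\aPath$ is a (possibly empty) final transfer from $\astop(\aTrip_2[j])$ to $\targetVertex$. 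If $j=\targetIndex$, then $\aPath$ is the empty transfer and $\aWitnessJourney=\aCandidateJourney$, contradicting strong domination; hence $\aTrip_2[j]$ is not a stop event of $\aCandidateJourney$ and therefore carries the worst-case arrival delay~$\maxDelay$ in $\bestDelayScenario(\aCandidateJourney)$. Comparing arrival times in $\bestDelayScenario(\aCandidateJourney)$, where both journeys use two trips so that strong domination forces a strictly smaller arrival time, gives $\arrivalTime(\aTrip_2[j])+\maxDelay+\transferTime(\aPath)<\arrivalTime(\targetEvent)$, hence $\arrivalTime(\aTrip_2[j])<\arrivalTime(\targetEvent)$ and thus $j<\targetIndex$ by monotonicity of scheduled arrival times along a trip. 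Pinning down the shape of the witness and deriving $j<\targetIndex$ is the main obstacle, since it is exactly where one must keep track of which stop events receive best-case versus worst-case delays.

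Now I would truncate $\aWitnessJourney$ to the candidate $\aCandidateJourney':=\langle[\sourceEvent,\originEvent],[\destinationEvent,\aTrip_2[j]],\langle\astop(\aTrip_2[j])\rangle\rangle$, which lies in $\candidateJourneys(\destinationEvent)$ and has target index $j<\targetIndex$. It remains to show $\aCandidateJourney'\in\virtualCandidateJourneys(\destinationEvent)$, which contradicts the minimality of~$\targetIndex$. Its standard prefixes $\sourcePrefix$, $\originPrefix$, $\destinationPrefix$ coincide with those of $\aCandidateJourney$ and are therefore Pareto-optimal in $\virtualDelayScenario(\sourcePrefix)$ because $\aCandidateJourney\in\virtualCandidateJourneys(\destinationEvent)$, so only $\aCandidateJourney'$ itself must be checked. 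If some witness $\aWitnessJourney'$ strongly dominated $\aCandidateJourney'$ in $\virtualDelayScenario(\sourcePrefix)=(\bestDelayScenario,\bestDelayScenario(\sourcePrefix))$, then its arrival time in $\bestDelayScenario(\sourcePrefix)$ would be at most the arrival time of $\aCandidateJourney'$ in $\bestDelayScenario$, which equals $\arrivalTime(\aTrip_2[j])<\arrivalTime(\targetEvent)$, the arrival time of $\aCandidateJourney$ in $\bestDelayScenario$; moreover $|\aWitnessJourney'|\leq2=|\aCandidateJourney|$, and $\aCandidateJourney$ and $\aCandidateJourney'$ share the same source prefix and (empty) initial transfer, so feasibility conditions and departure times carry over unchanged. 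Hence the same $\aWitnessJourney'$ would strongly dominate $\aCandidateJourney$ in $\virtualDelayScenario(\sourcePrefix)$, contradicting $\aCandidateJourney\in\virtualCandidateJourneys(\destinationEvent)$. Therefore $\aCandidateJourney'\in\virtualCandidateJourneys(\destinationEvent)$, the desired contradiction, so $\aCandidateJourney\in\optimalCandidateJourneys(\destinationEvent)$ and $\optimalCandidateJourneys(\destinationEvent)\neq\emptyset$.
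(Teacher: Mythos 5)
Your proof is correct and follows essentially the same route as the paper: both directions rest on Lemma~\ref{th:app:optimal-virtual-candidates} and Lemma~\ref{th:app:destination-witnesses}, and the hard direction is in each case an extremal descent argument exploiting that a $(\destinationEvent,\targetEvent)$-witness is itself (after truncation) a candidate in~$\candidateJourneys(\destinationEvent)$. The only cosmetic difference is that you pick a candidate of minimal target index and show its would-be witness yields a smaller element of~$\virtualCandidateJourneys(\destinationEvent)$, whereas the paper picks an undominated $(\destinationEvent,\targetEvent)$-witness and chains dominations; your explicit derivation of $j<\targetIndex$ fills in a detail the paper leaves implicit.
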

\begin{proof}
	By Lemma~\ref{th:app:optimal-virtual-candidates}, $\optimalCandidateJourneys(\destinationEvent)\subseteq\virtualCandidateJourneys(\destinationEvent)$.
	It therefore remains to be shown that if~$\optimalCandidateJourneys(\destinationEvent)$ is empty, then so is~$\virtualCandidateJourneys(\destinationEvent)$.
	Assume that~$\optimalCandidateJourneys(\destinationEvent)$ is empty but there is at least one candidate~$\aCandidateJourney=\candidateSequence\in\virtualCandidateJourneys(\destinationEvent)$.
	By Lemma~\ref{th:app:destination-witnesses}, $\aCandidateJourney$ is strongly dominated by an~$(\destinationEvent,\targetEvent)$-witness~$\aWitnessJourney=\langle[\sourceEvent,\originEvent],[\destinationEvent,\targetEvent']\rangle$ in~$\bestDelayScenario(\aCandidateJourney)$.
	W.l.o.g., choose~$\aWitnessJourney$ such that it is not strongly dominated by another~$(\destinationEvent,\targetEvent)$-witness in~$\bestDelayScenario(\aCandidateJourney)$.
	Since such a witness would not use~$\targetEvent'$, it follows that~$\aWitnessJourney$ is also not strongly dominated by an~$(\destinationEvent,\targetEvent')$-witness in~$\bestDelayScenario(\aWitnessJourney)$.
	
	Note that~$\aWitnessJourney\in\candidateJourneys(\destinationEvent)$ is itself a candidate.
	However, $\aWitnessJourney$ is not included in~$\optimalCandidateJourneys(\destinationEvent)$, which is empty by our assumption, and therefore not in~$\virtualCandidateJourneys(\destinationEvent)$ by Lemma~\ref{th:app:optimal-virtual-candidates}.
	Thus, $\aWitnessJourney$ is strongly dominated by another witness~$\overline{\aWitnessJourney}$ in~$\virtualDelayScenario(\sourcePrefix)$.
	Hence, $\overline{\aWitnessJourney}$ as evaluated in~$\bestDelayScenario(\sourcePrefix)$ strongly dominates~$\aWitnessJourney$ as evaluated in~$\bestDelayScenario(\aWitnessJourney)$, which in turn dominates~$\aWitnessJourney$ as evaluated in~$\bestDelayScenario(\aCandidateJourney)$.
	Since~$\aWitnessJourney$ strongly dominates~$\aCandidateJourney$ in~$\bestDelayScenario(\aCandidateJourney)$, it follows that~$\overline{\aWitnessJourney}$ strongly dominates~$\aCandidateJourney$ in~$\virtualDelayScenario(\sourcePrefix)$.
	This contradicts~$\aCandidateJourney\in\virtualCandidateJourneys(\destinationEvent)$.
\end{proof}

\subparagraph*{Witness Indices.}
\begin{figure}
	\centering
	\begin{tikzpicture}
\node (e0) at (0.00, 0.00) {};%
\node (e1) at (2.00, 0.00) {};%
\node (e2) at (4.00, 0.00) {};%
\node (e3) at (6.00, 0.00) {};%
\node (e4) at (8.00, 0.00) {};%
\node (e5) at (10.00, 0.00) {};%

\node (v0) at (0.00, -0.50) {};%
\node (v1) at (2.00, -0.50) {};%
\node (v2) at (4.00, -0.50) {};%
\node (v3) at (6.00, -0.50) {};%
\node (v4) at (8.00, -0.50) {};%
\node (v5) at (10.00, -0.50) {};%

\node [anchor=east] at (-0.75, 1.00) {$\witnessArrivalTime(\sourcePrefix,\aVertex_i,2)$:};%
\node at (0.00, 1.00) {8:50};%
\node at (2.00, 1.00) {8:54};%
\node (wit2_2) at (4.00, 1.00) {9:12};%
\node at (6.00, 1.00) {9:03};%
\node (wit2_4) at (8.00, 1.00) {9:21};%
\node at (10.00, 1.00) {9:07};%

\node [anchor=east] at (-0.75, 0.50) {$\arrivalTime(\stopEvent_i)$:};%
\node at (0.00, 0.50) {9:00};%
\node at (2.00, 0.50) {9:05};%
\node (arr_2) at (4.00, 0.50) {9:08};%
\node at (6.00, 0.50) {9:15};%
\node (arr_4) at (8.00, 0.50) {9:20};%
\node at (10.00, 0.50) {9:24};%

\node [anchor=east] at (-0.75, -1.00) {$\departureTime(\stopEvent_i)$:};%
\node at (0.00, -1.00) {9:02};%
\node (dep_1) at (2.00, -1.00) {9:06};%
\node at (4.00, -1.00) {9:10};%
\node at (6.00, -1.00) {9:16};%
\node (dep_4) at (8.00, -1.00) {9:21};%
\node (dep_5) at (10.00, -1.00) {9:25};%

\node [anchor=east] at (-0.75, -1.50) {$\witnessArrivalTime(\sourcePrefix,\aVertex_i,1)$:};%
\node at (0.00, -1.50) {9:30};%
\node (wit1_1) at (2.00, -1.50) {8:54};%
\node at (4.00, -1.50) {9:12};%
\node at (6.00, -1.50) {9:18};%
\node (wit1_4) at (8.00, -1.50) {9:21};%
\node (wit1_5) at (10.00, -1.50) {9:07};%

\node at (4.00, -2.00) {$\uparrow$};%
\node at (8.00, -2.00) {$\uparrow$};%
\node at (4.00, -2.50) {$\entryIndex(\sourcePrefix,\aTrip)=\witnessIndex(\sourcePrefix,\aTrip)=2$};%
\node at (8.00, -2.50) {$\exitIndex(\sourcePrefix,\aTrip)=4$};%

\node at (-1.50, 1.75) {\textcolor{KITorange}{$\candidateExitEvents(\sourcePrefix,\aTrip)=\{\stopEvent_2,\stopEvent_4\}$}};%
\node at (-1.50, -2.25) {\textcolor{KITlilac}{$\witnessEntryEvents(\sourcePrefix,\aTrip)=\{\stopEvent_1,\stopEvent_4,\stopEvent_5\}$}};%

\node [stop,fit=(e0)(v0)] {};%
\node [stop,fit=(e1)(v1)] {};%
\node [stop,fit=(e2)(v2)] {};%
\node [stop,fit=(e3)(v3)] {};%
\node [stop,fit=(e4)(v4)] {};%
\node [stop,fit=(e5)(v5)] {};%

\node [rectangle, line width=.5pt, inner sep=-2pt,rounded corners=0.1cm,fit=(arr_2)(wit2_2),draw=KITorange,fill=KITorange!15] {};%
\node [rectangle, line width=.5pt, inner sep=-2pt,rounded corners=0.1cm,fit=(arr_4)(wit2_4),draw=KITorange,fill=KITorange!15] {};%
\node [rectangle, line width=.5pt, inner sep=-2pt,rounded corners=0.1cm,fit=(dep_1)(wit1_1),draw=KITlilac,fill=KITlilac!15] {};%
\node [rectangle, line width=.5pt, inner sep=-2pt,rounded corners=0.1cm,fit=(dep_4)(wit1_4),draw=KITlilac,fill=KITlilac!15] {};%
\node [rectangle, line width=.5pt, inner sep=-2pt,rounded corners=0.1cm,fit=(dep_5)(wit1_5),draw=KITlilac,fill=KITlilac!15] {};%

\node at (wit2_2) [text=nodeColor!100] {\small{9:12}};%
\node at (wit2_4) [text=nodeColor!100] {\small{9:21}};%
\node at (arr_2) [text=nodeColor!100] {\small{9:08}};%
\node at (arr_4) [text=nodeColor!100] {\small{9:20}};%
\node at (dep_1) [text=nodeColor!100] {\small{9:06}};%
\node at (dep_4) [text=nodeColor!100] {\small{9:21}};%
\node at (dep_5) [text=nodeColor!100] {\small{9:25}};%
\node at (wit1_1) [text=nodeColor!100] {\small{8:54}};%
\node at (wit1_4) [text=nodeColor!100] {\small{9:21}};%
\node at (wit1_5) [text=nodeColor!100] {\small{9:07}};%

\draw [tripColor1,route] (e0) -- (e1) -- (e2) -- (e3) -- (e4) -- (e5);

\node (e0_v) at (e0) [vertex,draw=tripColor1,fill=tripColor1!15] {\gs};%
\node (e1_v) at (e1) [vertex,draw=tripColor1,fill=tripColor1!15] {\gs};%
\node (e2_v) at (e2) [vertex,draw=tripColor1,fill=tripColor1!15] {\gs};%
\node (e3_v) at (e3) [vertex,draw=tripColor1,fill=tripColor1!15] {\gs};%
\node (e4_v) at (e4) [vertex,draw=tripColor1,fill=tripColor1!15] {\gs};%
\node (e5_v) at (e5) [vertex,draw=tripColor1,fill=tripColor1!15] {\gs};%

\node at (e0) [text=nodeColor!100] {\small{$\stopEvent_0$}};%
\node at (e1) [text=nodeColor!100] {\small{$\stopEvent_1$}};%
\node at (e2) [text=nodeColor!100] {\small{$\stopEvent_2$}};%
\node at (e3) [text=nodeColor!100] {\small{$\stopEvent_3$}};%
\node at (e4) [text=nodeColor!100] {\small{$\stopEvent_4$}};%
\node at (e5) [text=nodeColor!100] {\small{$\stopEvent_5$}};%

\node at (v0) [text=nodeColor!100] {\small{$\aVertex_0$}};%
\node at (v1) [text=nodeColor!100] {\small{$\aVertex_1$}};%
\node at (v2) [text=nodeColor!100] {\small{$\aVertex_2$}};%
\node at (v3) [text=nodeColor!100] {\small{$\aVertex_3$}};%
\node at (v4) [text=nodeColor!100] {\small{$\aVertex_4$}};%
\node at (v5) [text=nodeColor!100] {\small{$\aVertex_5$}};%
\end{tikzpicture}
	\caption{A trip~$\aTrip$ with entry index~$2$, exit index~$4$ and witness index~$2$.}
	\label{fig:reachedIndex}
\end{figure}

To compute~$\virtualCandidateJourneys(\destinationEvent)$ efficiently, we adapt the concept of reached indices from TB~\cite{Wit15} (see Figure~\ref{fig:reachedIndex}).
Consider a stop event~$\stopEvent$.
If~$\departureTime(\stopEvent)\geq\witnessArrivalTime(\sourcePrefix,\astop(\stopEvent),1)$, we call~$\stopEvent$ a \emph{w-in event} since there is a witness in~$\bestDelayScenario(\sourcePrefix)$ that is able to enter its second trip at~$\stopEvent$.
If~$\arrivalTime(\stopEvent)\leq\witnessArrivalTime(\sourcePrefix,\astop(\stopEvent),2)$, we call~$\stopEvent$ a~\emph{c-out event} since candidates ending with~$\stopEvent$ are not strongly dominated by a witness in~$\virtualDelayScenario(\sourcePrefix)$.
For a trip~$\aTrip$, let~$\witnessEntryEvents(\sourcePrefix,\aTrip)$ denote the set of w-in events and~$\candidateExitEvents(\sourcePrefix,\aTrip)$ the set of c-out events in~$\aTrip$.
We define the~\emph{entry index}~$\entryIndex(\sourcePrefix,\aTrip)$ and~\emph{exit index}~$\exitIndex(\sourcePrefix,\aTrip)$ as
\begin{align*}
	\entryIndex(\sourcePrefix,\aTrip)&:=\begin{cases}
		|\aTrip| & \text{if } \witnessEntryEvents(\sourcePrefix,\aTrip)=\emptyset,\\
		\min\limits_{\aTrip[i]\in\witnessEntryEvents(\sourcePrefix,\aTrip)} i + 1 & \text{otherwise,}
	\end{cases}\\
	\exitIndex(\sourcePrefix,\aTrip)&:=\begin{cases}
		|\aTrip| & \text{if } \candidateExitEvents(\sourcePrefix,\aTrip)=\emptyset,\\
		\max\limits_{\aTrip[i]\in\candidateExitEvents(\sourcePrefix,\aTrip)} i & \text{otherwise.}
	\end{cases}
\end{align*}
The entry and exit indices are combined into the~\emph{witness index}
\[\witnessIndex(\sourcePrefix,\aTrip):=\min(\entryIndex(\sourcePrefix,\aTrip),\exitIndex(\sourcePrefix,\aTrip)).\]

Based on these definitions, we can make two observations for a destination event $\aTrip[i]\in\feasibleDestinationEvents$:
The destination prefix~$\destinationPrefix=\langle[\sourceEvent,\originEvent],\aTrip[i]\rangle$ is Pareto-optimal in~$\virtualDelayScenario(\sourcePrefix)$ iff~$\aTrip[i]$ is not preceded by any w-in events in~$\aTrip$, i.e.,~$i<\entryIndex(\sourcePrefix,\aTrip)$.
Furthermore, at least one candidate in~$\candidateJourneys(\aTrip[i])$ is Pareto-optimal in~$\virtualDelayScenario(\sourcePrefix)$ iff~$\aTrip[i]$ is succeeded by at least one c-out event, i.e.,~$i<\exitIndex(\sourcePrefix,\aTrip)$.
It follows that~$\aTrip[i]$ is contained in~$\optimalDestinationEvents$ iff~$i<\witnessIndex(\sourcePrefix,\aTrip)$, as shown by Lemma~\ref{th:app:full-witness-condition}.

\begin{lemma}
	\label{th:app:full-witness-condition}
	The set~$\optimalDestinationEvents$ is equal to~$\{\aTrip[i]\in\feasibleDestinationEvents\mid{}i<\witnessIndex(\sourcePrefix,\aTrip)\}$.
\end{lemma}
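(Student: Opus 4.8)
The plan is to peel the statement back to a purely combinatorial claim about the two reached indices. By Lemma~\ref{th:app:full-witness-condition-candidate} we have $\optimalDestinationEvents=\{\destinationEvent\in\feasibleDestinationEvents\mid\virtualCandidateJourneys(\destinationEvent)\neq\emptyset\}$, so it suffices to show that for a destination event $\aTrip[i]\in\feasibleDestinationEvents$ we have $\virtualCandidateJourneys(\aTrip[i])\neq\emptyset$ iff $i<\witnessIndex(\sourcePrefix,\aTrip)$. First I would decompose prefix-optimality of a candidate. Any candidate in $\candidateJourneys(\aTrip[i])$ has the form $\aCandidateJourney=\langle[\sourceEvent,\originEvent],[\aTrip[i],\aTrip[j]]\rangle$ for some $j>i$, and its standard prefixes are $\sourcePrefix$, $\originPrefix$, $\destinationPrefix$, and $\aCandidateJourney$ itself, where $\destinationPrefix=\langle[\sourceEvent,\originEvent],\aTrip[i]\rangle$ depends only on $\aTrip[i]$. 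By the standing assumption of Section~\ref{sec:optimality-tests}, $\sourcePrefix$ is Pareto-optimal in every scenario; and since $\aTrip[i]\in\feasibleDestinationEvents$ implies $\joinLimit(\astop(\aTrip[i]))\geq0$, Lemma~\ref{th:app:join-limit-virt} shows that $\originPrefix=\originPrefix(\astop(\aTrip[i]))$ is prefix-optimal in $\virtualDelayScenario(\sourcePrefix)$, hence $\sourcePrefix$ and $\originPrefix$ are Pareto-optimal there. Therefore $\aCandidateJourney$ is prefix-optimal in $\virtualDelayScenario(\sourcePrefix)$ iff $\destinationPrefix$ and the full journey $\aCandidateJourney$ are both Pareto-optimal there, and consequently $\virtualCandidateJourneys(\aTrip[i])\neq\emptyset$ iff $\destinationPrefix$ is Pareto-optimal in $\virtualDelayScenario(\sourcePrefix)$ and there is some $j>i$ for which the full candidate with target event $\aTrip[j]$ is Pareto-optimal in $\virtualDelayScenario(\sourcePrefix)$.

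The heart of the argument is to reduce these two conditions to the entry and exit indices, which are the two observations stated informally before the lemma. For the first, note that $\destinationPrefix$ ends with a trip segment prefix of $\aTrip$, so by the dominance rules for journey prefixes any witness strongly dominating it must likewise end with a trip segment prefix $\aTrip[i']$ (as a prefix, $\cdot$ omitted) of the same trip, using at most two trips and with $i'\leq i$. Such a witness exists precisely when a sub-journey from $\sourceVertex$ with at most one trip, departing no earlier than $\departureTime(\sourceEvent)+\maxDelay$, reaches $\astop(\aTrip[i'])$ by time $\departureTime(\aTrip[i'])$ in $\bestDelayScenario(\sourcePrefix)$; here I use that $\bestDelayScenario(\sourcePrefix)$ assigns worst-case (zero) departure delay to the events of $\aTrip$, so the boarding time is the undelayed $\departureTime(\aTrip[i'])$. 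By definition of $\witnessArrivalTime(\sourcePrefix,\cdot,1)$ this is exactly the condition that $\aTrip[i']$ is a w-in event, and the strong-domination requirement then forces $i'<i$ (a witness boarding at $i'=i$ cannot also beat the trip count unless it used strictly fewer trips, a boundary case that is folded into the ``$+1$'' of the entry index). Hence $\destinationPrefix$ is Pareto-optimal in $\virtualDelayScenario(\sourcePrefix)$ iff no w-in event of $\aTrip$ has index below $i$, i.e.\ iff $i<\entryIndex(\sourcePrefix,\aTrip)$. Symmetrically, a full candidate with target event $\aTrip[j]$ is a proper journey with two trips arriving at $\astop(\aTrip[j])$ at the undelayed time $\arrivalTime(\aTrip[j])$ in $\bestDelayScenario$; its witnesses are proper journeys with at most two trips, and no such witness strongly dominates it iff $\arrivalTime(\aTrip[j])\leq\witnessArrivalTime(\sourcePrefix,\astop(\aTrip[j]),2)$, i.e.\ iff $\aTrip[j]$ is a c-out event. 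Thus some $j>i$ yields a Pareto-optimal full candidate iff $\aTrip[i]$ is succeeded by a c-out event, i.e.\ iff $i<\exitIndex(\sourcePrefix,\aTrip)$.

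Combining the two reductions, $\aTrip[i]\in\feasibleDestinationEvents$ lies in $\optimalDestinationEvents$ iff $i<\entryIndex(\sourcePrefix,\aTrip)$ and $i<\exitIndex(\sourcePrefix,\aTrip)$, which is exactly $i<\witnessIndex(\sourcePrefix,\aTrip)$. I expect the main obstacle to be the second paragraph: carefully identifying which partial journeys may serve as witnesses for a trip-segment-prefix endpoint versus for a proper journey, translating ``is boardable by a journey with at most one trip'' into the w-in/c-out conditions via $\witnessArrivalTime$, and making the index bookkeeping --- the $+1$ offset in $\entryIndex$, the $\max$-over-c-out-events (or $|\aTrip|$) definition of $\exitIndex$, and the handling of arrival-time ties against witnesses with fewer trips --- match the informal ``preceded by a w-in event'' / ``succeeded by a c-out event'' phrasing. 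The reductions in the first and last paragraphs are then routine.
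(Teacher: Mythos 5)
Your proposal is correct and follows essentially the same route as the paper's proof: reduce via Lemma~\ref{th:app:full-witness-condition-candidate} to the non-emptiness of~$\virtualCandidateJourneys(\aTrip[i])$, dispose of the source and origin prefixes using the standing assumption and Lemma~\ref{th:app:join-limit-virt}, and then identify Pareto-optimality of the destination prefix and of some full candidate with the conditions~$i<\entryIndex(\sourcePrefix,\aTrip)$ and~$i<\exitIndex(\sourcePrefix,\aTrip)$, respectively. The paper leaves the last identification to the informal observations stated just before the lemma, whereas you spell it out (including the tie-handling in the entry index), so your write-up is, if anything, more detailed than the original.
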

\begin{proof}
	Consider a destination event~$\aTrip[i]\in\feasibleDestinationEvents$.
	By Lemma~\ref{th:app:join-limit-virt}, it follows from~$\joinLimit(\astop(\aTrip[i]))\geq{}0$ that~$\originPrefix$ is prefix-optimal in~$\virtualDelayScenario(\sourcePrefix)$.
	Hence, a candidate~$\aCandidateJourney\in\candidateJourneys(\aTrip[i])$ is prefix-optimal in~$\virtualDelayScenario(\sourcePrefix)$ iff~$\aCandidateJourney$ and the destination prefix~$\langle[\sourceEvent,\originEvent],\aTrip[i]\rangle$ are Pareto-optimal in~$\virtualDelayScenario(\sourcePrefix)$.
	It follows that~$\virtualCandidateJourneys(\aTrip[i])\neq\emptyset$ iff~$i<\witnessIndex(\sourcePrefix,\aTrip)$.
	By Lemma~\ref{th:app:full-witness-condition-candidate}, this is equivalent to~$\optimalCandidateJourneys(\aTrip[i])\neq\emptyset$.
\end{proof}

For a destination event~$\aTrip[i]$, let~$\virtualCandidateJourneys(\aTrip[i])$ denote the set of target events that occur in the candidates from~$\virtualTargetEvents(\aTrip[i])$.
Lemma~\ref{th:app:virtual-target-events} shows that these are exactly the c-out events succeeding~$\aTrip[i]$ in~$\aTrip$.

\begin{lemma}
	\label{th:app:virtual-target-events}
	For a destination event~$\aTrip[i]\in\optimalDestinationEvents$, $\virtualTargetEvents(\aTrip[i])=\candidateExitEvents(\sourcePrefix,\aTrip)\cap\succeedingEvents(\aTrip[i])$.
\end{lemma}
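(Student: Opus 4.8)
The plan is to reduce the statement to the per‑candidate claim that a candidate $\aCandidateJourney=\langle[\sourceEvent,\originEvent],[\aTrip[i],\aTrip[j]]\rangle$ with $j>i$ lies in $\virtualCandidateJourneys(\aTrip[i])$ if and only if its target event $\aTrip[j]$ is a c‑out event, and then to combine this with the prefix‑optimality decomposition already used in the proof of Lemma~\ref{th:app:full-witness-condition}. As a first step I would note that every candidate in $\candidateJourneys(\aTrip[i])$ has the form $\langle[\sourceEvent,\originEvent],[\aTrip[i],\aTrip[j]]\rangle$, and since the trip segment $[\aTrip[i],\aTrip[j]]$ must be non‑degenerate, the target event $\aTrip[j]$ ranges exactly over $\succeedingEvents(\aTrip[i])$. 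Hence it suffices to show that such a candidate is prefix‑optimal in $\virtualDelayScenario(\sourcePrefix)$ precisely when $\aTrip[j]\in\candidateExitEvents(\sourcePrefix,\aTrip)$.

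Next I would decompose prefix‑optimality into Pareto‑optimality of the individual standard prefixes, reusing the argument from the proof of Lemma~\ref{th:app:full-witness-condition}. The source prefix $\sourcePrefix$ is Pareto‑optimal in every delay scenario by the standing assumption. Since $\aTrip[i]\in\feasibleDestinationEvents$ gives $\joinLimit(\astop(\aTrip[i]))\geq 0$, Lemma~\ref{th:app:join-limit-virt} shows that $\originPrefix$ is prefix‑optimal in $\virtualDelayScenario(\sourcePrefix)$. Since $\aTrip[i]\in\optimalDestinationEvents$ means $i<\witnessIndex(\sourcePrefix,\aTrip)\leq\entryIndex(\sourcePrefix,\aTrip)$, the destination prefix $\destinationPrefix$ is Pareto‑optimal in $\virtualDelayScenario(\sourcePrefix)$ by the entry‑index observation. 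The only remaining standard prefix is $\aCandidateJourney$ itself, so $\aCandidateJourney$ is prefix‑optimal in $\virtualDelayScenario(\sourcePrefix)$ iff it is Pareto‑optimal there.

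I would then connect Pareto‑optimality of $\aCandidateJourney$ to the c‑out condition. In $\virtualDelayScenario(\sourcePrefix)=(\bestDelayScenario,\bestDelayScenario(\sourcePrefix))$ the candidate is evaluated in $\bestDelayScenario$, where it departs at $\departureTime(\sourceEvent)+\maxDelay$, uses two trips, and arrives at $\targetVertex=\astop(\aTrip[j])$ at time $\arrivalTime(\aTrip[j])$; its intermediate transfer is feasible because $\feasibilityLimit(\aTrip[i])\geq 0$, which follows from $\aTrip[i]\in\feasibleDestinationEvents$. A witness strongly dominating it is evaluated in $\bestDelayScenario(\sourcePrefix)$, must reach $\targetVertex$ with at most two trips, and must be feasible for departure time $\departureTime(\sourceEvent)+\maxDelay$, so by definition of $\witnessArrivalTime$ it arrives no earlier than $\witnessArrivalTime(\sourcePrefix,\targetVertex,2)$. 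If $\aTrip[j]$ is not a c‑out event, i.e.\ $\arrivalTime(\aTrip[j])>\witnessArrivalTime(\sourcePrefix,\targetVertex,2)$, a journey attaining this minimum arrives strictly earlier and strongly dominates $\aCandidateJourney$, so $\aTrip[j]\notin\virtualTargetEvents(\aTrip[i])$; this yields $\virtualTargetEvents(\aTrip[i])\subseteq\candidateExitEvents(\sourcePrefix,\aTrip)\cap\succeedingEvents(\aTrip[i])$. Conversely, if $\aTrip[j]$ is a c‑out event, then no feasible $\leq 2$‑trip journey to $\targetVertex$ arrives strictly before $\aCandidateJourney$, which is exactly the statement that candidates ending with $\aTrip[j]$ are not strongly dominated in $\virtualDelayScenario(\sourcePrefix)$, so $\aCandidateJourney$ is Pareto‑optimal and hence prefix‑optimal there, giving the reverse inclusion.

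The hard part will be the last equivalence, and specifically the tie‑breaking bookkeeping: a dominating witness could in principle have the same arrival time as $\aCandidateJourney$ but strictly fewer trips, so one must argue — consistently with how the c‑out event is defined through $\witnessArrivalTime(\sourcePrefix,\cdot,2)$ — that the only strong domination that can arise against a two‑trip candidate comes from a $\leq 2$‑trip journey with a strictly earlier arrival time. Everything else (the reduction to a single candidate, the prefix‑optimality decomposition, and the identification of the admissible target events with $\succeedingEvents(\aTrip[i])$) is routine given Lemmas~\ref{th:app:join-limit-virt} and~\ref{th:app:full-witness-condition} together with the entry‑ and exit‑index observations.
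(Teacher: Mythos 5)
Your proposal is correct and follows essentially the same route as the paper: reduce to the single candidate determined by each $\aTrip[j]\in\succeedingEvents(\aTrip[i])$, observe that $\aTrip[i]\in\optimalDestinationEvents$ already guarantees prefix-optimality of the destination prefix in $\virtualDelayScenario(\sourcePrefix)$, and then identify Pareto-optimality of the full candidate with the c-out condition. The tie-breaking subtlety you flag in the final equivalence is real, but the paper absorbs it into the very definition of c-out events (glossed as ``candidates ending with~$\stopEvent$ are not strongly dominated by a witness in~$\virtualDelayScenario(\sourcePrefix)$''), so your write-up is if anything more explicit than the paper's one-line appeal to that definition.
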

\begin{proof}
	Consider a target event~$\aTrip[j]\in\succeedingEvents(\aTrip[i])$ and the corresponding candidate~$\aCandidateJourney=\langle[\sourceEvent,\originEvent],\aTripSegment{i}{j}\rangle$.
	Since~$\aTrip[i]\in\optimalDestinationEvents$, the destination prefix~$\destinationPrefix=\destinationPrefixSequence$ is prefix-optimal in~$\virtualDelayScenario(\sourcePrefix)$.
	Thus, $\aTrip[j]\in\virtualTargetEvents(\aTrip[i])$ iff~$\aCandidateJourney$ is Pareto-optimal in~$\virtualDelayScenario(\sourcePrefix)$.
	By definition, this is the case iff~$\aTrip[j]\in\candidateExitEvents(\sourcePrefix,\aTrip)$.
\end{proof}

\subsection{Split Limit}
\label{sec:optimality-tests:split}
\begin{figure}
	\centering
	\begin{tikzpicture}
\node (es) at (0.00, 0.00) {};%
\node (eo) at (4.00, 0.00) {};%
\node (v_inter) at (6.00, 0.00) {};%
\node (e1) at (6.00, -1.00) {};%
\node (ed) at (8.00, -1.00) {};%
\node (et) at (12.00, -1.00) {};%
\node (ed2) at (7.00, 1.25) {};%
\node (et2) at (10.00, 1.25) {};%
\node (v_final) at (11.50, 0.00) {};%

\node (vs) at (0.00,-0.50) {};%
\node (vo) at (4.00,-0.50) {};%
\node (vd) at (8.00,-1.50) {};%
\node (vt) at (12.00,-1.50) {};%

\draw [edge]  (eo) -- (v_inter);
\draw [edge]  (v_inter) -- (e1);
\draw [edge]  (v_inter) -- (ed);
\draw [edge]  (v_inter) -- (ed2);
\draw [edge]  (et2) -- (v_final);
\draw [edge]  (v_inter) -- (v_final);
\draw [edge]  (v_final) -- (et);

\node [stop,fit=(es)(vs)] {};%
\node [stop,fit=(eo)(vo)] {};%
\node [stop,fit=(ed)(vd)] {};%
\node [stop,fit=(et)(vt)] {};%
\node [stop,fit=(ed2)] {};%
\node [stop,fit=(et2)] {};%
\node [stop,fit=(e1)] {};%

\draw [tripColor1,route] (es) -- (eo);
\draw [tripColor2,route] (e1) -- (ed) -- (et);
\draw [tripColor4,route] (ed2) -- (et2);

\node (es_v) at (es) [vertex,draw=tripColor1,fill=tripColor1!15] {\gs};%
\node (eo_v) at (eo) [vertex,draw=tripColor1,fill=tripColor1!15] {\gs};%
\node (v_inter_v) at (v_inter) [vertex,draw=KITblack,fill=KITblack!15] {};%
\node (e1_v) at (e1) [vertex,draw=tripColor2,fill=tripColor2!15] {\gs};%
\node (ed_v) at (ed) [vertex,draw=tripColor2,fill=tripColor2!15] {\gs};%
\node (et_v) at (et) [vertex,draw=tripColor2,fill=tripColor2!15] {\gs};%
\node (ed2_v) at (ed2) [vertex,draw=tripColor4,fill=tripColor4!15] {\gs};%
\node (et2_v) at (et2) [vertex,draw=tripColor4,fill=tripColor4!15] {\gs};%
\node (v_final_v) at (v_final) [vertex,draw=KITblack,fill=KITblack!15] {};%

\node at (es) [text=nodeColor!100] {\small{$\sourceEvent$}};%
\node at (eo) [text=nodeColor!100] {\small{$\originEvent$}};%
\node at (e1) [text=nodeColor!100] {\small{$\stopEvent_1$}};%
\node at (ed) [text=nodeColor!100] {\small{$\destinationEvent$}};%
\node at (et) [text=nodeColor!100] {\small{$\targetEvent$}};%
\node at (ed2) [text=nodeColor!100] {\small{$\destinationEvent'$}};%
\node at (et2) [text=nodeColor!100] {\small{$\targetEvent'$}};%

\node at (vs) [text=nodeColor!100] {\small{$\sourceVertex$}};%
\node at (vo) [text=nodeColor!100] {\small{$\originVertex$}};%
\node at (vd) [text=nodeColor!100] {\small{$\destinationVertex$}};%
\node at (vt) [text=nodeColor!100] {\small{$\targetVertex$}};%
\end{tikzpicture}%
	\caption{%
		An example network showcasing the three split witness types for a candidate~$\aCandidateJourney=\candidateSequence$.
		The partial journey~$\langle[\sourceEvent,\originEvent],\stopEvent_1\rangle$ is a~$(\originEvent,\destinationEvent)$-witness.
		The journey~$\langle[\sourceEvent,\originEvent],\targetVertex\rangle$ that takes a direct transfer from~$\originVertex$ to~$\targetVertex$ is a direct~$(\originEvent,\targetEvent)$-witness, whereas the journey~$\langle[\sourceEvent,\originEvent],[\destinationEvent',\targetEvent']\rangle$ that takes a second trip is an indirect~$(\originEvent,\targetEvent)$-witness.
	}%
	\label{fig:delay:splitWitnesses}%
\end{figure}
The split limit~$\splitLimit(\aCandidateJourney)$ of a candidate~$\aCandidateJourney=\candidateSequence$ is the lowest delay~$\delay\in[0,\maxDelay+1]$ such that~$\parameterizedBestDelayScenario(\aCandidateJourney,\delay)$ is split-avoiding.
Since split witnesses do not use the destination event~$\destinationEvent$ or target event~$\targetEvent$, we can replace~$\parameterizedBestDelayScenario(\aCandidateJourney,\delay)$ with the parameterized scenario~$\parameterizedBestDelayScenario(\sourcePrefix,\delay)$ for the source prefix~$\sourcePrefix$.
As shown in Figures~\ref{fig:witnessTypes:origin-destination} and~\ref{fig:witnessTypes:origin-target}, split witnesses are either~$(\originEvent,\destinationEvent)$-witnesses for the destination prefix~$\destinationPrefix$ or~$(\originEvent,\targetEvent)$-witnesses for the candidate~$\aCandidateJourney$ itself.
We divide the latter further into~\emph{direct~$(\originEvent,\targetEvent)$-witnesses}, which use only one trip, and~\emph{indirect~$(\originEvent,\targetEvent)$-witnesses}, which use two.
The three split witness types are shown in Figure~\ref{fig:delay:splitWitnesses}.

Because each split witness type has a different effect on the split limit, we divide it into three components.
The~\emph{\destinationIndex-split limit}~$\destinationSplitLimit(\destinationEvent)$ considers~$(\originEvent,\destinationEvent)$-witnesses, the~\emph{1\targetIndex-split limit}~$\directTargetSplitLimit(\targetEvent)$ considers direct~$(\originEvent,\targetEvent)$-witnesses, and the~\emph{2\targetIndex-split limit}~$\indirectTargetSplitLimit(\targetEvent)$ considers indirect~$(\originEvent,\targetEvent)$-witnesses.
The overall split limit~$\splitLimit(\aCandidateJourney)$ is the maximum of all three.
In the following, we establish individual formulas for the three split limits.

Direct~$(\originEvent,\targetEvent)$-witnesses differ from the other two types in the effect that the arrival delay of the origin event~$\originEvent$ has on them:
For a direct~$(\originEvent,\targetEvent)$-witness~$\aWitnessJourney$, the origin delay does not affect whether~$\aWitnessJourney$ is feasible because it does not have an intermediate transfer, but it directly influences the arrival time and therefore whether~$\aWitnessJourney$ strongly dominates~$\aCandidateJourney$.
Thus, the 1\targetIndex-split limit is the lowest delay~$\delay\in[0,\maxDelay+1]$ such that no direct~$(\originEvent,\targetEvent)$-witness has a lower arrival time than~$\aCandidateJourney$ in~$\parameterizedBestDelayScenario(\aCandidateJourney,\delay)$.

For witnesses~$\aWitnessJourney$ of the other two types, the origin delay does not affect whether~$\aWitnessJourney$ strongly dominates~$\aCandidateJourney$ or~$\destinationPrefix$, but it determines whether the intermediate transfer is feasible.
To quantify this effect, we define the~\emph{maximum witness delay}~$\maxWitnessDelay(\aWitnessJourney)$.
This is the highest delay~$\delay\in[0,\maxDelay]$ such that~$\aWitnessJourney$ is feasible in~$\parameterizedBestDelayScenario(\sourcePrefix,\delay)$, or~$-1$ if~$\aWitnessJourney$ is not feasible in any of these scenarios.
If~$\aWitnessJourney$ contains an intermediate transfer from~$\originEvent$ to a destination event~$\destinationEvent'$, then~$\maxWitnessDelay(\aWitnessJourney)$ is given by
\begin{equation}\label{eqn:max-witness-delay}
\maxWitnessDelay(\destinationEvent'):=\max(-1,\min(\departureTime(\destinationEvent')-\candidateArrivalTime(\astop(\destinationEvent')),\maxDelay)).
\end{equation}
Otherwise, $\maxWitnessDelay(\aWitnessJourney)=\maxDelay$.
Let~$X$ be a type of split witness (either $(\originEvent,\destinationEvent)$- or indirect~$(\originEvent,\targetEvent)$-witnesses) and let~$\journeys_X$ be the set of $X$-witnesses that strongly dominate the respective candidate prefix (either~$\destinationPrefix$ or~$\aCandidateJourney$) in~$\bestDelayScenario(\sourcePrefix)$, assuming they are feasible.
If~$\journeys_X$ is empty, then the~$X$-split limit is 0.
Otherwise, it is the highest maximum witness delay among~$\journeys_X$ plus~1.

\subparagraph*{\destinationIndex-Split Limit.}
The set of strongly dominating~$(\originEvent,\destinationEvent)$-witnesses is given by
\[\journeys_\destinationIndex=\{\langle[\sourceEvent,\originEvent],\destinationEvent'\rangle \mid \destinationEvent'\in\precedingEvents(\destinationEvent) \}.\]
Accordingly, the~\destinationIndex-split limit can be calculated as
\begin{align*}
	\destinationSplitLimit(\destinationEvent)&=\begin{cases}
		0 &\text{if }\precedingEvents(\destinationEvent)=\emptyset,\\
		\max\limits_{\aTrip[j]\in\precedingEvents(\destinationEvent)\cap\optimalDestinationEvents}\maxWitnessDelay(\aTrip[j])+1 &\text{otherwise.}
	\end{cases}
\end{align*}
This requires iterating over the preceding stop events of~$\destinationEvent$ and evaluating the maximum witness delay.
Additionally, Lemma~\ref{th:app:destination-split-limit} shows that stop events that are not in~$\optimalDestinationEvents$ can be skipped.

\begin{lemma}
	\label{th:app:destination-split-limit}
	Let~$\destinationEvent=\aTrip[i]\in\optimalDestinationEvents$ be a destination event.
	Then
	\[\destinationSplitLimit(\destinationEvent)=\begin{cases}
		0 &\text{if }\precedingEvents(\destinationEvent)\cap\optimalDestinationEvents=\emptyset,\\
		\max\limits_{\aTrip[k]\in\precedingEvents(\destinationEvent)\cap\optimalDestinationEvents}\maxWitnessDelay(\aTrip[k])+1 &\text{otherwise.}
	\end{cases}\]
\end{lemma}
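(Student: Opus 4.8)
The plan is to reduce the lemma to a single claim about preceding events outside $\optimalDestinationEvents$, exploiting the characterisation of $\destinationSplitLimit(\destinationEvent)$ just established. Recall that the $(\originEvent,\destinationEvent)$-witnesses strongly dominating the destination prefix $\destinationPrefix$ are exactly the partial journeys $\langle[\sourceEvent,\originEvent],\destinationEvent'\rangle$ with $\destinationEvent'\in\precedingEvents(\destinationEvent)$, and such a witness is feasible in $\parameterizedBestDelayScenario(\sourcePrefix,\delay)$ precisely when $\delay\le\maxWitnessDelay(\destinationEvent')$; hence $\destinationSplitLimit(\destinationEvent)=\max_{\destinationEvent'\in\precedingEvents(\destinationEvent)}\maxWitnessDelay(\destinationEvent')+1$, reading $\max\emptyset=-1$ so as to also cover the $\precedingEvents(\destinationEvent)=\emptyset$ case. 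Since $\maxWitnessDelay(\cdot)\ge-1$ always, it then suffices to show that every $\destinationEvent'\in\precedingEvents(\destinationEvent)\setminus\optimalDestinationEvents$ satisfies $\maxWitnessDelay(\destinationEvent')=-1$: such events contribute the minimal value $-1$ to the maximum and may therefore be dropped from it, and if all preceding events get dropped the maximum degenerates to $-1$, i.e.\ $\destinationSplitLimit(\destinationEvent)=0$. This recovers both cases of the lemma at once.

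To prove the claim, write $\destinationEvent=\aTrip[i]$ and fix some $\destinationEvent'=\aTrip[k]\in\precedingEvents(\destinationEvent)\setminus\optimalDestinationEvents$, so that $k<i$. By Lemma~\ref{th:app:full-witness-condition}, $\aTrip[i]\in\optimalDestinationEvents$ gives $i<\witnessIndex(\sourcePrefix,\aTrip)$, hence also $k<\witnessIndex(\sourcePrefix,\aTrip)$; applying the same lemma to $\aTrip[k]$, the only way $\aTrip[k]\notin\optimalDestinationEvents$ is $\destinationEvent'\notin\feasibleDestinationEvents$, i.e.\ $\feasibilityLimit(\destinationEvent')<0$ or $\joinLimit(\astop(\destinationEvent'))<0$. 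Assume, for contradiction, that $\maxWitnessDelay(\destinationEvent')\ge0$. By \eqref{eqn:max-witness-delay} this is equivalent to $\departureTime(\destinationEvent')\ge\candidateArrivalTime(\astop(\destinationEvent'))$, and substituting this into \eqref{eqn:feasibility-limit} gives $\feasibilityLimit(\destinationEvent')=\maxDelay\ge0$. Thus we must be in the case $\joinLimit(\astop(\destinationEvent'))<0$, i.e.\ $\witnessArrivalTime(\sourcePrefix,\astop(\destinationEvent'),1)<\candidateArrivalTime(\astop(\destinationEvent'))\le\departureTime(\destinationEvent')$.

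From here I would construct a full witness for $\destinationPrefix$ and derive a contradiction. Pick a $\sourceVertex$-$\astop(\destinationEvent')$-journey $\aWitnessJourney_1$ with at most one trip that attains $\witnessArrivalTime(\sourcePrefix,\astop(\destinationEvent'),1)$ in $\bestDelayScenario(\sourcePrefix)$. Since $\destinationEvent'\notin\stopEvents(\sourcePrefix)$ keeps its scheduled departure time in $\bestDelayScenario(\sourcePrefix)$, the inequality above shows $\aWitnessJourney_1$ reaches $\astop(\destinationEvent')$ early enough to enter $\aTrip$ at index $k$; prolonging it there yields a feasible partial journey $\aJourney'$ with at most two trips that ends with the trip segment prefix $\aTripSegment{k}{\cdot}$ and departs no earlier than $\departureTime(\sourceEvent)+\maxDelay$ at $\sourceVertex$ (one checks this separately in the subcase where $\aWitnessJourney_1$ has one trip and in the subcase where it has none, in which $\aJourney'$ is a one-trip journey with a non-empty initial transfer). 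By the trip-segment-prefix dominance rule and $k<i$, $\aJourney'$ strongly dominates $\destinationPrefix=\langle[\sourceEvent,\originEvent],\aTripSegment{i}{\cdot}\rangle$; and since $\departureTime(\bestDelayScenario,\destinationPrefix)=\departureTime(\sourceEvent)+\maxDelay$ as well, $\aJourney'$ evaluated in $\bestDelayScenario(\sourcePrefix)$ strongly dominates $\destinationPrefix$ evaluated in $\bestDelayScenario$, i.e.\ in $\virtualDelayScenario(\sourcePrefix)=(\bestDelayScenario,\bestDelayScenario(\sourcePrefix))$. So $\destinationPrefix$ is not Pareto-optimal in $\virtualDelayScenario(\sourcePrefix)$. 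But $\destinationEvent=\aTrip[i]\in\optimalDestinationEvents$ gives $i<\witnessIndex(\sourcePrefix,\aTrip)\le\entryIndex(\sourcePrefix,\aTrip)$, so $\destinationPrefix$ \emph{is} Pareto-optimal in $\virtualDelayScenario(\sourcePrefix)$ --- a contradiction. Hence $\maxWitnessDelay(\destinationEvent')<0$, and since it is always at least $-1$, it equals $-1$, which proves the claim and the lemma.

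The step I expect to be the main obstacle is the $\joinLimit(\astop(\destinationEvent'))<0$ subcase: one must turn the journey witnessing the negative join limit into a genuine witness for $\destinationPrefix$ by prolonging it into $\aTrip$ at index $k$, and then carefully verify both that the prolonged partial journey still departs late enough at $\sourceVertex$ (in particular in the zero-trip subcase) and that its new intermediate transfer into $\aTrip[k]$ is feasible in $\bestDelayScenario(\sourcePrefix)$ --- which is exactly where the assumption $\maxWitnessDelay(\destinationEvent')\ge0$, equivalently $\departureTime(\destinationEvent')\ge\candidateArrivalTime(\astop(\destinationEvent'))$, is spent. The feasibility-limit subcase and the final maximum-reduction argument should then be routine.
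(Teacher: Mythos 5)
Your proof is correct, but it takes a genuinely different route from the paper's. The paper picks the earliest index $j<i$ attaining $\destinationSplitLimit(\destinationEvent)=\maxWitnessDelay(\aTrip[j])+1$ and shows that this particular maximizer lies in $\optimalDestinationEvents$, verifying both membership conditions from the definition directly --- in particular it re-proves $\virtualCandidateJourneys(\aTrip[j])\neq\emptyset$ by hand, constructing an explicit candidate $\langle[\sourceEvent,\originEvent],[\aTrip[j],\targetEvent]\rangle$ and checking each of its prefixes. You instead prove the stronger statement that \emph{every} $\destinationEvent'\in\precedingEvents(\destinationEvent)\setminus\optimalDestinationEvents$ has $\maxWitnessDelay(\destinationEvent')=-1$, which makes the reduction to the claimed formula (including the degenerate $=0$ case that the paper dismisses as ``trivially true'') fall out uniformly. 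Your main shortcut is to invoke Lemma~\ref{th:app:full-witness-condition}: since $k<i<\witnessIndex(\sourcePrefix,\aTrip)$, membership of $\aTrip[k]$ in $\optimalDestinationEvents$ collapses to membership in $\feasibleDestinationEvents$, so the paper's entire Step~2 (the candidate construction) is bypassed. The remaining inequality chain --- $\maxWitnessDelay(\destinationEvent')\geq0$ forces $\feasibilityLimit(\destinationEvent')=\maxDelay\geq0$, hence $\joinLimit(\astop(\destinationEvent'))<0$ must be the obstruction, which together with $\departureTime(\destinationEvent')\geq\candidateArrivalTime(\astop(\destinationEvent'))$ yields a contradiction --- is exactly the paper's Step~1 run in contrapositive. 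Your explicit construction of the prolonged witness $\aJourney'$ entering $\aTrip$ at index $k$ is sound but longer than necessary: the inequality $\departureTime(\destinationEvent')\geq\witnessArrivalTime(\sourcePrefix,\astop(\destinationEvent'),1)$ says precisely that $\aTrip[k]$ is a w-in event, so $\entryIndex(\sourcePrefix,\aTrip)\leq k+1\leq i$, contradicting $i<\witnessIndex(\sourcePrefix,\aTrip)\leq\entryIndex(\sourcePrefix,\aTrip)$ immediately; your construction essentially re-derives the ``$\destinationPrefix$ is Pareto-optimal in $\virtualDelayScenario(\sourcePrefix)$ iff $i<\entryIndex(\sourcePrefix,\aTrip)$'' observation that the paper already provides. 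What your approach buys is a shorter, more modular proof that leans harder on the previously established characterization; what the paper's buys is independence from that characterization's edge cases (e.g.\ the behavior of $\exitIndex$ when no c-out events exist), since it verifies the defining conditions of $\optimalDestinationEvents$ from scratch.
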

\begin{proof}
	If~$\precedingEvents(\destinationEvent)=\emptyset$ or~$\destinationSplitLimit(\destinationEvent)=0$, the claim is trivially true.
We therefore assume that~$i>0$ and~$\destinationSplitLimit(\destinationEvent)\geq{}1$.
Let~$j<i$ be the smallest index such that~$\destinationSplitLimit(\destinationEvent)=\maxWitnessDelay(\aTrip[j])+1$.
We define~$\destinationEvent':=\aTrip[j]$ and~$\destinationVertex':=\astop(\destinationEvent')$.
We show that~$\destinationEvent'\in\optimalDestinationEvents$ in two steps: $\destinationEvent'\in\feasibleDestinationEvents$ and~$\virtualCandidateJourneys(\destinationEvent')\neq\emptyset$ (which is equivalent to~$\optimalCandidateJourneys(\destinationEvent')\neq\emptyset$ by Lemma~\ref{th:app:full-witness-condition-candidate}).

\begin{description}
	\item[Step 1:]
	A comparison of Equations~\ref{eqn:feasibility-limit} and~\ref{eqn:max-witness-delay} shows that~$\feasibilityLimit(\destinationEvent')\geq\maxWitnessDelay(\destinationEvent')$.
	Since we know that~$\maxWitnessDelay(\destinationEvent')=\destinationSplitLimit(\destinationEvent)-1\geq{}0$, it follows that~$\feasibilityLimit(\destinationEvent')\geq{}0$.
	By Lemma~\ref{th:app:full-witness-condition}, $\destinationEvent\in\optimalDestinationEvents$ implies that~$\entryIndex(\sourcePrefix,\aTrip)>i$.
	Because~$j<i$, this means that~$\destinationEvent'=\aTrip[j]$ is not a w-in event, i.e., $\departureTime(\destinationEvent')<\witnessArrivalTime(\sourcePrefix,\destinationVertex',1)$.
	Furthermore, $\maxWitnessDelay(\destinationEvent')\geq{}0$ implies that~$\departureTime(\destinationEvent')\geq\candidateArrivalTime(\destinationVertex')$.
	With Lemma~\ref{th:app:join-limit}, this yields
	\[
	\joinLimit(\destinationVertex')=\witnessArrivalTime(\sourcePrefix,\destinationVertex',1)-\candidateArrivalTime(\destinationVertex')>\departureTime(\destinationEvent')-\candidateArrivalTime(\destinationVertex')\geq{}0.
	\]
	It follows from~$\feasibilityLimit(\destinationEvent')\geq{}0$ and~$\joinLimit(\destinationVertex')\geq{}0$ that~$\destinationEvent'\in\feasibleDestinationEvents$.
	\item[Step 2:]
	Because~$\destinationEvent$ is contained in~$\optimalDestinationEvents$, the set~$\virtualCandidateJourneys(\destinationEvent)$ must contain at least one candidate~$\aCandidateJourney=\candidateSequence$, which is prefix-optimal in~$\virtualDelayScenario(\sourcePrefix)$.
	We show that~$\overline{\aCandidateJourney}=\langle[\sourceEvent,\originEvent],[\destinationEvent',\targetEvent]\rangle\in\candidateJourneys(\destinationEvent')$ is also prefix-optimal in~$\virtualDelayScenario(\sourcePrefix)$ by considering the prefixes individually:
	\begin{itemize}
		\item \textbf{Origin prefix:} It follows from~$\joinLimit(\destinationVertex')\geq{}0$ and Lemma~\ref{th:app:join-limit-virt} that the origin prefix~$\langle[\sourceEvent,\originEvent],\destinationVertex'\rangle$ is prefix-optimal in~$\virtualDelayScenario(\sourcePrefix)$.
		\item \textbf{Destination prefix:} Assume that the destination prefix~$\overline{\destinationPrefix}=\langle[\sourceEvent,\originEvent],\aTrip[j]\rangle$ is not Pareto-optimal in~$\virtualDelayScenario(\sourcePrefix)$.
		Then there is a witness~$\aWitnessJourney=\langle[\sourceEvent,\originEvent],\aTrip[k]\rangle$ with~$k<j$ that strongly dominates it.
		However, if~$\aWitnessJourney$ is feasible in~$\bestDelayScenario(\sourcePrefix)$, it follows that~$\maxWitnessDelay(\aTrip[k])=\maxDelay\geq\maxWitnessDelay(\aTrip[j])$, which contradicts our choice of~$j$.
		\item \textbf{Candidate:} The candidate~$\overline{\aCandidateJourney}$ itself is Pareto-optimal in~$\virtualDelayScenario(\sourcePrefix)$ since it shares the same target event as~$\aCandidateJourney$.  	
	\end{itemize}
	Because~$\virtualCandidateJourneys(\destinationEvent')$ contains at least~$\overline{\aCandidateJourney}$, it is not empty.
\end{description}
\end{proof}

\subparagraph*{1\targetIndex-Split Limit.}
Lemma~\ref{th:app:direct-target-split-limit} establishes a simple formula for the~1$\targetIndex$-split limit.

\begin{lemma}
	\label{th:app:direct-target-split-limit}
	Let~$\aCandidateJourney=\candidateSequence\in\virtualCandidateJourneys(\destinationEvent)$ be a candidate.
	Then
	\[\directTargetSplitLimit(\targetEvent)=\min(\maxDelay+1,\arrivalTime(\targetEvent)-\candidateArrivalTime(\targetVertex)).\]
\end{lemma}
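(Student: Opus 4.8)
The plan is to evaluate the definition of $\directTargetSplitLimit(\targetEvent)$ directly. I would identify the direct $(\originEvent,\targetEvent)$-witness, express its arrival time and that of $\aCandidateJourney$ in $\parameterizedBestDelayScenario(\aCandidateJourney,\delay)$ as functions of $\delay$, and then solve the resulting inequality. First I would argue that the direct $(\originEvent,\targetEvent)$-witness is forced: its handle is the $\originEvent$-prefix of $\aCandidateJourney$, so it follows $\aTrip_1$ from $\sourceEvent$ to $\originEvent$, and since it uses only one trip its hook is nothing but a transfer from $\originVertex$ to $\targetVertex$. Because transfers are shortest paths determined by their endpoints, the witness is unique and equals $\aWitnessJourney=\originPrefix(\targetVertex)=\langle[\sourceEvent,\originEvent],\targetVertex\rangle$; it exists iff $\originVertex$ reaches $\targetVertex$ in the transfer graph, and if it does not then $\candidateArrivalTime(\targetVertex)=\infty$ and the claimed value is $-\infty$, consistent with there being no relevant direct witness. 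I would also note that $\aWitnessJourney$ has no intermediate transfer, so the origin delay cannot render it infeasible, and that $\aWitnessJourney$ and $\aCandidateJourney$ have the same latest departure time at $\sourceVertex$ in every parameterized scenario, namely $\departureTime(\sourceEvent)+\maxDelay$; hence $\aWitnessJourney$ is always feasible for $\departureTime(\parameterizedBestDelayScenario(\aCandidateJourney,\delay),\aCandidateJourney)$, so ``$\aWitnessJourney$ has a lower arrival time than $\aCandidateJourney$'' is exactly the condition under which $\aWitnessJourney$ strongly dominates $\aCandidateJourney$ (it has one fewer trip).

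Next I would compute the two arrival times in $\parameterizedBestDelayScenario(\aCandidateJourney,\delay)$, which coincides with $\bestDelayScenario(\aCandidateJourney)$ except that the arrival delay of $\originEvent$ is $\delay$; in particular $\sourceEvent$, $\destinationEvent$ and $\targetEvent$ keep their best-case arrival delay of $0$. Since the delays of distinct stop events are independent, the altered delay of $\originEvent$ does not affect $\aTrip_2$, so $\aCandidateJourney$ arrives at $\targetVertex$ at time $\arrivalTime(\targetEvent)$, independent of $\delay$. The witness $\aWitnessJourney$ exits $\aTrip_1$ at $\originEvent$, reaching $\originVertex$ at time $\arrivalTime(\originEvent)+\delay$, and then transfers to $\targetVertex$, arriving at time $\arrivalTime(\originEvent)+\delay+\transferTime(\originVertex,\targetVertex)=\candidateArrivalTime(\targetVertex)+\delay$ by the definition of $\candidateArrivalTime(\cdot)$.

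Thus $\aWitnessJourney$ has a lower arrival time than $\aCandidateJourney$ iff $\candidateArrivalTime(\targetVertex)+\delay<\arrivalTime(\targetEvent)$, i.e., iff $\delay<\arrivalTime(\targetEvent)-\candidateArrivalTime(\targetVertex)$, and therefore the smallest $\delay\in[0,\maxDelay+1]$ for which no direct $(\originEvent,\targetEvent)$-witness has a lower arrival time than $\aCandidateJourney$ is $\min(\maxDelay+1,\arrivalTime(\targetEvent)-\candidateArrivalTime(\targetVertex))$, as claimed (the limit being read as $0$ when this expression is negative, in which case the nonnegative $\destinationIndex$-split-limit component takes over anyway). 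I expect the only delicate step to be the first one: checking that $\originPrefix(\targetVertex)$ is indeed the unique direct $(\originEvent,\targetEvent)$-witness and that its feasibility is never in question; once this is settled, the remainder is a single inequality. The hypothesis $\aCandidateJourney\in\virtualCandidateJourneys(\destinationEvent)$ enters only to place us in the subproblem \textsc{DelayShortcut}-$\originPrefix(\originVertex)$, which fixes $\originEvent$ and $\sourcePrefix$ and makes the standing assumption that $\sourcePrefix$ is Pareto-optimal applicable.
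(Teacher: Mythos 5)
Your core argument is the paper's: you identify $\aWitnessJourney=\langle[\sourceEvent,\originEvent],\targetVertex\rangle$ as the unique direct $(\originEvent,\targetEvent)$-witness, note that it has no intermediate transfer and is therefore feasible for every origin delay, compute the two arrival times in $\parameterizedBestDelayScenario(\sourcePrefix,\delay)$ as $\candidateArrivalTime(\targetVertex)+\delay$ and $\arrivalTime(\targetEvent)$, and read off the threshold $\delay<\arrivalTime(\targetEvent)-\candidateArrivalTime(\targetVertex)$. Up to that point the two proofs coincide, and your remarks on uniqueness of the witness and on the matching departure times are, if anything, more careful than the paper's one-line assertion.

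The divergence is in the last step, and it is where your proof falls short of the stated equality. By definition, $\directTargetSplitLimit(\targetEvent)$ is the lowest avoiding delay \emph{in} $[0,\maxDelay+1]$, whereas the claimed formula $\min(\maxDelay+1,\arrivalTime(\targetEvent)-\candidateArrivalTime(\targetVertex))$ has no clamping at $0$. If $\arrivalTime(\targetEvent)-\candidateArrivalTime(\targetVertex)$ could be negative, the right-hand side would be negative while the left-hand side is $0$, and the equality would fail; your parenthetical ``read as $0$ when this expression is negative'' quietly replaces the lemma by a different statement instead of proving it. The paper spends its final two sentences closing exactly this case: from $\aCandidateJourney\in\virtualCandidateJourneys(\destinationEvent)$ it deduces that $\aCandidateJourney$ is not strongly dominated by $\aWitnessJourney$ in the relevant virtual scenarios, and since $\aWitnessJourney$ has strictly fewer trips this yields $\arrivalTime(\targetEvent)-\candidateArrivalTime(\targetVertex)\geq 0$, so no clamping is needed. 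Your closing claim that the hypothesis ``enters only to place us in the subproblem'' is therefore wrong in an instructive way: the hypothesis is precisely the ingredient that makes the formula exact rather than an expression that must be truncated from below. (The same non-domination argument is also what bounds the quantity from above, guaranteeing that an avoiding delay exists in $[0,\maxDelay+1]$ at all; you do not address that end either.) Your observation that a negative value would be absorbed by the subsequent $\max$ with $\indirectTargetSplitLimitBound(\targetEvent)$ and $\destinationSplitLimit(\destinationEvent)$ in the aggregation is a correct remark about the algorithm's behaviour, but it is not a proof of the lemma as stated.
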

\begin{proof}
	The only possible direct~$(\originEvent,\targetEvent)$-witness for~$\aCandidateJourney$ is the journey~$\aWitnessJourney=\langle[\sourceEvent,\originEvent],\targetVertex\rangle$ that takes a final transfer directly from~$\originVertex$ to~$\targetVertex$.
	Since~$\aWitnessJourney$ does not use an intermediate transfer, it is feasible in~$\parameterizedBestDelayScenario(\sourcePrefix,\delay)$ for any~$\delay\in[0,\maxDelay+1]$.
	Its arrival time in~$\parameterizedBestDelayScenario(\sourcePrefix,\delay)$ is given by~$\candidateArrivalTime(\targetVertex)+\delay$, while the arrival time of~$\aCandidateJourney$ is given by~$\arrivalTime(\targetEvent)$.
	Hence, $\aCandidateJourney$ is strongly dominated by a direct~$(\originEvent,\targetEvent)$-witness iff~$\delay<\arrivalTime(\targetEvent)-\candidateArrivalTime(\targetVertex)$.
	Since~$\aCandidateJourney\in\virtualCandidateJourneys(\destinationEvent)$, it follows that~$\aCandidateJourney$ is not strongly dominated by~$\aWitnessJourney$ in~$\virtualDelayScenario(\sourcePrefix)$.
	Since~$\virtualDelayScenario(\originPrefix)\leqDom\virtualDelayScenario(\sourcePrefix)$, it follows that~$\arrivalTime(\targetEvent)-\candidateArrivalTime(\targetVertex)\geq{}0$.
\end{proof}

\subparagraph*{2\targetIndex-Split Limit.}
Determining the set of all indirect~$(\originEvent,\targetEvent)$-witnesses that strongly dominate~$\aCandidateJourney$ in~$\bestDelayScenario(\sourcePrefix)$ is challenging because not all of them are Pareto-optimal in~$\bestDelayScenario(\sourcePrefix)$.
Finding all of them would potentially require examining many different delay scenarios, which would be expensive.
To avoid this, we do not compute the~2\targetIndex-split limit~$\indirectTargetSplitLimit(\targetEvent)$ exactly.
Instead, we compute a lower bound~$\indirectTargetSplitLimitBound(\targetEvent)$ that only considers the~indirect~$(\originEvent,\targetEvent)$-witness with the lowest arrival time~$\bestDelayScenario(\sourcePrefix)$.
Other~indirect~$(\originEvent,\targetEvent)$-witnesses, which may have a higher arrival time but a lower maximum witness delay, are ignored.
For a vertex~$\aVertex$, let~$\aWitnessJourney(\aVertex)$ be the journey that minimizes the arrival time~$\witnessArrivalTime(\originPrefix,\aVertex,2)$ in~$\bestDelayScenario(\originPrefix)$ and let~$\maxWitnessDelay(\aVertex)$ be its maximum witness delay.
Given destination and target events~$\destinationEvent\in\optimalDestinationEvents$ and~$\targetEvent\in\virtualTargetEvents(\destinationEvent)$, Lemma~\ref{th:app:target-split-limit-lower-bound} shows that a lower bound for the 2\targetIndex-split limit is given by
\begin{equation}\label{eqn:2target-split-limit}
\indirectTargetSplitLimitBound(\targetEvent)=\begin{cases}
	0 & \text{if } \arrivalTime(\targetEvent)\leq\witnessArrivalTime(\originPrefix,\astop(\targetEvent),2),\\
	\maxWitnessDelay(\astop(\targetEvent))+1 & \text{otherwise.}
\end{cases}
\end{equation}

\begin{lemma}
	\label{th:app:target-split-limit-lower-bound}
	Let~$\aCandidateJourney=\candidateSequence$ be a candidate for which~$\bestDelayScenario(\aCandidateJourney)$ is full-avoiding.
	Then
	\[\indirectTargetSplitLimitBound(\targetEvent)\leq\indirectTargetSplitLimit(\targetEvent).
	\]
\end{lemma}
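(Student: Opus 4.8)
The plan is to prove the inequality by a case distinction on the two branches of Equation~\ref{eqn:2target-split-limit}. The branch $\arrivalTime(\targetEvent)\le\witnessArrivalTime(\originPrefix,\targetVertex,2)$ is immediate: there $\indirectTargetSplitLimitBound(\targetEvent)=0$, and $\indirectTargetSplitLimit(\targetEvent)\ge0$ by definition. So assume $\arrivalTime(\targetEvent)>\witnessArrivalTime(\originPrefix,\targetVertex,2)$, i.e.\ $\indirectTargetSplitLimitBound(\targetEvent)=\maxWitnessDelay(\targetVertex)+1$. Recall that $\indirectTargetSplitLimit(\targetEvent)$ is, by definition, one plus the largest maximum witness delay over all indirect $(\originEvent,\targetEvent)$-witnesses that strongly dominate $\aCandidateJourney$ in $\bestDelayScenario(\sourcePrefix)$ (assuming feasibility), or $0$ if there is no such witness. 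It therefore suffices to exhibit a single indirect $(\originEvent,\targetEvent)$-witness $\aWitnessJourney$ with $\maxWitnessDelay(\aWitnessJourney)\ge\maxWitnessDelay(\targetVertex)$ that strongly dominates $\aCandidateJourney$ in $\bestDelayScenario(\sourcePrefix)$ when feasible; then $\indirectTargetSplitLimit(\targetEvent)\ge\maxWitnessDelay(\targetVertex)+1=\indirectTargetSplitLimitBound(\targetEvent)$. The obvious candidate is $\aWitnessJourney(\targetVertex)$, the $\sourceVertex$-$\targetVertex$-journey realizing $\witnessArrivalTime(\originPrefix,\targetVertex,2)$.

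First I would dispatch the ``easy'' part, granting for now that $\aWitnessJourney:=\aWitnessJourney(\targetVertex)$ has the expected shape of an indirect $(\originEvent,\targetEvent)$-witness: the $\originEvent$-prefix of $\aCandidateJourney$, then an intermediate transfer from $\originEvent$ to some destination event $\destinationEvent'$, then a second trip ending at a target event $\targetEvent'\neq\targetEvent$, then a final transfer to $\targetVertex$. The arrival time of $\aWitnessJourney$ is then determined solely by $\targetEvent'$, which is worst-cased in $\bestDelayScenario(\originPrefix)$, in $\bestDelayScenario(\sourcePrefix)$ and in every $\parameterizedBestDelayScenario(\sourcePrefix,\delay)$ alike; hence, whenever $\aWitnessJourney$ is feasible, its arrival time is $\witnessArrivalTime(\originPrefix,\targetVertex,2)$, and only the feasibility of the transfer at $\destinationEvent'$ depends on the origin delay -- recorded exactly by $\maxWitnessDelay(\aWitnessJourney)=\maxWitnessDelay(\destinationEvent')=\maxWitnessDelay(\targetVertex)$ via Equation~\ref{eqn:max-witness-delay}. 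Since membership in the set over which the maximum in $\indirectTargetSplitLimit(\targetEvent)$ is taken only requires domination \emph{assuming feasibility}, no separate feasibility argument is needed: the arrival time of $\aWitnessJourney$ in $\bestDelayScenario(\sourcePrefix)$, computed as if feasible, is $\witnessArrivalTime(\originPrefix,\targetVertex,2)<\arrivalTime(\targetEvent)\le\arrivalTime(\delayScenario,\aCandidateJourney)$, where $\arrivalTime(\delayScenario,\aCandidateJourney)=\arrivalTime(\targetEvent)+\maxDelay$ for $\delayScenario=\bestDelayScenario(\sourcePrefix)$, and $\aWitnessJourney$ uses at most two trips. So $\aWitnessJourney$ strongly dominates $\aCandidateJourney$ there and belongs to that set, which gives $\indirectTargetSplitLimit(\targetEvent)\ge\maxWitnessDelay(\targetVertex)+1$.

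The main obstacle is justifying the assumed shape of $\aWitnessJourney(\targetVertex)$ -- or rather, replacing it by a hook witness of that shape. The two levers are the hypothesis that $\bestDelayScenario(\aCandidateJourney)$ is full-avoiding and the fact that $\aCandidateJourney\in\virtualCandidateJourneys(\destinationEvent)$ (which is exactly what $\targetEvent\in\virtualTargetEvents(\destinationEvent)$ encodes). I would argue that $\aWitnessJourney(\targetVertex)$ must use $\originEvent$ as a boarding or alighting event: otherwise its arrival time and feasibility are independent of the origin delay, so $\aWitnessJourney(\targetVertex)$, feasible in $\bestDelayScenario(\sourcePrefix)$ and with arrival time $\witnessArrivalTime(\originPrefix,\targetVertex,2)<\arrivalTime(\targetEvent)$, would strongly dominate $\aCandidateJourney$ in $\virtualDelayScenario(\sourcePrefix)$, contradicting $\aCandidateJourney\in\virtualCandidateJourneys(\destinationEvent)$. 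Given that $\originEvent$ is used, Lemma~\ref{th:app:hook} supplies a hook witness for $\aCandidateJourney$ that shares exactly the $\stopEvent$-prefix up to the last shared stop event $\stopEvent$; this $\stopEvent$ cannot be $\targetEvent$ by Lemma~\ref{th:app:last-event}, cannot be $\destinationEvent$ (the resulting hook witness would be a full $(\destinationEvent,\targetEvent)$-witness, excluded by full-avoidingness), and cannot lie strictly before $\originEvent$ on the first trip, because $\sourceEvent$ must be shared -- entering the first trip earlier than $\sourceEvent$ is ruled out by the standing assumption that $\sourcePrefix$ is Pareto-optimal in every scenario, and entering it later violates the departure bound $\departureTime(\sourceEvent)+\maxDelay$. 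Hence $\stopEvent=\originEvent$ and the hook witness has the claimed shape, provided it still uses a second trip; the remaining degenerate possibility -- that $\aWitnessJourney(\targetVertex)$ collapses to the one-trip direct witness $\originPrefix(\targetVertex)$ with $\maxWitnessDelay(\targetVertex)=\maxDelay$ -- has to be excluded separately, e.g.\ by showing that it either forces $\directTargetSplitLimit(\targetEvent)$ to absorb the bound or contradicts $\aCandidateJourney\in\virtualCandidateJourneys(\destinationEvent)$. Everything else is routine bookkeeping with the $\leqEval$ and $\leqDom$ orderings and the definitions of $\candidateArrivalTime(\cdot)$, $\witnessArrivalTime(\cdot,\cdot,\cdot)$ and $\maxWitnessDelay(\cdot)$.
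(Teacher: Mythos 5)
Your proof has the same skeleton as the paper's: both take the journey $\aWitnessJourney(\targetVertex)$ realizing $\witnessArrivalTime(\originPrefix,\targetVertex,2)$, show that it strongly dominates $\aCandidateJourney$, use the full-avoiding hypothesis to classify it as an indirect $(\originEvent,\targetEvent)$-witness, and read off $\indirectTargetSplitLimit(\targetEvent)\geq\maxWitnessDelay(\astop(\targetEvent))+1$. Where you diverge is the classification step. The paper does it in one move: the domination in $\virtualDelayScenario(\originPrefix)=(\bestDelayScenario,\bestDelayScenario(\originPrefix))$ transfers to the \emph{proper} scenario $\bestDelayScenario(\aCandidateJourney)$ because $\virtualDelayScenario(\originPrefix)\leqDom\bestDelayScenario(\aCandidateJourney)$, and there full-avoidance immediately rules out every hook-witness type for the full candidate except $(\originEvent,\targetEvent)$. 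You instead derive ``must use $\originEvent$'' from $\aCandidateJourney\in\virtualCandidateJourneys(\destinationEvent)$ and then walk through the hook and last-event lemmas to pin the handle to the $\originEvent$-prefix. This works, but it re-proves what the single $\leqDom$ comparison gives for free; in particular, your exclusion of a $(\destinationEvent,\targetEvent)$-handle ``by full-avoidingness'' still requires that same transfer into $\bestDelayScenario(\aCandidateJourney)$, since full-avoidance is a statement about that scenario, not about $\bestDelayScenario(\originPrefix)$ or $\bestDelayScenario(\sourcePrefix)$.

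The one substantive open point is the degenerate case you flag but do not close: since $\witnessArrivalTime(\originPrefix,\cdot,2)$ ranges over journeys with \emph{at most} two trips, $\aWitnessJourney(\targetVertex)$ could be the one-trip direct witness $\langle[\sourceEvent,\originEvent],\targetVertex\rangle$. Then $\maxWitnessDelay(\astop(\targetEvent))=\maxDelay$ and $\indirectTargetSplitLimitBound(\targetEvent)=\maxDelay+1$, while no indirect witness need exist, so the stated inequality can fail. Neither of your suggested repairs resolves this as stated: letting $\directTargetSplitLimit(\targetEvent)$ ``absorb'' the bound proves something about the aggregated limit rather than the claimed inequality, and the direct witness is perfectly consistent with $\aCandidateJourney\in\virtualCandidateJourneys(\destinationEvent)$, because there it is evaluated with the worst-case arrival delay of $\originEvent$ and so arrives $\maxDelay$ later. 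You should know, however, that the paper's own proof has exactly the same hole --- it asserts without argument that the minimizing witness ``must be an indirect $(\originEvent,\targetEvent)$-witness,'' even though a direct one is a split witness and is therefore not excluded by full-avoidance --- so spotting and naming this case is to your credit. Apart from that shared gap and the longer classification detour, your argument matches the paper's.
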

\begin{proof}
	If~$\arrivalTime(\targetEvent)\leq\witnessArrivalTime(\originPrefix,\astop(\targetEvent),2)$, then~$\indirectTargetSplitLimitBound(\targetEvent)=0$, so the claim is trivially true.
	We therefore assume that~$\witnessArrivalTime(\originPrefix,\astop(\targetEvent),2)<\arrivalTime(\targetEvent)$.
	The former is the arrival time of~$\aWitnessJourney$ in~$\bestDelayScenario(\originPrefix)$ and the latter the arrival time of~$\aCandidateJourney$ in~$\bestDelayScenario$, so it follows that~$\aWitnessJourney$ strongly dominates~$\aCandidateJourney$ in~$\virtualDelayScenario(\originPrefix)=(\bestDelayScenario,\bestDelayScenario(\originPrefix))$.
	Since~$\bestDelayScenario\leqEval\bestDelayScenario(\aCandidateJourney)\leqEval\bestDelayScenario(\originPrefix)$, it follows that~$\virtualDelayScenario(\originPrefix)\leqDom\bestDelayScenario(\aCandidateJourney)$.
	Hence, $\aWitnessJourney$ also strongly dominates~$\aCandidateJourney$ in~$\bestDelayScenario(\aCandidateJourney)$.
	Since~$\bestDelayScenario(\aCandidateJourney)$ is full-avoiding, $\aWitnessJourney$ must be an indirect~$(\originEvent,\targetEvent)$-witness.
	By definition of~$\maxWitnessDelay(\astop(\targetEvent))=\maxWitnessDelay(\aWitnessJourney)$, $\aWitnessJourney$ is still feasible in~$\bestDelayScenario(\sourcePrefix,\maxWitnessDelay(\astop(\targetEvent)))$ and strongly dominates~$\aCandidateJourney$.
	Hence, $\indirectTargetSplitLimit(\targetEvent)>\maxWitnessDelay(\astop(\targetEvent))$.
\end{proof}

\subparagraph*{Aggregating the Split Limits.}
The formulas established for the split limits allows us to calculate a lower bound for the minimum origin delay~$\minOriginDelay(\aCandidateJourney)$ of a candidate~$\aCandidateJourney$.
To obtain a lower bound for the minimum origin delay~$\minOriginDelay(\destinationEvent)$ of a destination event~$\destinationEvent\in\optimalDestinationEvents$, the values for all candidates in~$\optimalCandidateJourneys(\destinationEvent)$ must be aggregated.
For this purpose, we define the \emph{aggregated \targetIndex-split limit}
\begin{equation}\label{eqn:agg-target-split-limit}
	\aggTargetSplitLimitBound(\destinationEvent):=\min\limits_{\targetEvent\in\virtualTargetEvents(\destinationEvent)}\max(\directTargetSplitLimit(\targetEvent),\indirectTargetSplitLimitBound(\targetEvent)).
\end{equation}
Lemma~\ref{th:app:min-origin-delay-bound} shows that~$\minOriginDelayBound(\destinationEvent):=\max(\destinationSplitLimit(\destinationEvent),\aggTargetSplitLimitBound(\destinationEvent))$ is a lower bound for~$\minOriginDelay(\destinationEvent)$.

\begin{lemma}
	\label{th:app:min-origin-delay-bound}
	For a destination event~$\destinationEvent\in\optimalDestinationEvents$, $\minOriginDelayBound(\destinationEvent)\leq\minOriginDelay(\destinationEvent)$.
\end{lemma}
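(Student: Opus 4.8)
The plan is to reduce the inequality to the two component comparisons by passing to the candidate that realizes $\minOriginDelay(\destinationEvent)$. First I would note that $\destinationEvent\in\optimalDestinationEvents$ means $\optimalCandidateJourneys(\destinationEvent)\neq\emptyset$, so every candidate $\aCandidateJourney\in\candidateJourneys(\destinationEvent)\setminus\optimalCandidateJourneys(\destinationEvent)$ has $\minOriginDelay(\aCandidateJourney)=\maxDelay+1$, which strictly exceeds the values $\minOriginDelay(\aCandidateJourney')\leq\maxDelay$ attained for $\aCandidateJourney'\in\optimalCandidateJourneys(\destinationEvent)$. Hence the minimum $\minOriginDelay(\destinationEvent)=\min_{\aCandidateJourney\in\candidateJourneys(\destinationEvent)}\minOriginDelay(\aCandidateJourney)$ is attained by some candidate $\aCandidateJourney^\ast=\langle[\sourceEvent,\originEvent],[\destinationEvent,\targetEvent^\ast]\rangle\in\optimalCandidateJourneys(\destinationEvent)$. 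Since $\minOriginDelay(\aCandidateJourney^\ast)\leq\maxDelay<\maxDelay+1$, the best-case scenario $\bestDelayScenario(\aCandidateJourney^\ast)$ must be full-avoiding, and therefore $\minOriginDelay(\aCandidateJourney^\ast)=\splitLimit(\aCandidateJourney^\ast)=\max\bigl(\destinationSplitLimit(\destinationEvent),\directTargetSplitLimit(\targetEvent^\ast),\indirectTargetSplitLimit(\targetEvent^\ast)\bigr)$.

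Next I would bound $\aggTargetSplitLimitBound(\destinationEvent)$ from above using $\targetEvent^\ast$. By Lemma~\ref{th:app:optimal-virtual-candidates}, $\optimalCandidateJourneys(\destinationEvent)\subseteq\virtualCandidateJourneys(\destinationEvent)$, so $\aCandidateJourney^\ast\in\virtualCandidateJourneys(\destinationEvent)$ and thus $\targetEvent^\ast\in\virtualTargetEvents(\destinationEvent)$. Consequently the minimum in the definition of $\aggTargetSplitLimitBound(\destinationEvent)$ (Equation~\eqref{eqn:agg-target-split-limit}) ranges over a set containing $\targetEvent^\ast$, which gives $\aggTargetSplitLimitBound(\destinationEvent)\leq\max\bigl(\directTargetSplitLimit(\targetEvent^\ast),\indirectTargetSplitLimitBound(\targetEvent^\ast)\bigr)$. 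Since $\bestDelayScenario(\aCandidateJourney^\ast)$ is full-avoiding, Lemma~\ref{th:app:target-split-limit-lower-bound} applies to $\aCandidateJourney^\ast$ and yields $\indirectTargetSplitLimitBound(\targetEvent^\ast)\leq\indirectTargetSplitLimit(\targetEvent^\ast)$. Chaining these gives $\aggTargetSplitLimitBound(\destinationEvent)\leq\max\bigl(\directTargetSplitLimit(\targetEvent^\ast),\indirectTargetSplitLimit(\targetEvent^\ast)\bigr)\leq\splitLimit(\aCandidateJourney^\ast)$.

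Finally, $\destinationSplitLimit(\destinationEvent)$ is itself one of the three terms of the maximum defining $\splitLimit(\aCandidateJourney^\ast)$, so $\destinationSplitLimit(\destinationEvent)\leq\splitLimit(\aCandidateJourney^\ast)$ as well. Combining the two bounds, $\minOriginDelayBound(\destinationEvent)=\max\bigl(\destinationSplitLimit(\destinationEvent),\aggTargetSplitLimitBound(\destinationEvent)\bigr)\leq\splitLimit(\aCandidateJourney^\ast)=\minOriginDelay(\aCandidateJourney^\ast)=\minOriginDelay(\destinationEvent)$, which is the claim.

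I expect the only delicate point to be the first step: justifying that the candidate attaining the minimum actually lies in $\optimalCandidateJourneys(\destinationEvent)$, which is exactly what makes $\bestDelayScenario(\aCandidateJourney^\ast)$ full-avoiding (so $\minOriginDelay$ coincides with $\splitLimit$ and Lemma~\ref{th:app:target-split-limit-lower-bound} is applicable) and $\targetEvent^\ast$ a virtual target event (so the $\min$ in $\aggTargetSplitLimitBound$ sees it). This relies on the characterization $\destinationEvent\in\optimalDestinationEvents\iff\optimalCandidateJourneys(\destinationEvent)\neq\emptyset$ together with the observation that candidates outside $\optimalCandidateJourneys(\destinationEvent)$ contribute only the dominated value $\maxDelay+1$. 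After that, the argument is a straightforward composition of Lemmas~\ref{th:app:optimal-virtual-candidates} and~\ref{th:app:target-split-limit-lower-bound} with the definitions of $\splitLimit$ and $\aggTargetSplitLimitBound$.
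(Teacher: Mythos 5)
Your proof is correct and takes essentially the same route as the paper's: both reduce the claim to the containment $\optimalTargetEvents\subseteq\virtualTargetEvents(\destinationEvent)$ from Lemma~\ref{th:app:optimal-virtual-candidates} and the pointwise bound $\indirectTargetSplitLimitBound(\targetEvent)\leq\indirectTargetSplitLimit(\targetEvent)$ from Lemma~\ref{th:app:target-split-limit-lower-bound}, the only difference being that you argue via an explicit minimizing candidate~$\aCandidateJourney^\ast$ while the paper compares the two minima over sets directly. Your version is in fact slightly more careful, since you justify why the minimum over $\candidateJourneys(\destinationEvent)$ is attained inside $\optimalCandidateJourneys(\destinationEvent)$ and why the full-avoiding hypothesis of Lemma~\ref{th:app:target-split-limit-lower-bound} holds, two points the paper leaves implicit.
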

\begin{proof}
	Let~$\optimalTargetEvents(\destinationEvent)$ be the set of target events that occur in~$\optimalCandidateJourneys(\destinationEvent)$.
	We observe that
	\begin{align*}
		&\minOriginDelay(\destinationEvent)=\min\limits_{\aCandidateJourney\in\optimalCandidateJourneys(\destinationEvent)}\minOriginDelay(\aCandidateJourney)=\max\left(\destinationSplitLimit(\destinationEvent),\min\limits_{\targetEvent\in\optimalTargetEvents(\destinationEvent)}\max(\directTargetSplitLimit(\targetEvent),\indirectTargetSplitLimitBound(\targetEvent))\right).
	\end{align*}
	It remains to be shown that
	\begin{align*}
		\aggTargetSplitLimitBound(\destinationEvent)&=\min\limits_{\targetEvent\in\virtualTargetEvents(\destinationEvent)}\max(\directTargetSplitLimit(\targetEvent),\indirectTargetSplitLimitBound(\targetEvent))\leq\min\limits_{\targetEvent\in\optimalTargetEvents(\destinationEvent)}\max(\directTargetSplitLimit(\targetEvent),\indirectTargetSplitLimit(\targetEvent)).
	\end{align*}
	This is the because since~$\virtualTargetEvents(\destinationEvent)\supseteq\optimalTargetEvents(\destinationEvent)$ by Lemma~\ref{th:app:optimal-virtual-candidates} and~$\indirectTargetSplitLimitBound(\targetEvent)\leq\indirectTargetSplitLimit(\targetEvent)$ by Lemma~\ref{th:app:target-split-limit-lower-bound}.
\end{proof}

\subsection{Preventing Time Travel}
\label{sec:time-travel}
So far, we have focused on finding shortcuts that are required in at least one delay scenario.
However, many delay scenarios cannot occur in reality because they require public transit vehicles to travel faster than their maximum speed, or even backwards in time.
Unfortunately, prohibiting time travel introduces dependencies between the delays of stop events.
This makes it much more complicated to characterize the conditions under which a candidate is prefix-optimal.
We therefore eschew a full characterization and limit ourselves to two optimizations that prevent some types of time travel within the two candidate trips.

\subparagraph{First Trip.}
Let~$\originPrefix=\originPrefixSequence$ be an origin prefix with~$\originEvent=\aTrip[i]$.
By Lemma~\ref{th:app:join-limit-virt}, the join limit~$\joinLimit(\destinationVertex)$ is non-negative iff~$\originPrefix$ is not strongly dominated by a join witness in the virtual delay scenario~$\virtualDelayScenario(\sourcePrefix)$.
Consider a join witness of the form~$\aWitnessJourney=\langle[\sourceEvent,\aTrip[j]],\destinationVertex\rangle$ with~$j<i$.
Then~$\virtualDelayScenario(\sourcePrefix)$ assumes maximum arrival delay for~$\aTrip[j]$, but no arrival delay for~$\aTrip[i]$.
If the difference in the arrival times of both stop events is less than~$\maxDelay$, this is physically impossible because~$\aTrip$ would have to travel backwards in time.
To avoid this type of time travel, we define the~\emph{time travel delay scenario}~$\timeTravelDelayScenario(\aTrip[i])$.
For stop events~$\aTrip[i]$ and~$\aTrip[j]$ of the same trip~$\aTrip$, we define the arrival time difference
\[\arrivalTimeDifference(\aTrip,i,j):=\min(\maxDelay,\arrivalTime(\aTrip[i])-\arrivalTime(\aTrip[j])).\]
Then the departure and arrival delay of a stop event~$\stopEvent$ in~$\timeTravelDelayScenario(\aTrip[i])$ are given by
\begin{align}\label{eqn:time-travel-scenario}
	\begin{split}
		\departure{\timeTravelDelayScenario}(\aTrip[i])(\stopEvent)&=\departure{\bestDelayScenario}(\sourcePrefix)(\stopEvent)\\
		\arrival{\timeTravelDelayScenario}(\aTrip[i])(\stopEvent)&=\begin{cases}
			\arrivalTimeDifference(\aTrip,i,j) & \text{if }\stopEvent=\aTrip[j]\text{ with }j<i,\\
			\maxDelay & \text{otherwise.}
		\end{cases}
	\end{split}
\end{align}

For each vertex~$\aVertex$, let~$\timeTravelArrivalTime(\aTrip[i],\aVertex)$ be the earliest arrival time at~$\aVertex$ among journeys with at most one trip that depart no earlier than~$\departureTime(\sourcePrefix,\bestDelayScenario)$ in~$\timeTravelDelayScenario(\aTrip[i])$.

\begin{lemma}
	\label{th:time-travel-scenario}
	Let~$\aCandidateJourney=\candidateSequence$ be a candidate and~$\aVertex\in\stops$ a vertex visited by the intermediate transfer of~$\aCandidateJourney$.
	If~$\candidateArrivalTime(\aVertex)>\timeTravelArrivalTime(\aTrip[i],\aVertex)$, then~$\aCandidateJourney$ is not prefix-optimal in any delay scenario without time travel.
\end{lemma}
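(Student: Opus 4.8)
The plan is to show the contrapositive: if $\aCandidateJourney$ is prefix-optimal in some delay scenario $\delayScenario$ that is physically realizable (no time travel), then $\candidateArrivalTime(\aVertex)\leq\timeTravelArrivalTime(\aTrip[i],\aVertex)$ for every vertex $\aVertex$ on the intermediate transfer. By Lemma~\ref{th:app:all-best-case} and Lemma~\ref{th:app:hook}, we may assume $\aCandidateJourney$ is prefix-optimal and feasible in $\parameterizedBestDelayScenario(\aCandidateJourney,\delay)$ for $\delay=\arrivalDelay(\originEvent)$, with no prefix strongly dominated by a hook witness. In particular the origin prefix $\originPrefix=\originPrefixSequence$ is not strongly dominated by any join witness in this scenario; since join witnesses use at most one trip and do not touch $\destinationEvent$ or $\targetEvent$, it suffices to argue in $\parameterizedBestDelayScenario(\sourcePrefix,\delay)$.

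First I would construct, for a putative witness journey $\aWitnessJourney$ realizing $\timeTravelArrivalTime(\aTrip[i],\aVertex)$, the corresponding hook witness in the real scenario $\delayScenario$. The key point is that $\timeTravelDelayScenario(\aTrip[i])$ assigns to each earlier stop event $\aTrip[j]$ ($j<i$) an arrival delay of exactly $\arrivalTimeDifference(\aTrip,i,j)=\min(\maxDelay,\arrivalTime(\aTrip[i])-\arrivalTime(\aTrip[j]))$, which is the \emph{largest} arrival delay that $\aTrip[j]$ could physically have while remaining consistent with $\aTrip$ being punctual at index $i$ and not traveling backwards in time (and respecting the delay limit). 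In any realizable scenario $\delayScenario$ in which the candidate is feasible, we have $\arrivalTime(\delayScenario,\aTrip[j])\leq\arrivalTime(\aTrip[i])+\arrivalDelay(\originEvent)$ by the no-time-travel property along $\aTrip$ from index $j$ to index $i$; since $\arrivalDelay(\originEvent)=\delay\geq 0$ this means $\arrivalDelay(\delayScenario,\aTrip[j])\leq\arrivalTimeDifference(\aTrip,i,j)+\delay$. Combined with the fact that all other events are given the maximum delay $\maxDelay$ in $\timeTravelDelayScenario(\aTrip[i])$, a witness feasible in $\timeTravelDelayScenario(\aTrip[i])$ (departing no earlier than $\departureTime(\sourcePrefix,\bestDelayScenario)=\departureTime(\sourceEvent)+\maxDelay$) is also feasible in $\parameterizedBestDelayScenario(\sourcePrefix,\delay)$ and arrives no later there. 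So $\timeTravelArrivalTime(\aTrip[i],\aVertex)$ is an upper bound on $\witnessArrivalTime(\sourcePrefix,\aVertex,1)$ under the delay $\delay$, i.e.\ on the arrival time of the best one-trip witness at $\aVertex$ in $\parameterizedBestDelayScenario(\sourcePrefix,\delay)$.

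Second I would conclude using the join-witness argument. Since $\aVertex$ lies on a shortest $\originVertex$-$\destinationVertex$-path, the origin stop prefix $\originPrefix(\aVertex)$ reaches $\aVertex$ at time $\candidateArrivalTime(\aVertex)$ in $\bestDelayScenario$ (recall $\arrivalDelay^{\bestDelayScenario}(\originEvent)=0$; the parameter $\delay$ only raises the arrival time of events at or after $\originEvent$, and $\aVertex$ is reached via the transfer, whose traversal time is fixed, so actually $\candidateArrivalTime(\aVertex)$ here should be read as $\arrivalTime(\originEvent)+\delay+\transferTime(\originVertex,\aVertex)$ in the parameterized scenario, but the inequality we prove with the unparameterized $\candidateArrivalTime$ only gets stronger as $\delay$ grows). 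If $\candidateArrivalTime(\aVertex)>\timeTravelArrivalTime(\aTrip[i],\aVertex)\geq\witnessArrivalTime(\sourcePrefix,\aVertex,1)$ (under delay $\delay$), then there is a one-trip journey reaching $\aVertex$ strictly earlier than $\originPrefix(\aVertex)$, i.e.\ a join witness strongly dominating the origin stop prefix; by the triangle-inequality propagation used in Lemma~\ref{th:app:join-limit-negative} this lifts to a join witness strongly dominating $\originPrefix$ itself, contradicting prefix-optimality of $\aCandidateJourney$ in $\parameterizedBestDelayScenario(\aCandidateJourney,\delay)$.

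The main obstacle is the first step: carefully verifying that $\timeTravelDelayScenario(\aTrip[i])$ really is the $\leqEval$-best (for witnesses) among all realizable scenarios that keep the candidate feasible, accounting simultaneously for the no-backwards-travel constraint along $\aTrip$, the maximum-speed constraint if one models it, the global delay limit $\maxDelay$, and the interaction with the free parameter $\delay=\arrivalDelay(\originEvent)$. One must be precise that only arrival delays of events strictly before $\aTrip[i]$ on the first trip are constrained below $\maxDelay$, that these are exactly the events a one-trip join witness could use to re-enter $\aTrip$, and that giving every other event the full $\maxDelay$ is safe because increasing a witness's available delays can only help it. Once that monotonicity-and-realizability bookkeeping is pinned down, the rest follows from the established join-limit machinery.
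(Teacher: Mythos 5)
Your plan has the right skeleton — reduce to join witnesses, exploit the no-time-travel constraint along the first trip, and conclude that the origin stop prefix $\originPrefix(\aVertex)$ is strongly dominated — but the central quantitative step is inverted, and this is fatal as written. You claim that $\timeTravelArrivalTime(\aTrip[i],\aVertex)$ is an \emph{upper} bound on $\witnessArrivalTime(\sourcePrefix,\aVertex,1)$ (equivalently, that a witness "arrives no later" in $\parameterizedBestDelayScenario(\sourcePrefix,\delay)$ than in $\timeTravelDelayScenario(\aTrip[i])$). The true relation is the opposite: $\timeTravelDelayScenario(\aTrip[i])\leqEval\bestDelayScenario(\sourcePrefix)$ because it only \emph{lowers} the arrival delays of the events $\aTrip[j]$ with $j<i$, so witnesses arrive no \emph{later} in the time-travel scenario and $\timeTravelArrivalTime(\aTrip[i],\aVertex)\leq\witnessArrivalTime(\sourcePrefix,\aVertex,1)$. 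If your inequality held, the hypothesis $\candidateArrivalTime(\aVertex)>\timeTravelArrivalTime(\aTrip[i],\aVertex)$ would already imply $\joinLimit(\aVertex)<0$ and the lemma would collapse into Lemma~\ref{th:app:join-limit-negative}, holding for \emph{all} delay scenarios — but the entire point of this lemma is that it is a strictly stronger pruning rule that is only valid for realizable scenarios. Relatedly, passing through Lemma~\ref{th:app:all-best-case} to a parameterized best-case scenario discards exactly the no-time-travel property you need, since $\parameterizedBestDelayScenario(\aCandidateJourney,\delay)$ is itself generally not realizable.

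The correct assembly (which is what the paper does) works directly in the given realizable scenario $\delayScenario$: take the witness $\aWitnessJourney$ attaining $\timeTravelArrivalTime(\aTrip[i],\aVertex)$; since it beats $\witnessArrivalTime(\sourcePrefix,\aVertex,1)$ strictly (otherwise Lemma~\ref{th:app:join-limit-negative} already applies), its exit event must be some $\aTrip[j]$ with $j<i$ and $\arrivalTime(\aTrip[i])-\arrivalTime(\aTrip[j])<\maxDelay$. The no-time-travel constraint $\arrivalDelay(\aTrip[j])-\arrivalDelay(\aTrip[i])\leq\arrivalTime(\aTrip[i])-\arrivalTime(\aTrip[j])=\arrivalTimeDifference(\aTrip,i,j)$ then bounds the arrival of $\aWitnessJourney$ in $\delayScenario$ by $\timeTravelArrivalTime(\aTrip[i],\aVertex)+\arrivalDelay(\originEvent)$, while $\originPrefix(\aVertex)$ arrives at $\candidateArrivalTime(\aVertex)+\arrivalDelay(\originEvent)$; the origin delay appears on \emph{both} sides and cancels, which is why the comparison $\timeTravelArrivalTime(\aTrip[i],\aVertex)<\candidateArrivalTime(\aVertex)$ suffices. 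You have the physical constraint $\arrivalDelay(\delayScenario,\aTrip[j])\leq\arrivalTimeDifference(\aTrip,i,j)+\delay$ in hand, but your final comparison accounts for the $+\delay$ only on the candidate side (your parenthetical) and not on the witness side, so the argument as stated does not close.
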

\begin{proof}
	If~$\candidateArrivalTime(\aVertex)>\witnessArrivalTime(\sourcePrefix,\aVertex,1)$, i.e.,~$\joinLimit(\aVertex)<0$, then~$\joinLimit(\aCandidateJourney)<0$ by Lemma~\ref{th:app:join-limit-negative} and therefore~$\aCandidateJourney$ is not prefix-optimal in any delay scenario.
	Since~$\timeTravelDelayScenario(\aTrip[i])\leqEval\bestDelayScenario(\sourcePrefix)$, we know that~$\timeTravelArrivalTime(\aTrip[i],\aVertex)\leq\witnessArrivalTime(\sourcePrefix,\aVertex,1)$.
	Assume therefore that~$\timeTravelArrivalTime(\aTrip[i],\aVertex)<\candidateArrivalTime(\aVertex)\leq\witnessArrivalTime(\sourcePrefix,\aVertex,1)$.
	Let~$\aTrip[i]:=\originEvent$ and let~$\aWitnessJourney$ be a $\sourceVertex$-$\aVertex$-journey with at most one trip and arrival time~$\timeTravelArrivalTime(\aTrip[i],\aVertex)$ at~$\aVertex$ that departs no later than~$\aCandidateJourney$ in~$\timeTravelDelayScenario(\aTrip[i])$.
	Since the arrival time of~$\aWitnessJourney$ is earlier in~$\timeTravelDelayScenario(\aTrip[i])$ than in~$\bestDelayScenario(\sourcePrefix)$, the final stop event of~$\aWitnessJourney$ must be some~$\aTrip[j]$ with~$j<i$ and~$\arrivalTime(\aTrip[i])-\arrivalTime(\aTrip[j])<\maxDelay$.    
	In any delay scenario~$\delayScenario$ without time travel, $\arrivalTime(\delayScenario,\aTrip[j])\leq\arrivalTime(\delayScenario,\aTrip[i])$ must hold.
	This is equivalent to~$\arrivalDelay(\aTrip[j])-\arrivalDelay(\aTrip[i])\leq\arrivalTime(\aTrip[i])-\arrivalTime(\aTrip[j])=\arrival{\timeTravelDelayScenario}(\aTrip[i])(\aTrip[j])$.
	It follows that
	\begin{alignat*}{3}
		&\arrivalTime(\delayScenario,\aWitnessJourney)&&=&&\ \arrivalTime(\timeTravelDelayScenario(\aTrip[i]),\aWitnessJourney)+\arrivalDelay(\aTrip[j])-\arrival{\timeTravelDelayScenario}(\aTrip[i])(\aTrip[j])\\
		&&&=&&\ \timeTravelArrivalTime(\aTrip[i],\aVertex)+\arrivalDelay(\aTrip[j])-\arrival{\timeTravelDelayScenario}(\aTrip[i])(\aTrip[j])\\
		&&&<&&\ \candidateArrivalTime(\aVertex)+\arrivalDelay(\aTrip[j])-\arrival{\timeTravelDelayScenario}(\aTrip[i])(\aTrip[j])\\
		&&&\leq&&\ \candidateArrivalTime(\aVertex)+\arrivalDelay(\aTrip[i])\\
		&&&=&&\ \arrivalTime(\delayScenario,\originPrefix(\aVertex)).
	\end{alignat*}
	Thus, $\aCandidateJourney$ is not prefix-optimal in~$\delayScenario$ because the prefix~$\originPrefix(\aVertex)$ is strongly dominated by~$\aWitnessJourney$.
\end{proof}

Lemma~\ref{th:time-travel-scenario} shows that if~$\candidateArrivalTime(\aVertex)>\timeTravelArrivalTime(\aTrip[i],\aVertex)$, then no candidate starting with~$\originPrefix(\aVertex)$ can be prefix-optimal in any delay scenario without time travel.
Because~$\timeTravelArrivalTime(\aTrip[i],\aVertex)\leq\witnessArrivalTime(\sourcePrefix,\aVertex,1)$, this also implies~$\joinLimit(\aVertex)<0$, so Lemma~\ref{th:time-travel-scenario} is a stronger version of Lemma~\ref{th:app:join-limit-negative}.

\subparagraph{Second Trip.}
Consider a candidate~$\aCandidateJourney=\candidateSequence$.
If~$\aCandidateJourney$ is Pareto-optimal in a delay scenario~$\delayScenario$, the arrival time of the earliest witness at~$\targetVertex$ in~$\delayScenario$ is an upper bound for the arrival time of~$\targetEvent$.
Because the second trip of~$\aCandidateJourney$ cannot travel backwards in time, this is also an upper bound for the arrival time of the origin prefix~$\originPrefix$ at~$\destinationVertex$.
We define the~\emph{time travel limit}
\[\timeTravelLimit(\destinationEvent,\targetEvent):=\witnessArrivalTime(\sourcePrefix,\astop(\targetEvent),2) -\candidateArrivalTime(\astop(\destinationEvent)).\]

\begin{lemma}
	\label{th:time-travel-limit}
	Let~$\aCandidateJourney=\candidateSequence$ be a candidate and~$\delayScenario$ a delay scenario without time travel.
	If~$\aCandidateJourney$ is prefix-optimal in~$\delayScenario$, then~$\arrivalTime(\delayScenario,\originEvent)\leq\timeTravelLimit(\destinationEvent,\targetEvent)$.
\end{lemma}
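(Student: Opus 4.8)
The plan is to trap $\arrivalTime(\delayScenario,\originEvent)$ between two bounds. The lower bound comes from feasibility of $\aCandidateJourney$ in $\delayScenario$ together with the no-time-travel assumption on the second trip; the upper bound comes from Pareto-optimality of $\aCandidateJourney$ compared against the two-trip witness that realises $\witnessArrivalTime(\sourcePrefix,\targetVertex,2)$. If $\witnessArrivalTime(\sourcePrefix,\targetVertex,2)=\infty$ the claim is vacuous, so I may assume such a witness exists.

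First I would assemble the lower-bound chain. Since $\aCandidateJourney$ is prefix-optimal in $\delayScenario$ it is feasible, so its intermediate transfer is feasible, giving $\arrivalTime(\delayScenario,\originEvent)+\transferTime(\originVertex,\destinationVertex)\leq\departureTime(\delayScenario,\destinationEvent)$. Because $\destinationEvent=\aTrip_2[\destinationIndex]$ precedes $\targetEvent=\aTrip_2[\targetIndex]$ on the same trip $\aTrip_2$ and $\delayScenario$ has no time travel, the vehicle cannot reach $\targetEvent$ before it departs $\destinationEvent$, hence $\departureTime(\delayScenario,\destinationEvent)\leq\arrivalTime(\delayScenario,\targetEvent)$. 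Combining, $\arrivalTime(\delayScenario,\originEvent)+\transferTime(\originVertex,\destinationVertex)\leq\arrivalTime(\delayScenario,\targetEvent)$; in words, the arrival of the origin prefix at $\destinationVertex$ in $\delayScenario$ is at most the arrival of $\targetEvent$. For the upper bound, let $\aWitnessJourney$ be a $\sourceVertex$-$\targetVertex$-journey with at most two trips attaining $\witnessArrivalTime(\sourcePrefix,\targetVertex,2)$ in $\bestDelayScenario(\sourcePrefix)$. The crucial step is to transport $\aWitnessJourney$ into $\delayScenario$: outside $\stopEvents(\sourcePrefix)=\langle\sourceEvent\rangle$, every stop event has a smaller arrival delay and a larger departure delay in $\delayScenario$ than in $\bestDelayScenario(\sourcePrefix)$, and by the standing assumption of Section~\ref{sec:optimality-tests} the source prefix $\sourcePrefix$ is Pareto-optimal in every scenario; this should suffice to show that $\aWitnessJourney$ remains feasible for the departure time $\departureTime(\delayScenario,\aCandidateJourney)$ in $\delayScenario$ and arrives there no later than in $\bestDelayScenario(\sourcePrefix)$. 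Since $\aWitnessJourney$ has at most two trips, Pareto-optimality of $\aCandidateJourney$ forbids it from arriving at $\targetVertex$ strictly earlier than $\aCandidateJourney$, so $\arrivalTime(\delayScenario,\targetEvent)\leq\arrivalTime(\delayScenario,\aWitnessJourney)\leq\witnessArrivalTime(\sourcePrefix,\targetVertex,2)$.

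Chaining the two bounds gives $\arrivalTime(\delayScenario,\originEvent)+\transferTime(\originVertex,\destinationVertex)\leq\witnessArrivalTime(\sourcePrefix,\targetVertex,2)$, which rearranges to the stated inequality once $\candidateArrivalTime(\destinationVertex)=\arrivalTime(\originEvent)+\transferTime(\originVertex,\destinationVertex)$ is unfolded. I expect the transport step to be the main obstacle: because $\bestDelayScenario(\sourcePrefix)$ is best-case only at $\sourceEvent$ and worst-case everywhere else, witnesses that ride $\aTrip_1$ from $\sourceEvent$ must be handled separately from those that avoid $\sourceEvent$, one has to verify that every intermediate transfer of $\aWitnessJourney$ and its initial departure constraint survive the change of scenario, and the additive offsets coming from the worst-case delays in $\bestDelayScenario(\sourcePrefix)$ must be tracked carefully so that they match $\candidateArrivalTime(\destinationVertex)$ exactly rather than merely $\transferTime(\originVertex,\destinationVertex)$. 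The structural chain of inequalities is routine; it is this bookkeeping, together with pinning down precisely what the no-time-travel condition guarantees about the second trip, where the real care is needed.
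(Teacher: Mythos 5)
Your proof is correct and follows essentially the same three-inequality chain as the paper's: feasibility of the intermediate transfer gives $\candidateArrivalTime(\destinationVertex)+\arrivalDelay(\originEvent)\leq\departureTime(\delayScenario,\destinationEvent)$, the no-time-travel assumption gives $\departureTime(\delayScenario,\destinationEvent)\leq\arrivalTime(\delayScenario,\targetEvent)$, and prefix-optimality gives $\arrivalTime(\delayScenario,\targetEvent)\leq\witnessArrivalTime(\sourcePrefix,\targetVertex,2)$, which chain together to bound the origin delay exactly as in the paper. The only difference is that you re-derive the last bound by transporting the optimal witness from $\bestDelayScenario(\sourcePrefix)$ into $\delayScenario$, whereas the paper obtains it in one line from $\optimalCandidateJourneys(\destinationEvent)\subseteq\virtualCandidateJourneys(\destinationEvent)$ (Lemma~\ref{th:app:optimal-virtual-candidates}), which already packages the case analysis on whether the witness uses $\sourceEvent$ that you correctly identify as the main remaining bookkeeping.
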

\begin{proof}
	Assume that~$\aCandidateJourney=\candidateSequence$ is prefix-optimal in~$\delayScenario$.
	Then~$\aCandidateJourney\in\optimalCandidateJourneys(\destinationEvent)\subseteq\virtualCandidateJourneys(\destinationEvent)$ and therefore~$\arrivalTime(\delayScenario,\targetEvent)\leq\witnessArrivalTime(\sourcePrefix,\targetVertex,2)$ must hold.
	Since no time travel occurs in~$\delayScenario$, it follows that~$\departureTime(\delayScenario,\destinationEvent)\leq\arrivalTime(\delayScenario,\targetEvent)$.
	Finally, since~$\aCandidateJourney$ is feasible, $\arrivalTime(\delayScenario,\originPrefix)=\candidateArrivalTime(\destinationVertex)+\arrivalDelay(\originEvent)\leq\departureTime(\delayScenario,\destinationEvent)$ must hold.
	Altogether, this yields~$\arrivalDelay(\originEvent)\leq\witnessArrivalTime(\sourcePrefix,\targetVertex,2)-\candidateArrivalTime(\destinationVertex)=\timeTravelLimit(\destinationEvent,\targetEvent)$.
\end{proof}

For a destination event~$\destinationEvent\in\optimalDestinationEvents$, we define the~\emph{maximum witness time}
\begin{equation}\label{eqn:max-witness-time}
	\maxWitnessTime(\destinationEvent):=\max\limits_{\targetEvent\in\virtualTargetEvents(\destinationEvent)}\witnessArrivalTime(\sourcePrefix,\astop(\targetEvent),2)
\end{equation}
and the~\emph{aggregated time travel limit}
\begin{equation}\label{eqn:agg-time-travel-limit}
	\aggTimeTravelLimit(\destinationEvent):=\max\limits_{\targetEvent\in\virtualTargetEvents(\destinationEvent)}\timeTravelLimit(\destinationEvent,\targetEvent)=\maxWitnessTime(\destinationEvent)-\candidateArrivalTime(\astop(\destinationEvent)).
\end{equation}
If the delay of~$\originEvent$ in a delay scenario~$\delayScenario$ without time travel exceeds~$\aggTimeTravelLimit(\destinationEvent)$, then by Lemma~\ref{th:time-travel-limit} the shortcut~$(\originEvent,\destinationEvent)$ is not required in~$\delayScenario$.
We therefore redefine the maximum origin delay as
\[\maxOriginDelay(\destinationEvent):=\min\{\feasibilityLimit(\destinationEvent),\joinLimit(\astop(\destinationEvent)),\aggTimeTravelLimit(\destinationEvent)\}.\]

\section{Delay-ULTRA Shortcut Computation}
\label{sec:delay-ultra}
The Delay-ULTRA shortcut computation retains the basic outline of ULTRA:
For each source stop~$\sourceVertex$ (in parallel), it enumerates all prefix-optimal candidates originating at~$\sourceVertex$.
For each possible candidate departure time~$\atime_j$ at~$\sourceVertex$ (in descending order), candidates departing at~$\atime_j$ and witnesses departing within~$[\atime_j,\atime_{j+1})$ are explored.
ULTRA does this with a \emph{run} of the MR algorithm, which is limited to the first two rounds.
Delay-ULTRA must additionally take different delay scenarios into account.
Section~\ref{sec:optimality-tests} showed that three delay scenarios are sufficient to examine a candidate~$\aCandidateJourney$:
The candidate itself is explored in~$\bestDelayScenario$.
Witnesses are explored in~$\bestDelayScenario(\sourcePrefix)$ and~$\bestDelayScenario(\originPrefix)$, yielding arrival times~$\witnessArrivalTime(\sourcePrefix,\cdot,\cdot)$ and~$\witnessArrivalTime(\originPrefix,\cdot,\cdot)$, respectively.
While~$\bestDelayScenario$ is the same for all candidates, the witness delay scenarios depend on the candidate prefixes.
Thus, instead of a single witness search per run, Delay-ULTRA performs one for each source and origin prefix.

\subparagraph*{Self-Pruning.}
To avoid redundant work, Delay-ULTRA performs the witness searches in a staggered manner and extends the \emph{self-pruning} approach used by ULTRA.
A run with departure time~$\atime_j$ begins with a two-round MR witness search in~$\worstDelayScenario$.
For each vertex~$\aVertex$ and round~$n\leq{}2$, this finds the earliest arrival time~$\witnessArrivalTime(\atime_j,\aVertex,n)$ among journeys to~$\aVertex$ with at most~$n$ trips departing no earlier than~$\atime_j$.
The original self-pruning rule is applied here:
Because~$\witnessArrivalTime(\atime_j,\aVertex,n)\leq\witnessArrivalTime(\atime_{j+1},\aVertex,n)$ holds for each vertex~$\aVertex$ and round~$n$, the search for~$\atime_j$ initializes the former with the latter and only explores journeys departing before~$\atime_{j+1}$.
Since~$\witnessArrivalTime(\sourcePrefix,\aVertex,n)\leq\witnessArrivalTime(\atime_j,\aVertex,n)$ holds as well, the witness search in~$\bestDelayScenario(\sourcePrefix)$ reuses the results for~$\worstDelayScenario$ in the same manner.
Moreover, since the two scenarios only differ in the departure delay of~$\sourceEvent$, the witness search in~$\bestDelayScenario(\sourcePrefix)$ only explores journeys starting with~$\sourcePrefix$.
By the same argument, the witness search for~$\bestDelayScenario(\originPrefix)$ reuses the results for~$\bestDelayScenario(\sourcePrefix)$ and only explores journeys starting with~$\originPrefix$.
Thus, each witness search is pruned with results from the previous ones.

A similar approach is used for the entry index~$\entryIndex(\sourcePrefix,\aTrip)$ of each trip~$\aTrip$.
Whereas~$\entryIndex(\sourcePrefix,\aTrip)$ is based on witness arrival times in~$\bestDelayScenario(\sourcePrefix)$, we define~$\entryIndex(\atime_j,\aTrip)$ as the entry index based on witnesses in~$\worstDelayScenario$ that depart no earlier than~$\atime_j$.
To compute it, we augment the MR witness search as follows:
whenever a stop event~$\aTrip[i]$ is entered during a route scan, $\entryIndex(\atime_j,\aTrip')$ is updated to~$\min(\entryIndex(\atime_j,\aTrip'),i)$ for all trips~$\aTrip\preceq\aTrip'$.
Based on the inequalities~$\entryIndex(\sourcePrefix,\aTrip)\geq\entryIndex(\atime_j,\aTrip)\geq\entryIndex(\atime_{j+1},\aTrip)$, self-pruning is applied to the entry indices as well.

\subsection{Overview}
\begin{algorithm}[t]
	\caption{Delay-ULTRA shortcut computation.}\label{alg:overview}
	\Input{Public transit network~$(\stops,\stopEvents,\trips,\routes,\graph)$, source stop~$\sourceVertex\in\stops$}
	\Output{Shortcuts~$\shortcutEdges$}
	$\departures\leftarrow$ Collect departures of trips at~$\sourceVertex$\label{alg:overview:departure}\;
	\ForEach{$(\departureTime,\sourceEvents)\in\departures$ in descending order of~$\departureTime$\label{alg:overview:departure-loop}}{
		$\witnessArrivalTime(\departureTime,\cdot,\cdot),\entryIndex(\departureTime,\cdot)\leftarrow$ Find witnesses in~$\worstDelayScenario$\label{alg:overview:regular-witnesses}\;
		\ForEach{$\sourceEvent\in\sourceEvents$\label{alg:overview:source-loop}}{
			\tcp{Solve~\textsc{DelayShortcut}-$\originPrefix(\astop(\originEvent))$}
			$\witnessArrivalTime(\sourcePrefix,\cdot,\cdot),\entryIndex(\sourcePrefix,\cdot)\leftarrow$ Find witnesses in~$\bestDelayScenario(\sourcePrefix)$\label{alg:overview:shared-source}\;
			\ForEach{$\originEvent\in\succeedingEvents(\sourceEvent)$\label{alg:overview:origin-loop}}{
				$\candidateArrivalTime(\cdot),\joinLimit(\cdot)\leftarrow$ Explore intermediate transfers of candidates\label{alg:overview:intermediate}\;
				$\exitIndex(\sourcePrefix,\cdot)\leftarrow$ Compute exit indices\label{alg:overview:exit-indices}\;
				$\optimalDestinationEvents,\feasibilityLimit(\cdot)\leftarrow$ Find destination events\label{alg:overview:register}\;
				$\destinationSplitLimit(\cdot)\leftarrow$ Compute \destinationIndex-split limits\label{alg:overview:destination-split-limit}\;
				$\witnessArrivalTime(\originPrefix,\cdot,\cdot),\maxWitnessDelay(\cdot)\leftarrow$ Find witnesses in~$\bestDelayScenario(\originPrefix)$\label{alg:overview:shared-origin}\;
				$\newShortcutEdges,\minOriginDelayBound(\cdot),\maxOriginDelay(\cdot)\leftarrow$ Scan second trips of candidates\label{alg:overview:route2}\;
				Merge~$\newShortcutEdges$ into~$\shortcutEdges$\label{alg:overview:merge}\;
			}
		}
	}
\end{algorithm}
Algorithm~\ref{alg:overview} shows a high-level overview of the shortcut computation for a source stop~$\sourceVertex$.
Further details are provided in Section~\ref{sec:delay-ultra:details}.
Line~\ref{alg:overview:departure} collects all stop events departing at~$\sourceVertex$ in a set~$\departures$, grouped by their departure time.
For a tuple~$(\departureTime,\sourceEvents)\in\departures$, $\sourceEvents$ contains all stop events departing at~$\sourceVertex$ with scheduled departure time~$\departureTime-\maxDelay$.
These tuples are processed in descending order of departure time.
Line~\ref{alg:overview:regular-witnesses} computes~$\witnessArrivalTime(\departureTime,\cdot,\cdot)$ and~$\entryIndex(\departureTime,\cdot)$ with a witness search in~$\worstDelayScenario$.
For each source event~$\sourceEvent\in\sourceEvents$, the algorithm then explores candidates beginning with the source prefix~$\sourcePrefix=\sourcePrefixSequence$.
Line~\ref{alg:overview:shared-source} computes~$\witnessArrivalTime(\sourcePrefix,\cdot,\cdot)$ and~$\entryIndex(\sourcePrefix,\cdot)$ with a self-pruned witness search in~$\bestDelayScenario(\sourcePrefix)$.
For each event~$\originEvent$ succeeding~$\sourceEvent$ in~$\aTrip(\sourceEvent)$, the algorithm now solves the subproblem~\textsc{DelayShortcut}-$\originPrefix(\originVertex)$ for the origin stop prefix~$\originPrefix(\originVertex)=\originStopPrefixSequence{\originVertex}$ with origin stop~$\originVertex=\astop(\originEvent)$.

Line~\ref{alg:overview:intermediate} performs a Dijkstra search from~$\originVertex$ in~$\bestDelayScenario$.
For each vertex~$\aVertex$, this computes the candidate arrival time~$\candidateArrivalTime(\aVertex)$ and the join limit~$\joinLimit(\aVertex)$ by Equation~\ref{eqn:join-limit}.
For each trip~$\aTrip$ that can be entered by a candidate, line~\ref{alg:overview:exit-indices} computes the exit index~$\exitIndex(\sourcePrefix,\aTrip)$.
For each event~$\destinationEvent:=\aTrip[i]$ with~$i<\witnessIndex(\sourcePrefix,\aTrip)$, line~\ref{alg:overview:register} computes the feasibility limit~$\feasibilityLimit(\destinationEvent)$ by Equation~\ref{eqn:feasibility-limit}.
If it is non-negative, then~$\destinationEvent$ is added to~$\optimalDestinationEvents$.
Line~\ref{alg:overview:destination-split-limit} computes the \destinationIndex-split limits for the events in~$\optimalDestinationEvents$ by Lemma~\ref{th:app:destination-split-limit}.
For the \targetIndex-split limits, line~\ref{alg:overview:shared-origin} computes the witness arrival times~$\witnessArrivalTime(\originPrefix,\cdot,\cdot)$ and the maximum witness delays~$\maxWitnessDelay(\cdot)$ with a modified MR search in~$\bestDelayScenario(\originPrefix)$.
For each destination event~$\destinationEvent\in\optimalDestinationEvents$, line~\ref{alg:overview:route2} calculates the aggregated \targetIndex-split limit~$\aggTargetSplitLimitBound(\destinationEvent)$ by Equation~\ref{eqn:agg-target-split-limit} and the aggregated time travel limit~$\aggTimeTravelLimit(\destinationEvent)$ by Equation~\ref{eqn:agg-time-travel-limit}.
From these, the maximum origin delay~$\maxOriginDelay(\destinationEvent)$ and a lower bound~$\minOriginDelayBound(\destinationEvent)$ for the minimum origin delay are computed.
If~$\minOriginDelayBound(\destinationEvent)\leq\maxOriginDelay(\destinationEvent)$, the shortcut~$(\originEvent,\destinationEvent)$ is generated and merged into the result set~$\shortcutEdges$ in line~\ref{alg:overview:merge}.

\subsection{Details}
\label{sec:delay-ultra:details}
We now discuss further details that are not shown in Algorithm~\ref{alg:overview}.

\subparagraph*{Exploring Intermediate Transfers.}
Line~\ref{alg:overview:intermediate} of Algorithm~\ref{alg:overview} explores the intermediate transfers of candidates beginning with the origin stop prefix~$\originPrefix(\originVertex)$.
This is done with a Dijkstra search starting from~$\originVertex$ in the best-case delay scenario~$\bestDelayScenario$.
For each vertex~$\aVertex$ visited by the search, this yields the candidate arrival time~$\candidateArrivalTime(\aVertex)$ and the join limit~$\joinLimit(\aVertex)$ by Equation~\ref{eqn:join-limit}.
By Lemma~\ref{th:app:join-limit-negative}, if~$\joinLimit(\aVertex)<0$, then all candidates beginning with~$\originPrefix(\aVertex)$ are irrelevant because they have a negative join limit as well.
In this case, the search is pruned at~$\aVertex$, i.e., the outgoing edges are not explored.
For an even stronger pruning rule, a Dijkstra search from~$\originVertex$ in the time travel delay scenario~$\timeTravelDelayScenario(\originEvent)$ can be performed before line~\ref{alg:overview:intermediate} to compute~$\timeTravelArrivalTime(\originEvent,\aVertex)$ for each vertex~$\aVertex$.
Then by Lemma~\ref{th:time-travel-scenario}, the candidate search can be pruned at a vertex~$\aVertex$ if~$\timeTravelArrivalTime(\originEvent,\aVertex)<0$.\looseness=-1

\subparagraph*{Exit Indices.}
Line~\ref{alg:overview:exit-indices} of Algorithm~\ref{alg:overview} collects all routes that can be entered by a candidate and calculates the exit indices of their trips.
A route~$\aRoute$ can be entered if its last trip~$\aTrip_\text{max}$ can be entered, i.e., there is at least one stop event~$\aTrip_\text{max}[i]$ such that~$\candidateArrivalTime(\astop(\aTrip_\text{max}[i]))\leq\departureTime(\aTrip_\text{max}[i])+\maxDelay$.
For each trip~$\aTrip$ of~$\aRoute$, the exit index~$\exitIndex(\sourcePrefix,\aTrip)$ is then calculated.
Since this value is based on the delay scenario~$\bestDelayScenario(\sourcePrefix)$, which does not depend on the choice of~$\originEvent$, the value of~$\exitIndex(\sourcePrefix,\aTrip)$ cannot change between iterations of the loop in line~\ref{alg:overview:origin-loop}.
To avoid unnecessary recalculations, the algorithm maintains a timestamp for each route~$\aRoute$ that indicates whether the exit indices for the trips of~$\aRoute$ were already calculated for the current choice of~$\sourceEvent$.\looseness=-1

\subparagraph*{Finding Destination Events.}
\begin{algorithm}[t]
	\caption{Find destination events.}\label{alg:registerDestination}
	\Input{Route~$\aRoute$, first reachable stop index~$j$}
	\Output{Destination events~\optimalDestinationEvents,
		feasibility limits~$\feasibilityLimit(\cdot)$}
	$\tripBegin\leftarrow$ first trip of~$\aRoute$\label{alg:registerDestination:setBegin}\;
	$\tripEnd\leftarrow$ last trip of~$\aRoute$\label{alg:registerDestination:setEnd}\;
	\For{$j\leq{}i<\witnessIndex(\sourcePrefix,\tripBegin)$ in ascending order}{
		\While{$\witnessIndex(\sourcePrefix,\tripEnd)\leq{}i$\label{alg:registerDestination:updateEnd}}{$\tripEnd\leftarrow$ predecessor of~$\tripEnd$ in~$\aRoute$}
		\For{$\tripEnd\succeq\aTrip\succeq\tripBegin$ in descending order}{
			$\destinationEvent\leftarrow\aTrip[i]$\;
			$\destinationVertex\leftarrow\astop(\destinationEvent)$\;
			$\feasibilityLimit(\destinationEvent)\leftarrow\min(0,\departureTime(\destinationEvent)-\candidateArrivalTime(\destinationVertex))+\maxDelay$\label{alg:registerDestination:feasibilityLimit}\;
			\lIf{$\feasibilityLimit(\destinationEvent)<0$}{\Break\label{alg:registerDestination:canEnter}}
			$\optimalDestinationEvents\leftarrow\optimalDestinationEvents\cup\{\destinationEvent\}$\label{alg:registerDestination:addEvent}\;
		}
	}
\end{algorithm}

Line~\ref{alg:overview:register} of Algorithm~\ref{alg:overview} computes the set~$\optimalDestinationEvents$ of relevant destination events and the feasibility limit~$\feasibilityLimit(\destinationEvent)$ for each destination event~$\destinationEvent\in\optimalDestinationEvents$.
By Lemma~\ref{th:app:full-witness-condition}, a stop event~$\aTrip[i]\in\feasibleDestinationEvents$ is contained in~$\optimalDestinationEvents$ iff~$i<\witnessIndex(\sourcePrefix,\aTrip)$.
To find the stop events fulfilling this condition efficiently, each route collected in line~\ref{alg:overview:exit-indices} is processed individually.
Detailed pseudocode for the examination of a route~$\aRoute$ is given in Algorithm~\ref{alg:registerDestination}.
Lines~\ref{alg:registerDestination:setBegin} and~\ref{alg:registerDestination:setEnd} initialize the variables~$\tripBegin$ and~$\tripEnd$, which initially refer to the first and last trip of~$\aRoute$, respectively.
The algorithm then iterates over the stops of~$\aRoute$, starting at the first stop that was reached by the Dijkstra search in line~\ref{alg:overview:intermediate} of Algorithm~\ref{alg:overview} and ending at~$\witnessIndex(\sourcePrefix,\tripBegin)-1$.
For each stop index~$i$, line~\ref{alg:registerDestination:updateEnd} decreases~$\tripEnd$ until~$\witnessIndex(\sourcePrefix,\tripEnd)>i$ holds.
This ensures that the trips after~$\tripEnd$, which contain no further stop events in~$\optimalDestinationEvents$, are ignored.
Then, in descending order from~$\tripEnd$ to~$\tripBegin$, the algorithm examines for each trip~$\aTrip$ the destination event~$\destinationEvent:=\aTrip[i]$ departing at the current stop.
Line~\ref{alg:registerDestination:feasibilityLimit} computes the feasibility limit~$\feasibilityLimit(\destinationEvent)$ according to Equation~\ref{eqn:feasibility-limit}.
If this is below~$0$, then~$\destinationEvent\notin\feasibleDestinationEvents$, so it can be discarded.
Since the feasibility limit will also be below~$0$ for all of the trips preceding~$\aTrip$, the loop is exited altogether.
Otherwise, $\destinationEvent$ is added to~$\optimalDestinationEvents$ in line~\ref{alg:registerDestination:addEvent}.\looseness=-1

\subparagraph*{$\destinationIndex$-Split Limit.}
Line~\ref{alg:overview:destination-split-limit} of Algorithm~\ref{alg:overview} computes the~$\destinationIndex$-split limits for the events in~$\optimalDestinationEvents$.
By Equation~\ref{th:app:destination-split-limit}, the~$\destinationIndex$-split limit of a destination event~$\aTrip[i]$ depends on the maximum witness delays of the events in~$\optimalDestinationEvents$ that precede it in~$\aTrip$.
For each trip~$\aTrip$ with at least one stop event in~$\optimalDestinationEvents$, the algorithm sweeps over its stop events from first to last and maintains a value~$\destinationSplitLimit(\aTrip)$, which is initalized with~$0$.
If a stop event~$\aTrip[i]$ is not in~$\optimalDestinationEvents$, it is skipped.
Otherwise, its~$\destinationIndex$-split limit~$\destinationSplitLimit(\aTrip[i])$ is set to~$\destinationSplitLimit(\aTrip)$.
Afterwards, the maximum witness delay~$\maxWitnessDelay(\aTrip[i])$ is calculated according to Equation~\ref{eqn:max-witness-delay} and~$\destinationSplitLimit(\aTrip)$ is set to the maximum of itself and~$\maxWitnessDelay(\aTrip[i])+1$.

\subparagraph*{Modified MR Search for~$(\originEvent,\targetEvent)$-Witnesses.}
Line~\ref{alg:overview:shared-origin} of Algorithm~\ref{alg:overview} performs a witness search in~$\bestDelayScenario(\originPrefix)$ to calculate the witness arrival times~$\witnessArrivalTime(\originPrefix,\cdot,\cdot)$ and the maximum witness delays~$\maxWitnessDelay(\cdot)$.
This is done with an MR search that is modified as follows.
Since~$\witnessArrivalTime(\originPrefix,\aVertex,1)=\min(\witnessArrivalTime(\sourcePrefix,\aVertex,1),\candidateArrivalTime(\aVertex))$ holds for each vertex~$\aVertex$, the first MR round is skipped and~$\witnessArrivalTime(\originPrefix,\aVertex,1)$ is initialized accordingly.
Afterwards, a second MR round is performed to find second trips and final transfers.
For each vertex~$\aVertex$, this search calculates the earliest arrival time~$\witnessArrivalTime(\originPrefix,\aVertex,2)$ at~$\aVertex$ among journeys with at most~$2$ trips in~$\bestDelayScenario(\originPrefix)$, as well as maximum witness delay~$\maxWitnessDelay(\aVertex)$ of the corresponding witness.
This search finds exactly those witnesses that include an intermediate transfer from~$\originEvent$ to a destination event~$\destinationEvent'$.
Since all other witnesses have maximum witness delay~$\maxDelay$, $\maxWitnessDelay(\aVertex)$ is initialized with~$\maxDelay$ for each vertex~$\aVertex$ at the start of the search.

To calculate the maximum witness delays, the scanning procedure for a route~$\aRoute$ is modified as follows.
In addition to the active trip~$\activeTrip$, it maintains an \emph{active maximum witness delay}~$\maxWitnessDelay(\activeTrip)$, which is initialized with~$\maxDelay$.
If~$\activeTrip$ is improved by entering~$\aRoute$ at a destination event~$\destinationEvent'$, then~$\maxWitnessDelay(\activeTrip)$ is set to~$\maxWitnessDelay(\destinationEvent')$, which is calculated according to Equation~\ref{eqn:max-witness-delay}.
If exiting~$\aRoute$ at a stop~$\astop$ improves the arrival time~$\witnessArrivalTime(\originPrefix,\astop,2)$, then~$\maxWitnessDelay(\astop)$ is set to~$\maxWitnessDelay(\activeTrip)$.
During the Dijkstra search, when an edge~$(\aVertex,\bVertex)$ is relaxed and~$\witnessArrivalTime(\originPrefix,\bVertex,2)$ is improved, then~$\maxWitnessDelay(\bVertex)$ is set to~$\maxWitnessDelay(\aVertex)$.
When a new iteration of the loop in line~\ref{alg:overview:origin-loop} is started, i.e., a new origin event is processed, $\witnessArrivalTime(\originPrefix,\cdot,\cdot)$ and~$\maxWitnessDelay(\cdot)$ are cleared.

\subparagraph*{Second Candidate Trip Scan.}
\begin{algorithm}[t]
	\caption{Scan second candidate trip.}\label{alg:candidateTrip2}
	\Input{Trip~$\aTrip$, destination events~$\optimalDestinationEvents$}
	\Output{New shortcut edges~$\newShortcutEdges$, lower bounds~$\minOriginDelayBound(\cdot)$ for the minimum origin delay, maximum origin delays~$\maxOriginDelay(\cdot)$}
	$\aggTargetSplitLimitBound(\aTrip)\leftarrow\infty$\label{alg:candidateTrip2:initTargetSplitLimit}\;
	$\maxWitnessTime(\aTrip)\leftarrow-\infty$\label{alg:candidateTrip2:initMaxWitnessTime}\;
	$j\leftarrow\min\{i\mid\aTrip[i]\in\optimalDestinationEvents\}$\label{alg:candidateTrip2:calculateStart}\;
	\For{$\exitIndex(\sourcePrefix,\aTrip)>i\geq{}j$ in descending order}{
		$\targetEvent\leftarrow\aTrip[i+1]$\;
		$\targetVertex\leftarrow\astop(\targetEvent)$\;
		\If{$\arrivalTime(\targetEvent)\leq\witnessArrivalTime(\sourcePrefix,\targetVertex,2)$\label{alg:candidateTrip2:checkTargetEvent}}{
			Calculate~$\directTargetSplitLimit(\targetEvent)$ by Lemma~\ref{th:app:direct-target-split-limit}\label{alg:candidateTrip2:1targetSplitLimit}\;
			Calculate~$\indirectTargetSplitLimitBound(\targetEvent)$ by Equation~\ref{eqn:2target-split-limit}\label{alg:candidateTrip2:2targetSplitLimit}\;
			$\aggTargetSplitLimitBound(\aTrip)\leftarrow\min(\aggTargetSplitLimitBound(\aTrip),\max(\directTargetSplitLimit(\targetEvent),\indirectTargetSplitLimitBound(\targetEvent)))$\label{alg:candidateTrip2:aggregateTargetSplitLimit}\;
			$\maxWitnessTime(\aTrip)\leftarrow\max(\maxWitnessTime(\aTrip),\witnessArrivalTime(\sourcePrefix,\targetVertex,2))$\label{alg:candidateTrip2:aggregateMaxWitnessTime}\;
		}
		$\destinationEvent\leftarrow\aTrip[i]$\label{alg:candidateTrip2:setDestinationEvent}\;
		$\destinationVertex\leftarrow\astop(\destinationEvent)$\;
		$\minOriginDelayBound(\destinationEvent)\leftarrow\max(\destinationSplitLimit(\destinationEvent),\aggTargetSplitLimitBound(\aTrip))$\label{alg:candidateTrip2:setMinOriginDelay}\;
		$\aggTimeTravelLimit(\destinationEvent)\leftarrow\maxWitnessTime(\aTrip)-\candidateArrivalTime(\destinationVertex)$\label{alg:candidateTrip2:calculateAggTimeTravelLimit}\;
		$\maxOriginDelay(\destinationEvent)\leftarrow\min\{\feasibilityLimit(\destinationEvent),\joinLimit(\destinationVertex),\aggTimeTravelLimit(\destinationEvent)\}$\label{alg:candidateTrip2:setMaxOriginDelay}\;
		\If{$\minOriginDelayBound(\destinationEvent)\leq\maxOriginDelay(\destinationEvent)$}{
			$\newShortcutEdges\leftarrow\newShortcutEdges\cup\{(\originEvent,\destinationEvent)\}$\label{alg:candidateTrip2:generateShortcut}\;
		}
	}
\end{algorithm}

Line~\ref{alg:overview:route2} of Algorithm~\ref{alg:overview} scans all trips that contain at least one destination event~$\destinationEvent\in\optimalDestinationEvents$.
This scan calculates a lower bound~$\minOriginDelayBound(\destinationEvent)$ for the minimum origin delay and the maximum origin delay~$\maxOriginDelay(\destinationEvent)$ and generates the shortcut~$(\originEvent,\destinationEvent)$ if it is required.
Detailed pseudocode for the scan of a trip~$\aTrip$ is given in Algorithm~\ref{alg:candidateTrip2}.
For each stop event~$\aTrip[i]\in\optimalDestinationEvents$, the algorithm must calculate a lower bound~$\aggTargetSplitLimitBound(\aTrip[i])$ for the aggregated target split limit by Equation~\ref{eqn:agg-target-split-limit} and the aggregated time travel limit~$\aggTimeTravelLimit(\aTrip[i])$ by Equation~\ref{eqn:agg-time-travel-limit}.
The latter in turn requires the maximum witness time~$\maxWitnessTime(\aTrip[i])$, which is calculated by Equation~\ref{eqn:max-witness-time}.
Computing these values requires aggregating over the set~$\virtualTargetEvents(\aTrip[i])$ of relevant target events.
According to Lemma~\ref{th:app:virtual-target-events}, this is given by~$\candidateExitEvents(\sourcePrefix,\aTrip)\cap\succeedingEvents(\aTrip[i])$, i.e., the set of c-out events succeeding~$\aTrip[i]$ in~$\aTrip$.

In order to consider exactly the set~$\virtualTargetEvents(\aTrip[i])$ for each destination event~$\aTrip[i]$, the procedure performs one sweep across the stop events of~$\aTrip$ in reverse, starting at the stop index~$\exitIndex(\sourcePrefix,\aTrip)-1$ and ending at the first stop index~$j$ for which~$\aTrip[j]\in\optimalDestinationEvents$.
The algorithm maintains the values~$\aggTargetSplitLimitBound(\aTrip)$ and~$\maxWitnessTime(\aTrip)$, which fulfill the invariants that~$\aggTargetSplitLimitBound(\aTrip)=\aggTargetSplitLimitBound(\aTrip[i])$ and~$\maxWitnessTime(\aTrip)=\maxWitnessTime(\aTrip[i])$ at the point when~$\aTrip[i]$ is scanned.
They are initialized with~$\infty$ in line~\ref{alg:candidateTrip2:initTargetSplitLimit} and~$-\infty$ in line~\ref{alg:candidateTrip2:initMaxWitnessTime}, respectively.
In the step for destination stop index~$i$, the target event~$\targetEvent:=\aTrip[i+1]$ is examined.
Line~\ref{alg:candidateTrip2:checkTargetEvent} tests whether~$\targetEvent$ is a c-out event.
If so, then~$\directTargetSplitLimit(\targetEvent)$ and~$\indirectTargetSplitLimitBound(\targetEvent)$ are calculated in lines~\ref{alg:candidateTrip2:1targetSplitLimit}--\ref{alg:candidateTrip2:2targetSplitLimit}.
Lines~\ref{alg:candidateTrip2:aggregateTargetSplitLimit} and~\ref{alg:candidateTrip2:aggregateMaxWitnessTime} updates~$\aggTargetSplitLimitBound(\aTrip)$ and~$\maxWitnessTime(\aTrip)$, respectively, to uphold the invariants.
Line~\ref{alg:candidateTrip2:setMinOriginDelay} incorporates~$\aggTargetSplitLimitBound(\aTrip)$ into~$\minOriginDelayBound(\destinationEvent)$.
Line~\ref{alg:candidateTrip2:calculateAggTimeTravelLimit} calculates~$\aggTimeTravelLimit(\destinationEvent)$ from~$\maxWitnessTime(\aTrip)$ and line~\ref{alg:candidateTrip2:setMaxOriginDelay} incorporates this into~$\maxOriginDelay(\destinationEvent)$.
If~$\minOriginDelayBound(\destinationEvent)\leq\maxOriginDelay(\destinationEvent)$, the shortcut~$(\originEvent,\destinationEvent)$ is generated in line~\ref{alg:candidateTrip2:generateShortcut}.

\subparagraph*{Merging Shortcuts.}
Line~\ref{alg:overview:merge} of Algorithm~\ref{alg:overview} merges the newly generated shortcuts~$\newShortcutEdges$ into the result set~$\shortcutEdges$.
If a shortcut~$\edge=(\originEvent,\destinationEvent)\in\newShortcutEdges$ is already contained in~$\shortcutEdges$, then~$\minOriginDelayBound(\edge)$ is set to~$\min(\minOriginDelayBound(\edge),\minOriginDelayBound(\destinationEvent))$ and~$\maxOriginDelay(\edge)$ to~$\max(\maxOriginDelay(\edge),\maxOriginDelay(\destinationEvent))$.
Otherwise, they are initialized with~$\minOriginDelayBound(\destinationEvent)$ and~$\maxOriginDelay(\destinationEvent)$ and the shortcut is added to~$\shortcutEdges$.

\section{Update Phase}
\label{sec:update}
The update phase incorporates the delay update stream~$\langle\delayUpdate_1,\delayUpdate_2,\dots\rangle$ into the query data structures.
We present two variants of the update phase: a basic one in Section~\ref{sec:update:basic} that only updates the data structures and removes irrelevant shortcuts, and an advanced version in Section~\ref{sec:update:advanced} that searches for missing shortcuts.
For simplicity, we assume that each update is processed individually.
In practice, this may not be viable because updates arrive too frequently.
Thus, updates that are received while an update phase is running are buffered.
Once it has finished, the buffered updates are combined into a single input for the next phase.

\subsection{Basic}
\label{sec:update:basic}
For a delay update~$\delayUpdate$, the basic update phase first computes the new delay scenario~$\delayScenario$ by incorporating~$\delayUpdate$ into the previous scenario.
It then recalculates the set of routes such that no trips of the same route overtake each other in~$\delayScenario$ and rebuilds the TB query data structures accordingly.
Finally, two types of irrelevant shortcuts are removed from the set~$\shortcutEdges$ of Delay-ULTRA shortcuts:
The first is the set
\[\infeasibleShortcuts(\delayScenario):=\{\edge=(\originEvent,\destinationEvent)\in\shortcutEdges\mid\departureTime(\delayScenario,\destinationEvent)<\arrivalTime(\delayScenario,\originEvent)+\transferTime(\edge)\}\]
of shortcuts that are infeasible in~$\delayScenario$.
The second type are shortcuts~$\edge=(\originEvent,\destinationEvent)$ for which the arrival delay~$\arrivalDelay(\originEvent)$ is not within the computed origin delay interval~$\originDelayIntervalBound(\edge)$.

\subsection{Advanced}
\label{sec:update:advanced}
The advanced variant searches for shortcuts that are required due to delays above~$\maxDelay$ but are not in~$\shortcutEdges$. 
An exhaustive search for all missing shortcuts would take too long for the update phase, so we propose a heuristic \emph{replacement search} for candidates that were made infeasible by the last delay update.
A candidate with intermediate transfer~$(\originEvent,\destinationEvent)$ becomes infeasible if the arrival delay of~$\originEvent$ increases to the point where~$\destinationEvent$ is no longer reachable.
The update phase therefore collects the set~$\delayedStopEvents$ of stop events whose arrival delay was increased by the current update.
For each stop event in this set, the algorithm collects the affected candidates and searches for replacements that are prefix-optimal in the current delay scenario~$\delayScenario$.

\subparagraph{Replacement Search Routine.}
The core of the replacement search algorithm is the routine~$\FindReplacements$.
It takes as input a set~$\sourceEvents$ of source events, a set~$\targetStops$ of target stops and an upper bound~$\maxArrivalTime(\astop)$ for the arrival time at each target stop~$\astop\in\targetStops$.
For each pair of source event~$\sourceEvent\in\sourceEvents$ and target stop~$\targetVertex\in\targetStops$, it searches for a Pareto-optimal~$\astop(\sourceEvent)$-$\targetVertex$-journey with exactly~2 trips that departs no earlier than~$\departureTime(\sourceEvent)$ and arrives no later than~$\maxArrivalTime(\targetVertex)$.
If one exists, a shortcut representing its intermediate transfer is generated.

The routine uses one-to-all MR searches restricted to two rounds as its basic building blocks.
First, a backward MR search establishes a latest departure time~$\backwardDepartureTime(\aVertex,n)$ at each vertex~$\aVertex$ for each round~$n$.
This is the latest departure time at~$\aVertex$ such that at least one target stop~$\targetVertex\in\targetStops$ can be reached from~$\aVertex$ with at most~$i$ trips no later than~$\maxArrivalTime(\targetVertex)$.
For each target stop~$\targetVertex\in\targetStops$, $\backwardDepartureTime(\targetVertex,0)$ is initialized with~$\maxArrivalTime(\targetVertex)$.
The backward MR search is then run with the following pruning rule:
Let~$\minTime$ be the earliest departure time in~$\delayScenario$ of any source event in~$\sourceEvents$.
Whenever the search improves~$\backwardDepartureTime(\aVertex,n)$ for a vertex~$\aVertex$ in round~$n$, it checks whether~$\backwardDepartureTime(\aVertex,n)<\minTime$.
If so, then the search is pruned at~$\aVertex$ since it cannot be reached in time from any source event in~$\sourceEvents$.

Afterwards, for each source event~$\sourceEvent\in\sourceEvents$, a forward MR search from~$\astop(\sourceEvent)$ with departure time~$\departureTime(\sourceEvent,\delayScenario)$ is performed, which computes the earliest arrival time~$\forwardArrivalTime(\aVertex,n)$ at each vertex~$\aVertex$ for each round~$n$.
Whenever~$\forwardArrivalTime(\aVertex,n)$ is improved for a vertex~$\aVertex$ in round~$n$, the algorithm checks whether~$\forwardArrivalTime(\aVertex,n)>\backwardDepartureTime(\aVertex,2-n)$.
If so, the search is pruned at~$\aVertex$ since no target stop~$\targetVertex\in\targetStops$ can be reached from there without exceeding~$\maxArrivalTime(\targetVertex)$.
Finally, all journeys with exactly two trips that have been found at a target stop~$\targetVertex\in\targetStops$ are extracted.
For each intermediate transfer that occurs in one of these journeys, a shortcut is added to the set~$\replacementShortcutEdges$ of replacement shortcuts.

\subparagraph{Calling the Routine.}
\begin{algorithm}[t]
	\caption{Replacement search.}\label{alg:replaceOrigin}
	\Input{Shortcuts~$\shortcutEdges$, delay scenario~$\delayScenario$, infeasible shortcuts~$\infeasibleShortcuts(\delayScenario)$,\\delayed origin event~$\originEvent\in\delayedStopEvents$}
	\Output{Replacement shortcuts~$\replacementShortcutEdges$}
	$\sourceEvents\leftarrow\precedingEvents(\originEvent)$\label{alg:replaceOrigin:sourceEvents}\;
	$\destinationEvents\leftarrow\{\destinationEvent\mid(\originEvent,\destinationEvent)\in\infeasibleShortcuts(\delayScenario)\}$\label{alg:replaceOrigin:destinationEvents}\;
	$\targetStops\leftarrow\{\astop(\targetEvent)\mid\targetEvent\in\succeedingEvents(\destinationEvent),\destinationEvent\in\destinationEvents\}$\label{alg:replaceOrigin:targetStops}\;
	\ForEach{$\targetVertex\in\targetStops$\label{alg:replaceOrigin:startUpperBounds}}{
		$\originArrivalTime(\targetVertex)\leftarrow\arrivalTime(\delayScenario,\originEvent)+\transferTime(\originVertex,\targetVertex)$\;
		$\maxArrivalTime(\targetVertex)\leftarrow\originArrivalTime(\targetVertex)$\label{alg:replaceOrigin:walkToTarget}\;
	}
	\ForEach{$\aTrip[i]\in\destinationEvents$}{
		$\aTrip'\leftarrow\FindEarliestTrip(\aTrip,i,\delayScenario)$\label{alg:replaceOrigin:findReplacementTrip}\;
		\ForEach{$\targetEvent\in\succeedingEvents(\aTrip'[i])$}{
			$\maxArrivalTime(\astop(\targetEvent))\leftarrow\min(\maxArrivalTime(\astop(\targetEvent)),\arrivalTime(\targetEvent))$\label{alg:replaceOrigin:useReplacementTrip}\;
		}
	}
	$\replacementShortcutEdges\leftarrow\FindReplacements(\sourceEvents,\targetStops,\maxArrivalTime(\cdot))$\label{alg:replaceOrigin:runSearch}\;
\end{algorithm}

Algorithm~\ref{alg:replaceOrigin} depicts the replacement search for an origin event~$\originEvent\in\delayedStopEvents$.
The set~$\sourceEvents$ of potential source events consists of all stop events preceding~$\originEvent$ in~$\aTrip(\originEvent)$.
For each shortcut~$(\originEvent,\destinationEvent)$ that is infeasible in~$\delayScenario$, the destination event~$\destinationEvent$ is added to~$\destinationEvents$ and the stops visited by its successor events in~$\aTrip(\destinationEvent)$ are added to the set~$\targetStops$ of target stops.
Lines~\ref{alg:replaceOrigin:startUpperBounds}--\ref{alg:replaceOrigin:useReplacementTrip} compute the upper bounds~$\maxArrivalTime(\cdot)$.
Two alternatives are considered for an infeasible candidate~$\aCandidateJourney=\candidateSequence$ with~$\astop(\targetEvent)=\targetVertex$.
One is the journey~$\langle[\sourceEvent,\originEvent],\targetVertex\rangle$ that transfers directly from~$\originVertex$ to~$\targetVertex$.
Line~\ref{alg:replaceOrigin:walkToTarget} initializes~$\maxArrivalTime(\targetVertex)$ with the arrival time~$\originArrivalTime(\targetVertex):=\arrivalTime(\delayScenario,\originEvent)+\transferTime(\originVertex,\targetVertex)$ of this journey.
The other option is to wait at~$\destinationVertex$ for the next trip with the same stop sequence as~$\aTrip(\destinationEvent)$.
For each destination event~$\aTrip[i]\in\destinationEvents$, line~\ref{alg:replaceOrigin:findReplacementTrip} calls the subroutine~$\FindEarliestTrip(\aTrip,i,\delayScenario)$ to find the earliest trip~$\aTrip'$ with the same stop sequence as~$\aTrip$ such that~$\originArrivalTime(\astop(\aTrip[i]))\leq\departureTime(\delayScenario,\aTrip'[i])$.
Then, for each stop event~$\targetEvent$ succeeding~$\aTrip'[i]$ in~$\aTrip'$, line~\ref{alg:replaceOrigin:useReplacementTrip} incorporates~$\arrivalTime(\targetEvent)$ into~$\maxArrivalTime(\astop(\targetEvent))$.
Finally, the~$\FindReplacements$ routine is invoked in line~\ref{alg:replaceOrigin:runSearch}.

Algorithm~\ref{alg:replaceOrigin} can be sped up further by batching the~$\FindReplacements$ calls.
For all delayed stop events belonging to the same trip, the computed inputs are merged and~$\FindReplacements$ is called just once on the merged inputs.
This reduces the number of MR searches, at the cost of worse upper bounds~$\maxArrivalTime(\cdot)$.
Furthermore, the individual~$\FindReplacements$ calls are independent of each other and can be run in parallel.

\section{Experiments}
\label{sec:experiments}
\begin{table*}[t]
	\center
	\caption{Sizes of the public transit networks and the unrestricted transfer graphs.}
	\label{tbl:networks}
	\begin{tabular*}{\textwidth}{@{\,}l@{\extracolsep{\fill}}r@{\extracolsep{\fill}}r@{\extracolsep{\fill}}r@{\extracolsep{\fill}}r@{\extracolsep{\fill}}r@{\extracolsep{\fill}}r@{\,}}
		\toprule
		Network & Stops & Routes & Trips & Events & Graph vertices & Graph edges\\
		\midrule
		London & 19\,682 & 1\,955 & 114\,508 & 4\,508\,644 & 181\,642 & 575\,364 \\
		Germany & 243\,167 & 230\,255 & 2\,381\,394 & 48\,380\,936 & 6\,870\,496 & 21\,367\,044 \\
		Stuttgart & 13\,584 & 12\,351 & 91\,304 & 1\,561\,972 & 1\,166\,604 & 3\,682\,232 \\
		Switzerland & 25\,125 & 13\,786 & 350\,006 & 4\,686\,865 & 603\,691 & 1\,853\,260 \\
		\bottomrule
	\end{tabular*}
\end{table*}
All algorithms were implemented in C\raisebox{0.15ex}{\small++}17 and compiled with GCC version 11.3.0 and optimization flag \mbox{-O3}.
Shortcut computations were run on a machine with two 64-core AMD Epyc Rome 7742 CPUs clocked at~2.25\,GHz, with a turbo frequency of~3.4\,GHz,~1024\,GiB of~\mbox{DDR4-3200}~RAM, and~256\,MiB of L3 cache, using all 128 cores.
All other experiments were conducted on a single core of a machine with two 8-core Intel Xeon Skylake SP Gold 6144 CPUs clocked at~3.5\,GHz, with a turbo frequency of~4.2\,GHz,~192\,GiB of~\mbox{DDR4-2666}~RAM, and~24.75\,MiB of L3 cache.

We evaluate our algorithms on the networks of Switzerland, Greater London, the greater region of Stuttgart, and Germany, all of which were previously used to evaluate ULTRA~\cite{Bau23}.
See Table~\ref{tbl:networks} for an overview.
The public transit networks for London and Switzerland were obtained from Transport for London\footnote{\url{https://data.london.gov.uk}} and a publicly available GTFS feed\footnote{\url{https://gtfs.geops.ch/}}, respectively; the other two are based on proprietary data.
All networks cover the timetables of two successive business days except for London, which covers a single business day.
Except for London, all networks include long-distance trains, but the vast majority of trips represent local transport (e.g., buses, trams, subways).
Transfer graphs were extracted from OpenStreetMap\footnote{\url{https://download.geofabrik.de/}} following the methodology in~\cite{Bau23}.
We used walking as the transfer mode, assuming a constant speed of 4.5\,km/h on each edge.

To generate delay scenarios for our experimental evaluation, we use a synthetic delay model.
In Section~\ref{sec:delay-impact}, we describe the model in detail, discuss how we generate test queries and analyze the impact of delays on the four benchmark networks.
Afterwards, we evaluate the performance of Delay-ULTRA, the update phases and the query algorithms in Section~\ref{sec:exp:alg}.

\subsection{Delay Model}
\label{sec:delay-impact}
\begin{figure*}[t]
	\newcommand{\plotHeight}{4.5cm}
\newcommand{\plotWidth}{0.97\textwidth}

\begin{tikzpicture}
\arrayrulecolor{legendColor}
\pgfplotsset{
    grid style={KITblack20,line width = 0.2pt,dash pattern = on 2pt off 1pt}
}

\begin{axis}[
   nodes near coords,
   ybar,
   height=\plotHeight,
   width=\plotWidth,
   xmin=-0.5,
   xmax=7.5,
   ymin=0,
   ymax=100,
   xtick={0, 1, 2, 3, 4, 5, 6, 7},
   ytick={0, 25, 50, 75, 100},
   xtick pos=left,
   xticklabels={{$[0,0]$},{$(0,3]$},{$(3,5]$},{$(5,10]$},{$(10,15]$},{$(15,20]$},{$(20,30]$},{$(30,60]$}},
   yticklabels={0\%,25\%,50\%,75\%,100\%},
   xlabel={Delay interval [min]},
   xlabel style={yshift=-10pt},
   ylabel={Share of delays},
   ylabel style={yshift=5pt},
   xtick align=outside,
   ytick align=outside,
   grid=both,
   axis line style={legendColor},
   at={(0,0)},
   point meta=explicit symbolic,
]
\addplot[draw=none,fill=KITblue,bar width=18pt] coordinates {
    (0,75) [75\%]
    (1,15) [15\%]
    (2,5) [5\%]
    (3,3) [3\%]
    (4,1) [1\%]
    (5,0.5) [0.5\%]
    (6,0.3) [0.3\%]
    (7,0.2) [0.2\%]
};
\addplot[draw=none,fill=KITred,bar width=18pt] coordinates {
    (0,50) [50\%]
    (1,30) [30\%]
    (2,10) [10\%]
    (3,6) [6\%]
    (4,2) [2\%]
    (5,1) [1\%]
    (6,0.6) [0.6\%]
    (7,0.4) [0.4\%]
};

\legend{Delay distribution of events,Delay distribution of updates}
\end{axis}
\arrayrulecolor{black}
\end{tikzpicture}%
	\caption{%
        Delay distribution in our synthetically generated delay scenarios.
		There are eight delay intervals, ranging from 0 to 60 minutes.
		The percentages of stop events whose delay falls within each interval is shown in blue, and the percentage of delay updates is shown in red.
		Within an interval, delays are distributed uniformly.
	}%
	\label{fig:delayDistribution}%
\end{figure*}
Since we do not have access to proper real-world delay data, we generate synthetic delay scenarios using the model by Bast et al.~\cite{Bas13b}:
We create one delay update per trip~$\aTrip$, with the index~$i$ of the first affected event chosen uniformly at random and the delay~$\delay$ chosen at random according to the probability distribution outlined below.
The arrival and departure delays of each stop event~$\aTrip[j]$ with~$j\geq{}i$ are set to~$\delay$; updates with~$\delay=0$ are discarded.
For the reveal time, we choose the arrival time~$\arrivalTime(\aTrip[i])$ of the first affected event.
The distribution for the expected delay of a stop event is shown in blue in Figure~\ref{fig:delayDistribution}; it is roughly based on aggregated real-world punctuality data for London\footnote{\url{https://www.raildeliverygroup.com/punctuality.html}. Accessed May 6th, 2022.}.
Note that since our model divides each trip into an undelayed first portion and a second portion with delay~$\delay$, half of all stop events remain unaffected by delays in expectation.
To obtain the desired expected delay distribution, we reduce the probability of delay~0 to~50\% and double all other probabilities; the resulting probability distribution is shown in red in Figure~\ref{fig:delayDistribution}.
To allow for a reasonably fast simulation of the update phase, we limit our delay scenarios to the interval between~12 and~1\,PM of the first covered day.
Updates with a reveal time after~1\,PM are discarded.
For updates with a reveal time before~12\,PM, we omit all affected stop events before~12\,PM from the update.\looseness=-1

Note that our synthetic model is heavily simplified, for example by assuming that there is at most one cause of delay along each trip and that this delay is never caught up.
However, most of our simplifications make the synthetic scenarios more challenging than real ones.
Since delays are chosen uniformly at random within each interval, instead of preferring lower delays, the average delay is overestimated.
Delay updates are revealed at the latest possible moment, which puts more pressure on the update phase to process them in time.
Finally, our model does not reproduce knock-on effects caused by trips waiting for each other or a delayed train blocking a platform.
In practice, these effects make it more likely that precomputed transfers stay feasible in the presence of delays.

\subparagraph*{Impact of Delays.}
\begin{table*}[t]
	\center
	\caption{Impact of delays on queries in the synthetic delay scenarios. For each network, random queries were generated until 1\,000 affected ones were found. A query is affected if MR without delay information returns at least one suboptimal or infeasible journey. Also reported are the number of returned journeys that are suboptimal and infeasible in the synthetic delay scenarios, respectively. For each infeasible journey, the highest slack among its infeasible transfers was measured. The reported slack is the median value across all infeasible journeys.}%
	\label{tbl:testQueries}%
	\begin{tabular*}{\textwidth}{@{\,}l@{\extracolsep{\fill}}r@{\extracolsep{\fill}}r@{\extracolsep{\fill}}r@{\extracolsep{\fill}}r@{\extracolsep{\fill}}r@{\extracolsep{\fill}}r@{\,}}
		\toprule
		\multirow{2}{*}{Network} & \multicolumn{2}{c}{Queries} & \multicolumn{3}{c}{Journeys} & \multirow{2}{*}{Slack [s]}\\
		\cmidrule{2-3} \cmidrule{4-6}
		& Total & Affected & Total & Affected & Infeasible & \\
		\midrule
		London & 5\,572 & 1\,000 (17.94\%) & 25\,532  & 1\,585 (6.20\%) & 956 (3.74\%) & 54\\
		Germany & 10\,182 & 1\,000 (\phantom{1}9.82\%) & 54\,852 & 1\,699 (3.09\%) & 783 (1.42\%) & 68\\
		Stuttgart & 102\,362 & 1\,000 (\phantom{1}0.97\%) & 424\,480 &1\,528 (0.35\%) & 724 (0.17\%) & 136\\
		Switzerland & 30\,375 & 1\,000 (\phantom{1}3.29\%) & 143\,570 & 1\,601 (1.11\%) & 894 (0.62\%) & 120\\
		\bottomrule
	\end{tabular*}
\end{table*}
To investigate how many queries are affected by delays in each networks, we generate queries until~1\,000 affected ones are found.
We choose source and target vertices uniformly at random and departure times uniformly at random between~12 and~1\,PM.
For the query execution times, we use the departure times.
This maximizes the number of delays that are already known when a query is executed, making it more likely to be affected by delays are therefore more challenging to answer correctly.
In reality, the execution time often lies well before the departure time.
For each query, we perform one MR search in an undelayed scenario~$\delayScenario_\text{base}$ and one in the scenario~$\delayScenario_\text{del}$ that incorporates all delay updates that are known at execution time.
This yields Pareto sets~$\journeys_\text{base}$ and~$\journeys_\text{del}$, respectively.
If any journey~$\aJourney\in\journeys_\text{base}$ is infeasible in~$\delayScenario_\text{del}$ or has a later arrival time than a journey in~$\journeys_\text{del}$ with at most~$|\aJourney|$ trips, the query counts as affected.
The results are shown in Table~\ref{tbl:testQueries}.
We observe that among the affected queries, the share of affected and infeasible journeys is similar across all networks: around a third are affected and about half of those are infeasible.
However, the share of affected queries varies heavily between networks.
In particular, it is much higher for London and Germany than for Switzerland and Stuttgart.

To explain these differences, we measure the \emph{average headway} of a route, i.e., the mean difference in departure time between two consecutive trips of the route.
Figure~\ref{fig:headways} shows the distribution of the average headways for each network.
We notice a stark difference between London, which is a large metropolitan area, and the other networks, which also contain many rural regions.
For London, the distribution centers around headways between 10 and 20 minutes, whereas the other networks mostly contain trips with headways above 30 minutes.

\begin{figure*}
	\newcommand{\plotHeight}{5cm}
\newcommand{\plotWidth}{\textwidth}

\begin{tikzpicture}
\arrayrulecolor{legendColor}
\pgfplotsset{
    grid style={KITblack20,line width = 0.2pt,dash pattern = on 2pt off 1pt}
}

\begin{axis}[
   ybar,
   height=\plotHeight,
   width=\plotWidth,
   xmin=-0.5,
   xmax=7.5,
   ymin=0,
   ymax=50,
   xtick={0, 1, 2, 3, 4, 5, 6, 7},
   ytick={0, 10, 20, 30, 40, 50},
   xtick pos=left,
   xticklabels={{$[0,5]$},{$(5,10]$},{$(10,20]$},{$(20,30]$},{$(30,60]$},{$(60,120]$},{$(120,240]$},{$(240,\infty)$}},
   yticklabels={0\%,10\%,20\%,30\%,40\%,50\%},
   xlabel={Average headway interval [min]},
   xlabel style={yshift=-10pt},
   ylabel={Percentage of routes},
   ylabel style={yshift=5pt},
   xtick align=outside,
   ytick align=outside,
   grid=both,
   axis line style={legendColor},
   at={(0,0)},
]

\addplot[draw=none,fill=KITred,bar width=5pt] table[x=Headway,y=London,col sep=tab]{fig/headways.dat};\label{legend:london}
\addplot[draw=none,fill=KITorange,bar width=5pt] table[x=Headway,y=Germany,col sep=tab]{fig/headways.dat};\label{legend:germany}
\addplot[draw=none,fill=KITgreen,bar width=5pt] table[x=Headway,y=Switzerland,col sep=tab]{fig/headways.dat};\label{legend:switzerland}
\addplot[draw=none,fill=KITblue,bar width=5pt] table[x=Headway,y=Stuttgart,col sep=tab]{fig/headways.dat};\label{legend:stuttgart}

\legend{London,Germany,Switzerland,Stuttgart}
\end{axis}
\arrayrulecolor{black}
\end{tikzpicture}%
	\caption{%
        Headway distribution per network. Only trips that depart during the first service day are counted. Routes with fewer than two trips are ignored.
    }%
	\label{fig:headways}%
\end{figure*}

If headways are low, intermediate transfers have low slacks and are therefore more likely to become infeasible in the presence of delays.
To confirm this, we measure the highest slack among the infeasible transfers of each infeasible journey.
As shown in Table~\ref{tbl:testQueries}, the median slack across all infeasible journeys is particularly low for London and Germany.
While the headway distribution of Germany is closer to that of Switzerland and Stuttgart, it has a comparatively high share of routes with an average headway below 5 minutes.
This indicates its greater structural diversity:
While its overall structure is similar to that of Switzerland and Stuttgart, it contains some metropolitan areas that are closer in structure and size to London.
Accordingly, journeys that span these areas tend to use routes with lower headways and are therefore more likely to be affected by delays.

\subsection{Algorithm Performance}
\label{sec:exp:alg}
We now evaluate the performance of Delay-ULTRA.
First we describe our experimental setup.
For Switzerland, we then analyze how the delay limit affects the algorithm's performance.
For the other networks, we focus on a smaller selection of delay limits but analyze other aspects in more detail.

\subparagraph*{Experimental Setup.}
\begin{figure*}[t]
	\newcommand{\plotHeight}{5cm}
\newcommand{\plotWidth}{0.49\textwidth}

\begin{tikzpicture}
	\arrayrulecolor{legendColor}
	\pgfplotsset{
		grid style={KITblack20,line width = 0.2pt,dash pattern = on 2pt off 1pt}
	}
	
	\begin{axis}[
		name=plot1,
		height=\plotHeight,
		width=\plotWidth,
		xmin=0,
		xmax=10,
		ymin=0,
		ymax=800,
		xlabel={$\maxDelay$ [min]},
		xlabel style = {yshift=-2pt},
		ylabel={Shortcuts [M]},
		ylabel style = {yshift=-1pt},
		ytick={0, 200, 400, 600, 800},
		xtick pos=left,
		ytick pos=left,
		ytick align=outside,
		xtick align=outside,
		grid=both,
		axis line style={legendColor},
		at={(0,0)}
		]
		
		\addplot[color=KITblue,line width=1.5pt,mark=*,mark options={fill=white}] table[x=Delay,y=FullShortcuts,col sep=tab]{fig/preprocessing.dat};\label{legend:full}
		\addplot[color=KITred,line width=1.5pt,mark=*,mark options={fill=white}] table[x=Delay,y=InfeasibleShortcuts,col sep=tab]{fig/preprocessing.dat};\label{legend:infeasible}
		\addplot[color=KITgreen,line width=1.5pt,mark=*,mark options={fill=white}] table[x=Delay,y=UndelayedShortcuts,col sep=tab]{fig/preprocessing.dat};\label{legend:filtered}
	\end{axis}

	\node[inner sep=0pt,outer sep=0pt] (legend) at (plot1.outer south west) {};
	
	\node[inner sep=0pt,outer sep=0pt,anchor=north west] at (legend) {
		\footnotesize
		\begin{tabular*}{0.5\textwidth}{|@{~~}c@{\extracolsep{\fill}}c@{\extracolsep{\fill}}c@{~~}|}
			\hline
			&                                    &                                     \\[-1pt]
			Full & Infeasible & Filtered \\[3pt]
			\legend{\ref{legend:full}}    & \legend{\ref{legend:infeasible}}   & \legend{\ref{legend:filtered}} \\[4.5pt]
			\hline
		\end{tabular*}
	};
	
	\begin{axis}[
		height=\plotHeight,
		width=\plotWidth,
		xmin=0,
		xmax=11,
		ymin=0,
		ymax=60,
		xtick={0, 1, 2, 3, 4, 5, 6, 7, 8, 9, 10},
		ytick style={draw=none},
		ytick={10, 20, 30, 40, 50,60},
		xtick pos=left,
		xticklabels={,,},
		yticklabels={,,},
		xtick align=outside,
		grid=both,
		axis line style={legendColor},
		at={(0.51\linewidth,0)}]
		
		\addlegendimage{color=KITred,fill=KITred,area legend}\label{legend:basic}
		\addlegendimage{color=KITgreen,fill=KITgreen,area legend}\label{legend:advanced:data}
		\addlegendimage{color=KITgreen70,fill=KITgreen70,area legend}\label{legend:advanced:search}
		\addlegendimage{color=KITgreen50,fill=KITgreen50,area legend}\label{legend:advanced:merge}
	\end{axis}
	
	\begin{axis}[
		hide axis,
		ybar stacked,
		height=\plotHeight,
		width=\plotWidth,
		xmin=-0.5,
		xmax=10.5,
		ymin=0,
		ymax=60,
		axis line style={legendColor},
		at={(0.51\linewidth,0)}
		]
		
		\addplot[draw=none,fill=KITred,bar width=5pt] table[x expr=\coordindex-0.15,y=BasicUpdateTime,col sep=tab]{fig/update.dat};
	\end{axis}
	
	\begin{axis}[
		hide axis,
		ybar stacked,
		height=\plotHeight,
		width=\plotWidth,
		xmin=-0.5,
		xmax=10.5,
		ymin=0,
		ymax=60,
		axis line style={legendColor},
		at={(0.51\linewidth,0)}
		]
		
		\addplot[draw=none,fill=KITgreen,bar width=5pt] table[x expr=\coordindex+0.15,y=AdvancedDataUpdateTime,col sep=tab]{fig/update.dat};
		\addplot[draw=none,fill=KITgreen70,bar width=5pt] table[x expr=\coordindex+0.15,y=AdvancedSearchTime,col sep=tab]{fig/update.dat};
		\addplot[draw=none,fill=KITgreen50,bar width=5pt] table[x expr=\coordindex+0.15,y=AdvancedMergeTime,col sep=tab]{fig/update.dat};
	\end{axis}
	
	\begin{axis}[
		name=plot2,
		ybar stacked,
		height=\plotHeight,
		width=\plotWidth,
		xmin=-0.5,
		xmax=10.5,
		ymin=0,
		ymax=60,
		xlabel={$\maxDelay$ [min]},
		ylabel={Update time [s]},
		xlabel style = {yshift=-2pt},
		ytick={0, 10, 20, 30, 40, 50, 60},
		xtick={0, 1, 2, 3, 4, 5, 6, 7, 8, 9, 10},
		xtick pos=left,
		xtick style={draw=none},
		ytick pos=left,
		ytick align=outside,
		xtick align=outside,
		grid=none,
		at={(0.51\linewidth,0)}]
	\end{axis}
	
	\node[inner sep=0pt,outer sep=0pt,anchor=north east,xshift=\textwidth] at (legend) {
		\footnotesize
		\begin{tabular*}{0.49\textwidth}{|@{~}l@{~~~}c@{\extracolsep{\fill}}c@{\extracolsep{\fill}}c@{~}|}
			\hline
			&                                &                                   &                                \\[-6pt]
			&             Data              &             Search              &             Merge              \\[1pt]
			Basic   & \legend{\ref{legend:basic}} & & \\[1pt]
			Advanced & \legend{\ref{legend:advanced:data}} & \legend{\ref{legend:advanced:search}} & \legend{\ref{legend:advanced:merge}} \\[1pt]
			\hline
		\end{tabular*}
	};
	
	\arrayrulecolor{black}
\end{tikzpicture}%
	\caption{%
		Impact of the delay limit~$\maxDelay$ on the number of precomputed shortcuts (left) and the running times of the update phase (right), measured on the Switzerland network. We report the full set of shortcuts and, for an undelayed scenario, the number of infeasible shortcuts and those remaining after filtering. For the advanced update phase, the running time is split into updating the data structures, performing the replacement search and merging the replacement shortcuts.
	}%
	\label{fig:preprocessing}%
\end{figure*}
\begin{figure*}[th!]
	\newcommand{\plotHeight}{5cm}
\newcommand{\plotWidth}{0.49\textwidth}

\begin{tikzpicture}
\arrayrulecolor{legendColor}
\pgfplotsset{
   grid style={KITblack20,line width = 0.2pt,dash pattern = on 2pt off 1pt}
}

\begin{axis}[
   name=plot1,
   height=\plotHeight,
   width=\plotWidth,
   xmin=0,
   xmax=10,
   ymin=0,
   ymax=0.3,
   xlabel={$\maxDelay$ [min]},
   xlabel style = {yshift=-2pt},
   ylabel={Journey error rate [\%]},
   ylabel style = {yshift=-1pt},
   xtick pos=left,
   ytick pos=left,
   ytick align=outside,
   xtick align=outside,
   grid=both,
   axis line style={legendColor},
   at={(0,0)}
]

\addplot[color=KITgreen,line width=1.5pt,mark=*] table[x=Delay,y=NoReplacementCoveredJourneys,col sep=tab]{fig/coverage.dat};\label{legend:noreplacement:real}
\addplot[color=KITgreen,line width=1.5pt,mark=*,mark options={fill=white}] table[x=Delay,y=NoReplacementHypotheticalCoveredJourneys,col sep=tab]{fig/coverage.dat};\label{legend:noreplacement:hypothetical}
\addplot[color=KITred,line width=1.5pt,mark=*] table[x=Delay,y=ReplacementCoveredJourneys,col sep=tab]{fig/coverage.dat};\label{legend:replacement:real}
\addplot[color=KITred,line width=1.5pt,mark=*,mark options={fill=white}] table[x=Delay,y=ReplacementHypotheticalCoveredJourneys,col sep=tab]{fig/coverage.dat};\label{legend:replacement:hypothetical}
\end{axis}

\node[inner sep=0pt,outer sep=0pt] (legend) at (plot1.outer south west) {};

\node[inner sep=0pt,outer sep=0pt,anchor=north west] at (legend) {
\footnotesize
\begin{tabular*}{0.5\textwidth}{|@{~}l@{\extracolsep{\fill}}c@{\extracolsep{\fill}}c@{~}|}
    \hline
                    &                                    &                                     \\[-3pt]
    & Real & Hypothetical \\[3pt]
    Basic    & \legend{\ref{legend:noreplacement:real}}   & \legend{\ref{legend:noreplacement:hypothetical}} \\[3pt]
    Advanced    & \legend{\ref{legend:replacement:real}}   & \legend{\ref{legend:replacement:hypothetical}} \\[4pt]
    \hline
\end{tabular*}
};

\begin{axis}[
    name=plot2,
    height=\plotHeight,
    width=\plotWidth,
    xmin=0,
    xmax=10,
    ymin=0,
    ymax=40,
    xlabel={$\maxDelay$ [min]},
    xlabel style = {yshift=-2pt},
    ylabel={Query time [ms]},
    ylabel style = {yshift=-1pt},
    xtick pos=left,
    ytick pos=left,
    ytick align=outside,
    xtick align=outside,
    grid=both,
    axis line style={legendColor},
    at={(0.51\linewidth,0)}
]

\addplot[color=KITblue,line width=1.5pt,mark=*,mark options={fill=white}] table[x=Delay,y=BaselineQueryTime,col sep=tab]{fig/querytime.dat};\label{legend:mr}
\addplot[color=KITred,line width=1.5pt,mark=*,mark options={fill=white}] table[x=Delay,y=QueryTime,col sep=tab]{fig/querytime.dat};\label{legend:tb}
\end{axis}

\node[inner sep=0pt,outer sep=0pt,anchor=north east,xshift=\textwidth] at (legend) {
    \footnotesize
    \begin{tabular*}{0.49\textwidth}{|@{~~~~}c@{\extracolsep{\fill}}c@{~~~~}|}
    \hline
    &                                     \\[3pt]
    MR & Delay-ULTRA-TB \\[3pt]
    \legend{\ref{legend:mr}}    & \legend{\ref{legend:tb}} \\[10.5pt]
    \hline
    \end{tabular*}
};

\arrayrulecolor{black}
\end{tikzpicture}%
	\caption{%
		Impact of the delay limit~$\maxDelay$ on the query algorithm, measured on Switzerland. \emph{Left:} Journey error rate for various configurations of the update phase, averaged over the test queries from Section~\ref{sec:delay-impact}. \emph{Right:} Mean running times of Delay-ULTRA-TB and MR for 10\,000 random queries.
	}%
	\label{fig:query}%
\end{figure*}
To evaluate the update phase and the result quality of our Delay-ULTRA-TB query algorithm, we simulate the update algorithm within the interval from 12\,PM to 1\,PM and then run the test queries from Section~\ref{sec:delay-impact}.
For a query with execution time~$\queryExecutionTime$, we run MR on the delay scenario that is current at~$\queryExecutionTime$ and Delay-ULTRA-TB on the data produced by the last update phase that was fully finished before~$\queryExecutionTime$.
We then compute two error rates:
The \emph{query error rate} is the share of queries for which Delay-ULTRA-TB fails to find at least one Pareto-optimal journey.
The \emph{journey error rate} is the share of Pareto-optimal journeys (aggregated across all queries) that are not found.
Since the update phase takes non-negligible time, delays do not become known to the query algorithm instantaneously.
To quantify the effect on the result quality, we also compute \emph{hypothetical error rates}: we repeat the same experiment, but this time run a query with execution time~$\queryExecutionTime$ on the data produced by the last update phase started (rather than finished) by~$\queryExecutionTime$, resulting in \emph{hypothetical error rates}.

While this setup is useful for evaluating the result quality, it is not suitable for measuring query times because it switches between answering queries and processing delay updates with a high frequency.
Each update entails rebuilding the TB data structures, which effectively flushes the machine's cache and thereby distorting the query time measurements.
We therefore use a different setup for measuring query times:
We generate 10\,000 random queries with the departure time chosen uniformly at random between 1 and 2\,PM.
Then we simulate the update phases within the interval between 12 and 1\,PM and run the queries on the output of the last update.
Note that this results in a set of queries whose execution time and departure time does not match: the execution time is always 1\,PM while the departure time is up to one hour later.
To quantify the impact that this difference has on the result quality, we also measure the error rates for this set of queries.

\subparagraph*{Impact of Delay Limit.}
Figure~\ref{fig:preprocessing} (left) shows the impact of the delay limit on the number of shortcuts computed by Delay-ULTRA.
For a scenario without any delays, we also report how many shortcuts are infeasible and how many remain after removing both infeasible and irrelevant shortcuts.
Subsequent experiments (see Table~\ref{tbl:update-full}) show that these numbers are almost identical in scenarios with delays.
The growth in the number of total and infeasible shortcuts is roughly quadratic.
The number of filtered shortcuts, which influences the speed of the query algorithm, grows more slowly.
However, the size of the unfiltered shortcut set still affects the running time of the update phase, which also grows quadratically as shown in Figure~\ref{fig:preprocessing} (right).
The advanced update phase takes 3--4 times longer than the basic one.
This is not primarily due to the replacement search itself, but due to building the additional data structures required by it and merging the replacement shortcuts.

\begin{table*}[t]
	\center
	\caption{Results for the Delay-ULTRA shortcut computation. Reported are the full set of computed shortcuts and the number of infeasible and filtered shortcuts in an undelayed scenario.}%
	\label{tbl:preprocessing-full}%
	\begin{tabular*}{\textwidth}{@{\,}l@{\extracolsep{\fill}}r@{\extracolsep{\fill}}r@{\extracolsep{\fill}}r@{\extracolsep{\fill}}r@{\extracolsep{\fill}}r@{\,}}
		\toprule
		\multirow{2}{*}{Network} & $\maxDelay$ & Time & \multicolumn{3}{c}{\#\,Shortcuts}\\
		\cmidrule{4-6}
		& [s] & [hh:mm:ss] & Full & Infeasible & Filtered\\
		\midrule
		\multirow{4}{*}{London}
		&   0 & 00:03:44 &      8\,576\,120 &             0 (\phantom{0}0.00\%) &   8\,576\,120 (100.00\%)\\
		& 120 & 00:10:26 &     86\,573\,982 &  36\,435\,150 (42.09\%) &  39\,665\,271  (\phantom{1}45.82\%)\\
		& 180 & 00:10:41 &    176\,630\,627 &  93\,727\,958 (53.06\%) &  57\,941\,841  (\phantom{1}32.80\%)\\
		& 300 & 00:14:37 &    609\,565\,189 & 398\,706\,627 (65.41\%) & 114\,153\,168  (\phantom{1}18.73\%)\\[3pt]
		\multirow{3}{*}{Germany}
		&   0 & 02:53:57 &     77\,515\,291 &             0 (\phantom{0}0.00\%) &  77\,515\,291 (100.00\%)\\
		& 180 & 11:24:36 &    588\,027\,879 & 209\,113\,002 (35.56\%) & 327\,040\,211  (\phantom{1}55.62\%)\\
		& 300 & 11:51:22 & 1\,401\,861\,101 & 717\,161\,810 (51.16\%) & 533\,061\,808  (\phantom{1}38.03\%)\\[3pt]
		\multirow{3}{*}{Stuttgart}
		&   0 & 00:00:54 &      1\,973\,321 &            0 (\phantom{0}0.00\%) &  1\,973\,321 (100.00\%)\\
		& 180 & 00:02:35 &     13\,388\,174 &  5\,632\,875 (42.07\%) &  6\,576\,102  (\phantom{1}49.12\%)\\
		& 300 & 00:02:59 &     29\,236\,505 & 15\,106\,602 (51.67\%) & 10\,629\,297  (\phantom{1}36.36\%)\\[3pt]
		\multirow{3}{*}{Switzerland}
		&   0 & 00:01:57 &      6\,938\,012 &            0 (\phantom{0}0.00\%) &  6\,938\,012 (100.00\%)\\
		& 180 & 00:06:46 &     48\,529\,326 & 21\,420\,171 (44.14\%) & 22\,481\,726  (\phantom{1}46.33\%)\\
		& 300 & 00:07:56 &    119\,664\,041 & 67\,335\,101 (56.27\%) & 37\,552\,597  (\phantom{1}31.38\%)\\
		\bottomrule
	\end{tabular*}
\end{table*}
\begin{table*}[t]
	\center
	\caption{Results for the update phases. ``Adv.'' indicates whether the basic ($\circ$) or advanced update phase ($\bullet$) was performed. ``Updates'' is the average number of processed delay updates per phase. For the reported shortcuts, ``Query'' is the number of shortcuts given as input to the query algorithm, averaged across all performed update phases, while ``Added'' is the total number of replacement shortcuts found across all phases. Running times are averaged across the evaluated one-hour interval.\looseness=-1}%
	\label{tbl:update-full}%
	\begin{tabular*}{\textwidth}{@{\,}l@{\extracolsep{\fill}}r@{\extracolsep{\fill}}c@{\extracolsep{\fill}}r@{\extracolsep{\fill}}r@{\extracolsep{\fill}}r@{\extracolsep{\fill}}r@{\extracolsep{\fill}}r@{\extracolsep{\fill}}r@{\extracolsep{\fill}}r@{\,}}
		\toprule
		\multirow{2}{*}{Network} & $\maxDelay$ & \multirow{2}{*}{Adv.} & \multirow{2}{*}{Updates} & \multicolumn{2}{c}{Shortcuts} & \multicolumn{4}{c}{Time [s]}\\
		\cmidrule{5-6} \cmidrule{7-10}
		& [s] & & & Query & Added & Update & Search & Merge & Total \\
		\midrule
		\multirow{6}{*}{London} & 0 & $\circ$ & 6.0 & 8\,508\,656 & 0 & -- & -- & -- & 1.0 \\
		& 0 & $\bullet$ & 6.0 & 8\,586\,763 & 98\,149 & 2.0 & 0.5 & 0.6 & 3.1 \\
		& 120 & $\circ$ & 6.0 & 39\,625\,853 & 0 & -- & -- & -- & 3.6 \\
		& 120 & $\bullet$ & 9.8 & 39\,668\,596 & 54\,547 & 7.6 & 0.8 & 2.9 & 11.2 \\
		& 180 & $\circ$ & 6.0 & 58\,051\,426 & 0 & -- & -- & -- & 5.6 \\
		& 180 & $\bullet$ & 15.7 & 58\,082\,966 & 39\,714 & 12.1 & 0.9 & 5.1 & 18.1 \\[3pt]
		\multirow{4}{*}{Germany} & 0 & $\circ$ & 344.2 & 77\,286\,195 & 0 & -- & -- & -- & 10.8 \\
		& 0 & $\bullet$ & 469.3 & 77\,536\,055 & 314\,714 & 17.7 & 17.8 & 5.8 & 41.2 \\
		& 180 & $\circ$ & 344.2 & 326\,591\,619 & 0 & -- & -- & -- & 26.0 \\
		& 180 & $\bullet$ & 1\,290.6 & 326\,691\,285 & 131\,255 & 47.1 & 51.2 & 18.6 & 116.9 \\[3pt]
		\multirow{4}{*}{Stuttgart} & 0 & $\circ$ & 17.9 & 1\,968\,084 & 0 & -- & -- & -- & 0.2 \\
		& 0 & $\bullet$ & 17.9 & 1\,971\,959 & 6\,096 & 0.4 & 0.3 & 0.1 & 0.9 \\
		& 300 & $\circ$ & 17.9 & 10\,620\,259 & 0 & -- & -- & -- & 0.9 \\
		& 300 & $\bullet$ & 18.8 & 10\,621\,317 & 1\,945 & 2.1 & 0.5 & 0.7 & 3.3 \\[3pt]
		\multirow{4}{*}{Switzerland} & 0 & $\circ$ & 13.6 & 6\,920\,689 & 0 & -- & -- & -- & 0.8 \\
		& 0 & $\bullet$ & 22.1 & 6\,934\,598 & 19\,407 & 1.3 & 0.6 & 0.4 & 2.2 \\
		& 300 & $\circ$ & 25.7 & 37\,551\,717 & 0 & -- & -- & -- & 3.2 \\
		& 300 & $\bullet$ & 42.1 & 37\,555\,783 & 5\,856 & 6.7 & 1.0 & 2.9 & 10.7 \\
		\bottomrule
	\end{tabular*}
\end{table*}
Note that our implementation of TB is geared towards query performance rather than quick updates.
Hence, we believe that the running time of the update phase could be improved significantly.
To improve cache locality during a query, we assign consecutive IDs to all trips and stop events belonging to the same route.
The set~$\shortcutEdges$ of shortcut edges is stored as a graph in adjacency array representation, where the outgoing edges~$(\aVertex,\bVertex)$ of a stop event~$\aVertex$ are sorted according to the ID of~$\bVertex$.
After regrouping the set of routes in order to avoid overtaking, the update phase must reorder the trips and stop events to ensure that their IDs are still consecutive.
Afterwards, $\shortcutEdges$ must be reordered as well.
Our implementations of these reordering steps are not fully optimized.
Furthermore, they could be omitted entirely by using an implementation of TB that does not rely on consecutive IDs.

Figure~\ref{fig:query} (left) reports the journey error rate.
Original ULTRA-TB with basic updates is already fairly delay-robust: slightly above~0.2\% of optimal journeys are missed, compared to~1.1\% without updates.
The replacement search reduces this to 0.1\%.
Delay-ULTRA shortcuts offer drastic improvements for low delay limits, but these are eventually offset by the longer update time.
With advanced updates, the real and hypothetical error rate diverge significantly: the latter reaches near-zero while the former stagnates and then increases again.
This shows that a faster implementation of the update phase would significantly improve the result quality.
The smallest errors are achieved with a limit of~10\,min for basic updates and~5\,min for advanced updates.
As shown in Figure~\ref{fig:query} (right), the latter configuration is preferable overall since it produces fewer shortcuts and therefore yields lower query times.
Overall, query times exhibit moderate growth with increasing delay limit: compared to the original ULTRA-TB, they roughly double for a delay limit of~$5$\,min and triple for~$10$\,min.
Compared to MR, the speedup is reduced from 7.5 to 4.4 and 2.6, respectively.

\subparagraph*{Shortcut Computation.}
Results for the shortcut precomputation on all networks are reported in Table~\ref{tbl:preprocessing-full}.
Compared to the original ULTRA, the preprocessing time is~3--4 times longer.
For Stuttgart and Germany, the growth in the number of shortcuts depending on the delay limit is similar to what we observed for Switzerland.
By contrast, London exhibits a much faster growth but also a significantly higher share of infeasible shortcuts.
Again, this is explained by the fact that transfers in the London network have much lower slacks on average and therefore become infeasible much faster.

\subparagraph*{Update Phase.}
\begin{table*}[t]
	\center
	\caption{Query result quality, averaged over the test queries from Section~\ref{sec:delay-impact}. ``Adv.'' indicates whether the basic ($\circ$) or advanced update phase ($\bullet$) was performed. We report the real and hypothetical error rates for queries and journeys, as well as the median detours of suboptimal journeys compared to their optimal counterparts with the same number of trips. For the real update phase, we also report the number of returned journeys that are infeasible and the number of queries with at least one infeasible journey.}%
	\label{tbl:coverage-full}%
	\begin{tabular*}{\textwidth}{@{\,}l@{\extracolsep{\fill}}r@{\extracolsep{\fill}}c@{\extracolsep{\fill}}r@{\extracolsep{\fill}}r@{\extracolsep{\fill}}r@{\extracolsep{\fill}}r@{\extracolsep{\fill}}r@{\extracolsep{\fill}}r@{\extracolsep{\fill}}r@{\extracolsep{\fill}}r@{\,}}
		\toprule
		\multirow{2}{*}{Network} & $\maxDelay$ & \multirow{2}{*}{Adv.} & \multicolumn{3}{c}{Error rate (real)} & \multicolumn{3}{c}{Error rate (hypo.)} & \multicolumn{2}{c}{Infeasible (real)} \\
		\cmidrule{4-6} \cmidrule{7-9} \cmidrule{10-11}
		& [s] & & Queries & Journeys & Detour & Queries & Journeys & Detour & Queries & Journeys \\
		\midrule
		\multirow{6}{*}{London} & 0 & $\circ$ & 7.54\% & 2.30\% & 6.21\% & 7.54\% & 2.30\% & 6.21\% & 0.00\% & 0.00\% \\
		& 0 & $\bullet$ & 0.88\% & 0.26\% & 6.53\% & 0.32\% & 0.09\% & 6.80\% & 0.00\% & 0.00\% \\
		& 120 & $\circ$ & 4.00\% & 1.18\% & 5.81\% & 3.98\% & 1.18\% & 5.85\% & 0.02\% & 0.00\% \\
		& 120 & $\bullet$ & 0.50\% & 0.16\% & 6.45\% & 0.14\% & 0.05\% & 5.94\% & 0.14\% & 0.04\% \\
		& 180 & $\circ$ & 3.03\% & 0.85\% & 5.90\% & 3.01\% & 0.85\% & 5.93\% & 0.02\% & 0.00\% \\
		& 180 & $\bullet$ & 0.65\% & 0.20\% & 7.43\% & 0.07\% & 0.02\% & 5.95\% & 0.22\% & 0.06\% \\[3pt]
		\multirow{4}{*}{Germany} & 0 & $\circ$ & 5.10\% & 1.51\% & 8.13\% & 5.10\% & 1.51\% & 8.13\% & 0.00\% & 0.00\% \\
		& 0 & $\bullet$ & 3.11\% & 0.90\% & 7.56\% & 2.28\% & 0.64\% & 7.27\% & 0.09\% & 0.03\% \\
		& 180 & $\circ$ & 2.31\% & 0.67\% & 8.90\% & 2.25\% & 0.65\% & 8.87\% & 0.04\% & 0.01\% \\
		& 180 & $\bullet$ & 1.20\% & 0.37\% & 11.31\% & 0.43\% & 0.13\% & 8.50\% & 0.25\% & 0.08\% \\[3pt]
		\multirow{4}{*}{Stuttgart} & 0 & $\circ$ & 0.27\% & 0.09\% & 10.40\% & 0.27\% & 0.09\% & 10.40\% & 0.00\% & 0.00\% \\
		& 0 & $\bullet$ & 0.15\% & 0.05\% & 8.33\% & 0.13\% & 0.04\% & 7.62\% & 0.00\% & 0.00\% \\
		& 300 & $\circ$ & 0.09\% & 0.03\% & 13.85\% & 0.09\% & 0.03\% & 13.85\% & 0.00\% & 0.00\% \\
		& 300 & $\bullet$ & 0.04\% & 0.01\% & 14.66\% & 0.01\% & 0.00\% & 10.64\% & 0.00\% & 0.00\% \\[3pt]
		\multirow{4}{*}{Switzerland} & 0 & $\circ$ & 0.85\% & 0.22\% & 10.31\% & 0.85\% & 0.22\% & 10.31\% & 0.00\% & 0.00\% \\
		& 0 & $\bullet$ & 0.35\% & 0.10\% & 10.81\% & 0.27\% & 0.08\% & 10.45\% & 0.00\% & 0.00\% \\
		& 300 & $\circ$ & 0.34\% & 0.09\% & 10.70\% & 0.33\% & 0.08\% & 10.65\% & 0.00\% & 0.00\% \\
		& 300 & $\bullet$ & 0.15\% & 0.05\% & 11.24\% & 0.05\% & 0.01\% & 8.66\% & 0.02\% & 0.01\% \\
		\bottomrule
	\end{tabular*}
\end{table*}
Table~\ref{tbl:update-full} shows statistics for the update phase.
We observe that the number of replacement shortcuts decreases as the delay limit increases, even though there are more infeasible shortcuts that need to be replaced.
This indicates that most replacements are already included in the precomputed shortcut set.
Furthermore, the number of replacement shortcuts is negligible compared to the number of filtered Delay-ULTRA shortcuts.
Hence, even when running the update phase for an entire day, the replacement shortcuts will not significantly impact the query time.
Even in the most expensive configuration, the update phase takes at most a few seconds on the three smaller networks.
This is different for Germany due to the sheer size of the network.
Here, the time required for the replacement search starts to become a significant factor.

\subparagraph*{Error Rate.}
Table~\ref{tbl:coverage-full} reports error rates for all networks.
ULTRA-TB with basic updates already answers~50--75\% of delay-affected queries correctly.
Delay-ULTRA with advanced updates reduces the error rate by a factor of 4--15, while a hypothetical instant update phase would reduce it by more than a factor of~10 on all networks.
The median detours of suboptimal journeys compared to their optimal counterparts are not significantly influenced by the delay limit or the type of update phase; they are between~5 and~15\% depending on the network.
The replacement search is particularly effective for London, where it resolves almost~90\% of incorrectly answered queries on its own.
This effect narrows the gap in the error rate compared to the other networks, which have much fewer delay-affected queries to begin with.
Still, London and Germany exhibit higher error rates than the other networks.
We observe a tradeoff between the overall error rate and the number of infeasible journeys:
A higher delay limit reduces the error rate overall, but updates take longer to incorporate, which increases the risk of returning journeys that are already known to be infeasible at execution time.\looseness=-1
	
\subparagraph*{Query Performance.}
\begin{table*}[t]
    \center
    \caption{Running times of MR and (Delay-)ULTRA-TB for 10\,000 random queries. Also reported are the mean error rates of Delay-ULTRA-TB for these queries, with basic and advanced updates.}%
    \label{tbl:performance-full}%
    \begin{tabular*}{\textwidth}{@{\,}l@{\extracolsep{\fill}}r@{\extracolsep{\fill}}r@{\extracolsep{\fill}}r@{\extracolsep{\fill}}r@{\extracolsep{\fill}}r@{\extracolsep{\fill}}r@{\extracolsep{\fill}}r@{\,}}
        \toprule
        \multirow{2}{*}{Network} & $\maxDelay$ & \multicolumn{2}{c}{Query times [ms]} & \multicolumn{2}{c}{Error rate (basic)} & \multicolumn{2}{c}{Error rate (advanced)} \\
        \cmidrule{3-4} \cmidrule{5-6} \cmidrule{7-8}
        & [s] & MR & TB & Queries & Journeys & Queries & Journeys \\
        \midrule
        \multirow{3}{*}{London} & 0 & 18.4 & 4.0 & 1.94\% & 0.56\% & 0.28\% & 0.06\% \\
        & 120 & 18.4 & 8.0 & 1.21\% & 0.35\% & 0.11\% & 0.02\%\\
        & 180 & 18.4 & 9.8 & 0.71\% & 0.21\% & 0.05\% & 0.01\% \\[3pt]
        \multirow{2}{*}{Germany} & 0 & 920.4 & 78.8 & 3.57\% & 1.02\% & 2.14\% & 0.60\% \\
        & 180 & 920.4 & 113.2 & 1.34\% & 0.39\% & 0.57\% & 0.16\% \\[3pt]
        \multirow{2}{*}{Stuttgart} & 0 & 36.8 & 3.8 & 0.08\% & 0.02\% & 0.02\% & 0.00\% \\
        & 300 & 36.8 & 6.1 & 0.02\% & 0.00\% & 0.00\% & 0.00\% \\[3pt]
        \multirow{2}{*}{Switzerland} & 0 & 36.0 & 4.8 & 0.37\% & 0.11\% & 0.14\% & 0.03\% \\
        & 300 & 36.0 & 8.2 & 0.08\% & 0.02\% & 0.01\% & 0.00\% \\
        \bottomrule
    \end{tabular*}
\end{table*}
Finally, Table~\ref{tbl:performance-full} reports query times.
The speedup of Delay-ULTRA-TB over MR ranges from~2.3 for London to~8.0 for Germany.
Compared to ULTRA-TB without delay information, this corresponds to a slowdown of~1.4--2.0.
For this different set of queries, whose execution time is up to an hour before the departure time, the error rates are significantly lower.
The effect is particularly drastic for London, where trips and journeys are shorter on average and the impact of an individual delay therefore dissipates more quickly.
As a result, the error rate for~$\maxDelay=3$\,min is lower here than for~$\maxDelay=2$\,min, even though the opposite was the case in Table~\ref{tbl:coverage-full}.
This shows that incorporating a delay update too late mostly affects queries that depart right away, whereas the replacement shortcuts provide a more long-term benefit.
Still, $\maxDelay=2$\,min is the preferable configuration overall since the query time is lower.
With a moderate delay limit and advanced updates, the journey error rate is minuscule on all networks except Germany, where it is still very low at 0.16\%.
Overall, this shows that Delay-ULTRA offers a significant speedup over MR at near-perfect solution quality.

\section{Conclusion}
\label{sec:conclusion}
We adapted ULTRA to retain its performance benefits in the presence of delays.
Small delays are handled in the preprocessing phase and larger ones heuristically in the update phase.
Together, both steps reduce the error rate for random queries by up to a factor of~30 compared to an algorithm without any delay information.
Hence, Delay-ULTRA offers near-perfect solution quality while achieving a speedup of~2--8 over the fastest delay-robust competitor.
In the future, we would like to apply the replacement search to trip cancellations as well.
Further insights could also be gained from evaluating our algorithms on real or more sophisticated synthetic delay scenarios.

\bibliography{references}
\end{document}